\newcommand{\be}{\begin{equation}}
\newcommand{\ee}{\end{equation}}
\newcommand{\bea}{\begin{eqnarray}}
\newcommand{\eea}{\end{eqnarray}}
\newcommand{\beas}{\begin{eqnarray*}}
\newcommand{\eeas}{\end{eqnarray*}}
\theoremstyle{plain}
\newtheorem{thm}{Theorem}
\newtheorem{lem}[thm]{Lemma}
\newtheorem{rem}[thm]{Remark}
\newtheorem{cor}[thm]{Corollary}
\newtheorem{prop}[thm]{Proposition}
\theoremstyle{definition}
\newtheorem{defn}[thm]{Definition}
\newtheorem{rmk}[thm]{Remark}
\newtheorem{rmks}[thm]{Remarks}
\newtheorem{ex}[thm]{Example}
\numberwithin{thm}{section}
\numberwithin{equation}{section}
\newcommand{\ve}{\varepsilon}
\newcommand{\eq}[2]{\begin{equation}\label{#1}#2 \end{equation}}
\newcommand{\ml}[2]{\begin{multline}\label{#1}#2 \end{multline}}
\newcommand{\ga}[2]{\begin{gather}\label{#1}#2 \end{gather}}
\newcommand{\surj}{\twoheadrightarrow}
\newcommand{\inj}{\hookrightarrow}
\newcommand{\Spec}{{\rm Spec \,}}
\newcommand{\sA}{{\mathcal A}}
\newcommand{\sC}{{\mathcal C}}
\newcommand{\sF}{{\mathcal F}}
\newcommand{\sH}{{\mathcal H}}
\newcommand{\sJ}{{\mathcal J}}
\newcommand{\sK}{{\mathcal K}}
\newcommand{\sL}{{\mathcal L}}
\newcommand{\sO}{{\mathcal O}}
\newcommand{\sP}{{\mathcal P}}
\newcommand{\sR}{{\mathcal R}}
\newcommand{\sT}{{\mathcal T}}
\newcommand{\sX}{{\mathcal X}}
\newcommand{\A}{{\mathbb A}}
\newcommand{\C}{{\mathbb C}}
\newcommand{\G}{{\mathbb G}}
\renewcommand{\P}{{\mathbb P}}
\newcommand{\Q}{{\mathbb Q}}
\newcommand{\R}{{\mathbb R}}
\newcommand{\Z}{{\mathbb Z}}
\def\One{\mathbb{I}}
\def\overl{\;\raisebox{-10mm}{\epsfxsize=30mm\epsfbox{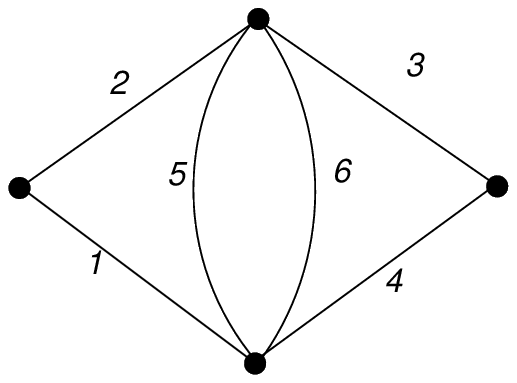}}\;}
\def\dunce{\;\raisebox{-16mm}{\epsfxsize=66mm\epsfbox{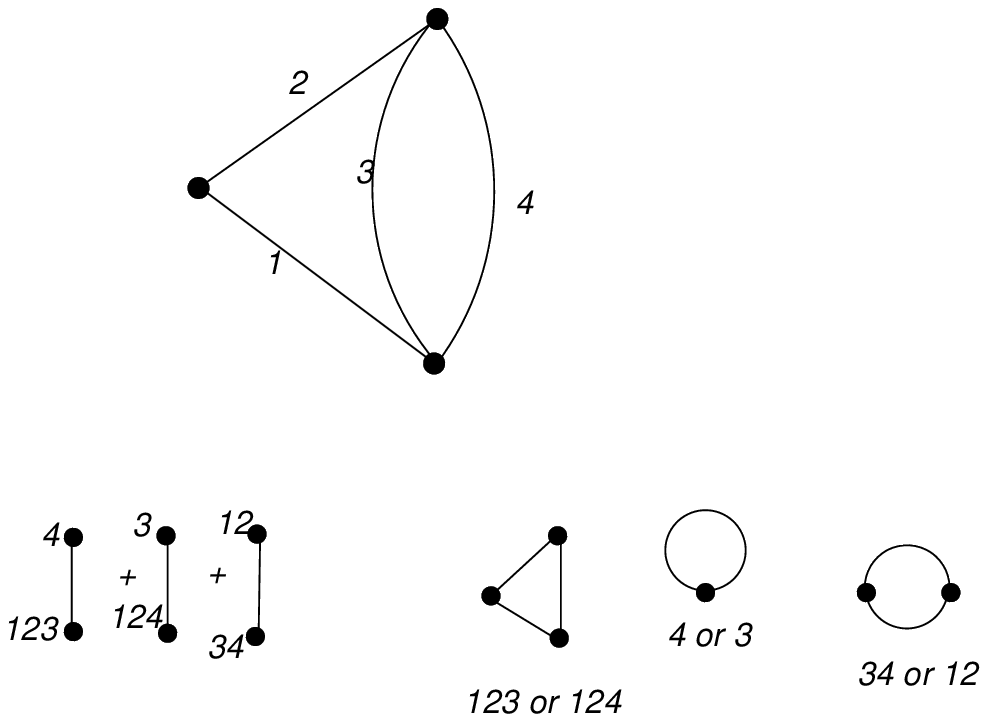}}\;}
\def\coreco{\;\raisebox{-16mm}{\epsfxsize=66mm\epsfbox{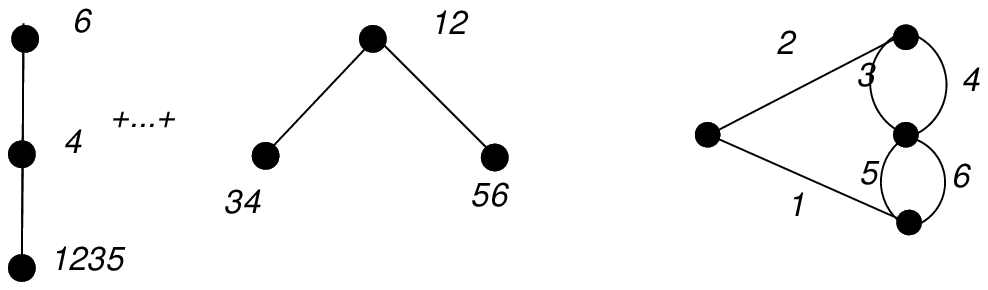}}\;}
\def\reconstr{\;\raisebox{-24mm}{\epsfxsize=100mm\epsfbox{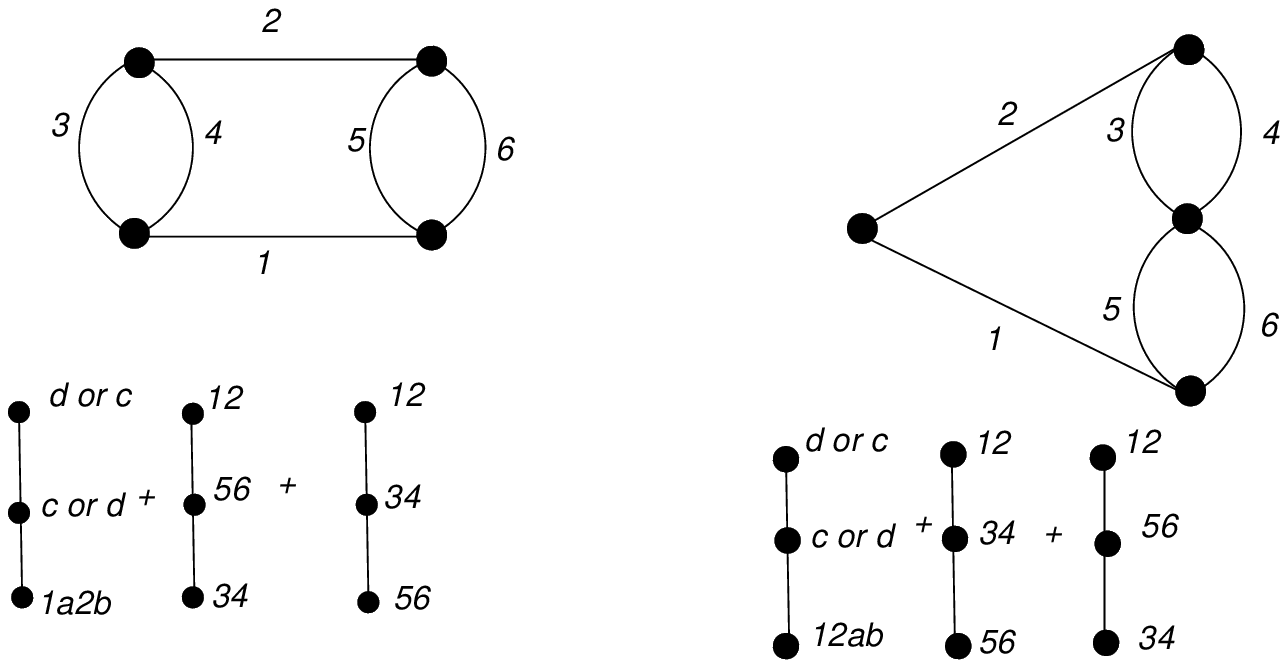}}\;}
\def\selfu{\;\raisebox{-20mm}{\epsfxsize=80mm\epsfbox{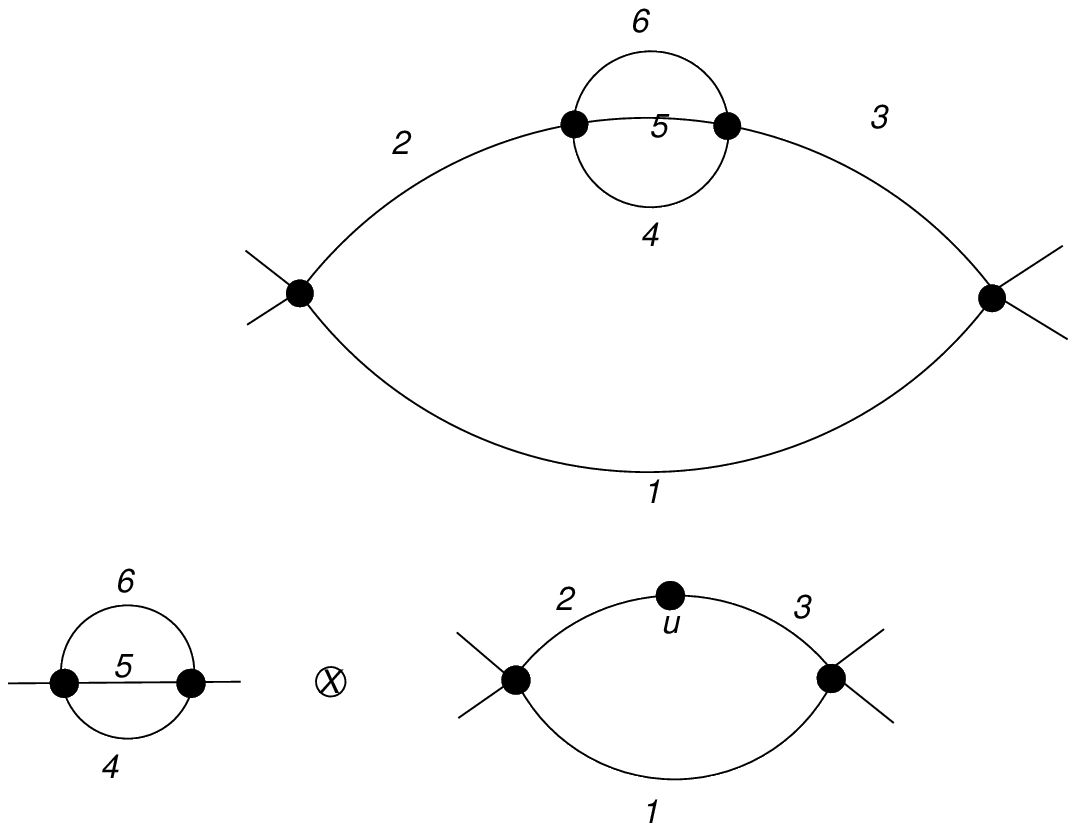}}\;}
\def\wthree{\;\raisebox{-8mm}{\epsfxsize=30mm\epsfbox{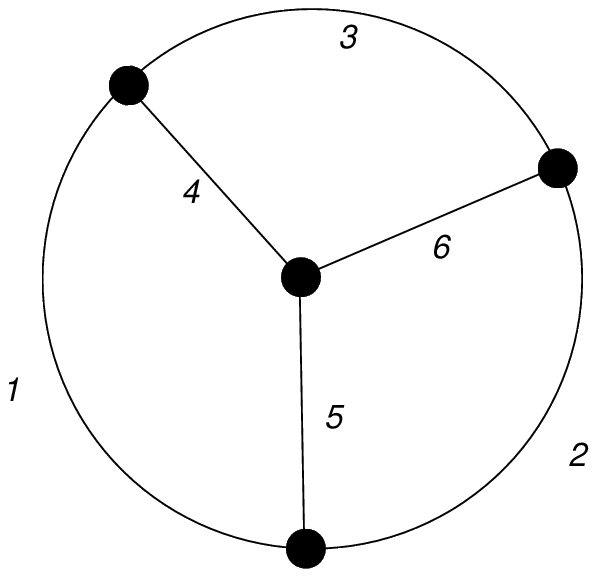}}\;}
\def\wsix{\;\raisebox{-4mm}{\epsfysize=12mm\epsfbox{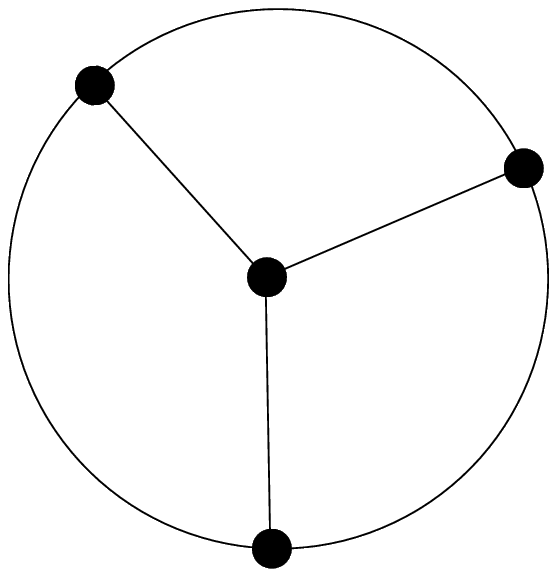}}\;}
\def\wthr{\;\raisebox{-4mm}{\epsfysize=12mm\epsfbox{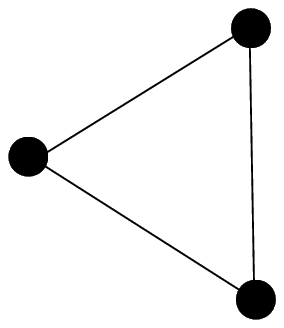}}\;}
\def\wtwo{\;\raisebox{-4mm}{\epsfysize=12mm\epsfbox{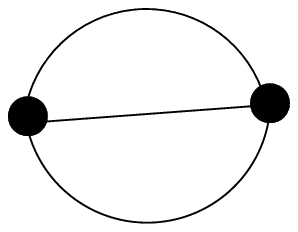}}\;}
\def\wtadtad{\;\raisebox{-4mm}{\epsfysize=12mm\epsfbox{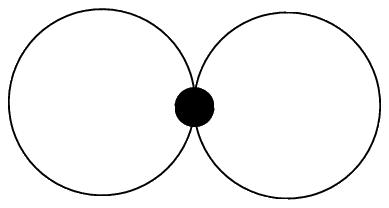}}\;}
\def\wfour{\;\raisebox{-4mm}{\epsfysize=12mm\epsfbox{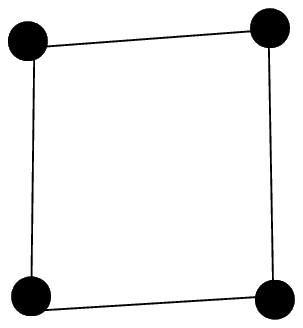}}\;}
\def\wfive{\;\raisebox{-4mm}{\epsfysize=12mm\epsfbox{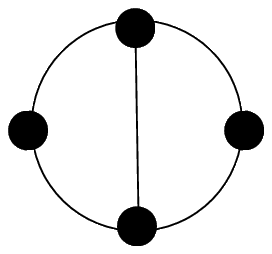}}\;}
\def\wtad{\;\raisebox{-4mm}{\epsfysize=12mm\epsfbox{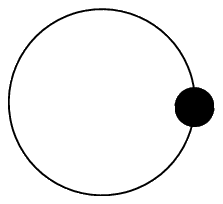}}\;}
\def\abcdef{\;\raisebox{-4mm}{\epsfysize=12mm\epsfbox{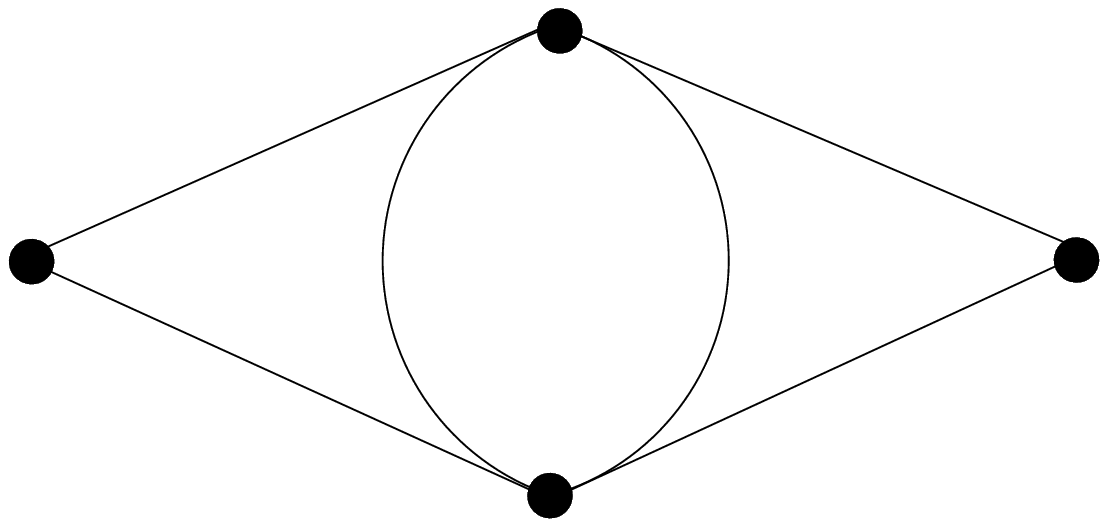}}\;}
\def\ef{\;\raisebox{-4mm}{\epsfysize=12mm\epsfbox{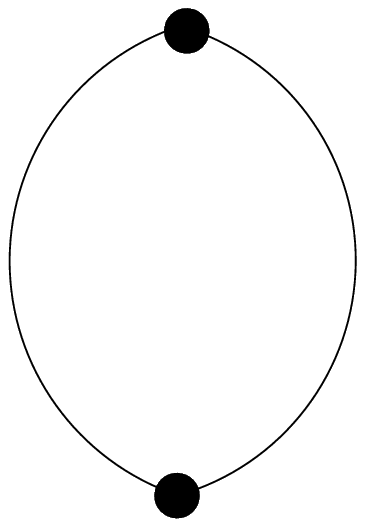}}\;}
\def\abcd{\;\raisebox{-4mm}{\epsfysize=12mm\epsfbox{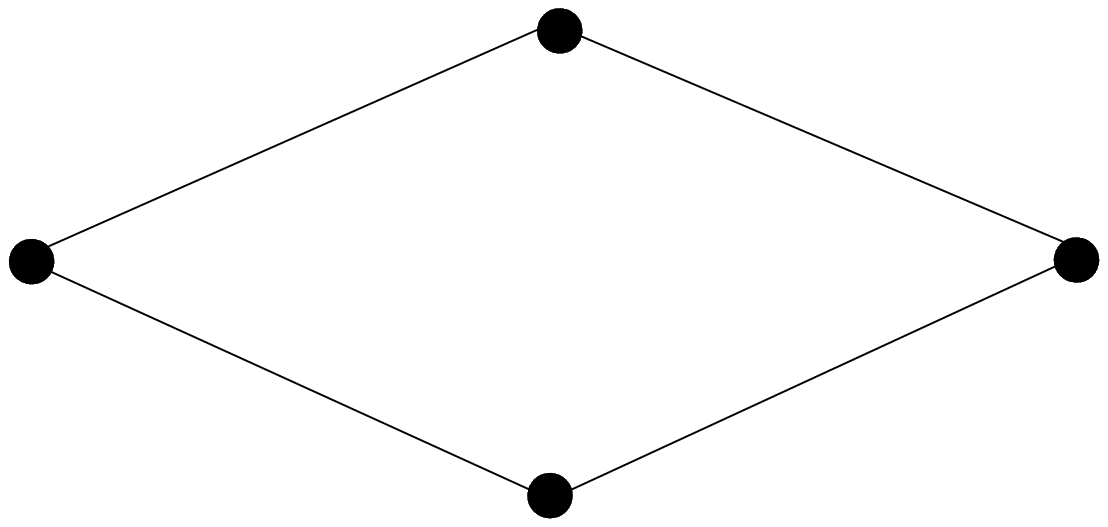}}\;}
\def\abe{\;\raisebox{-4mm}{\epsfysize=12mm\epsfbox{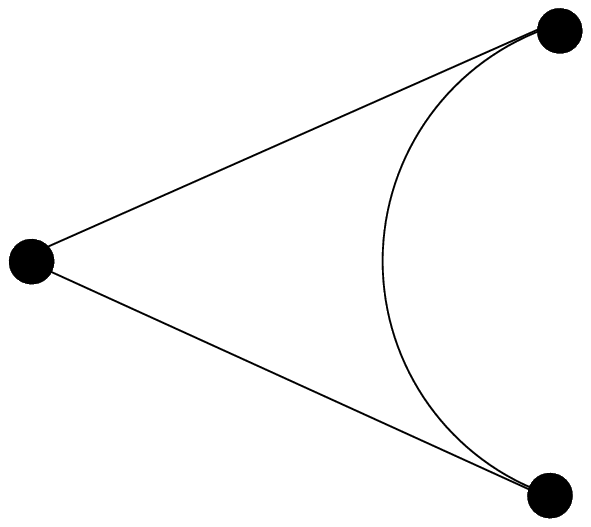}}\;}
\def\abf{\;\raisebox{-4mm}{\epsfysize=12mm\epsfbox{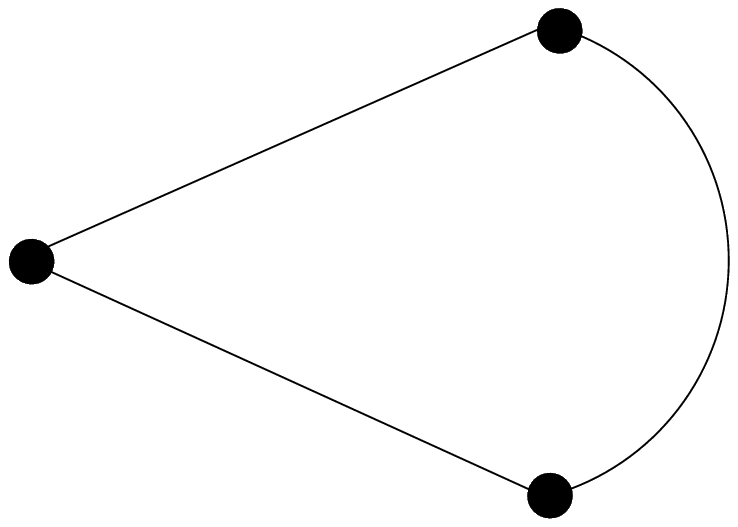}}\;}
\def\cdf{\;\raisebox{-4mm}{\epsfysize=12mm\epsfbox{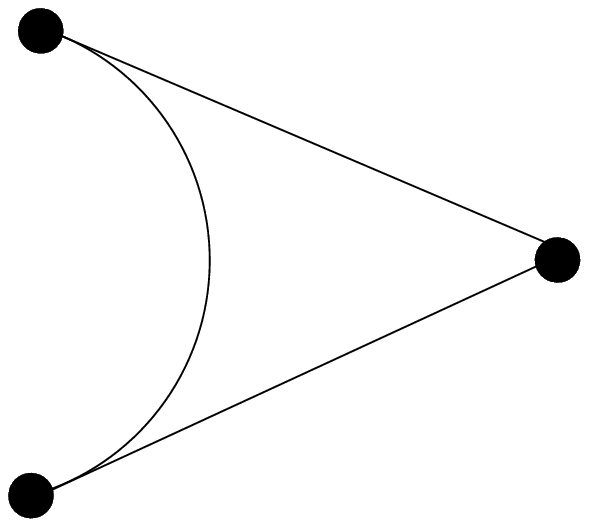}}\;}
\def\cde{\;\raisebox{-4mm}{\epsfysize=12mm\epsfbox{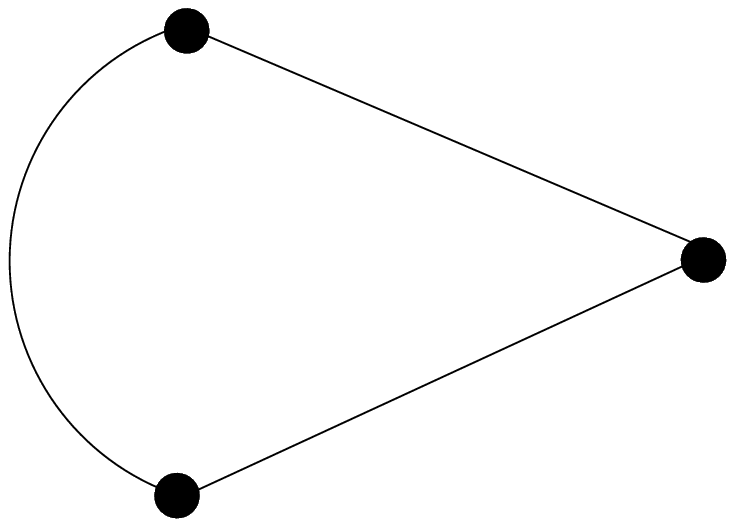}}\;}
\def\abef{\;\raisebox{-4mm}{\epsfysize=12mm\epsfbox{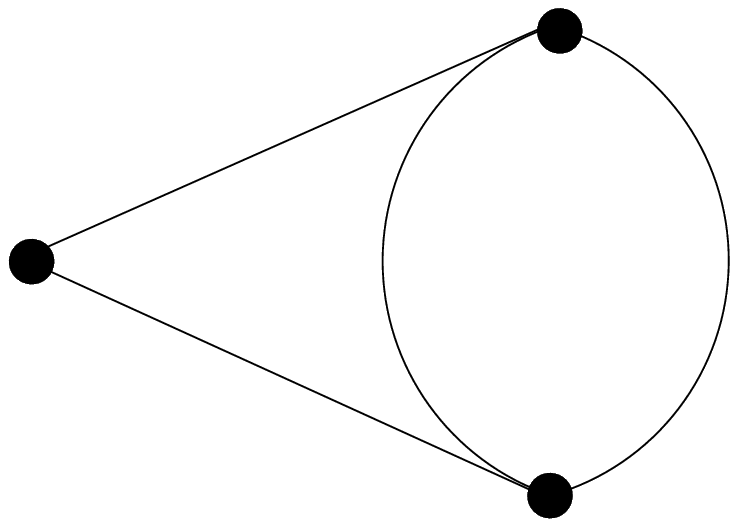}}\;}
\def\cdef{\;\raisebox{-4mm}{\epsfysize=12mm\epsfbox{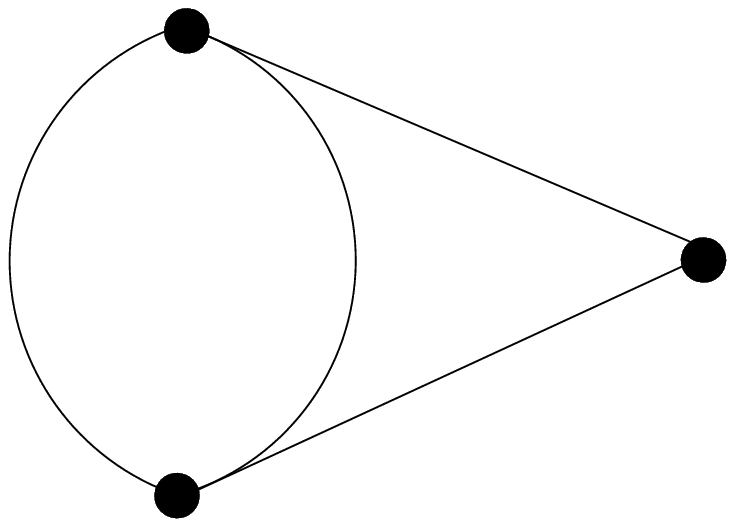}}\;}
\def\abcde{\;\raisebox{-4mm}{\epsfysize=12mm\epsfbox{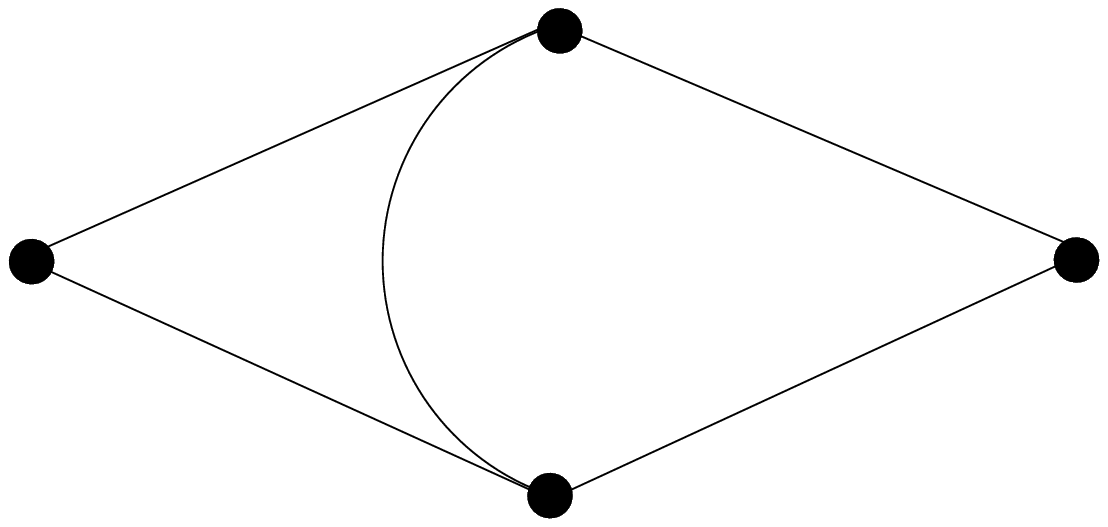}}\;}
\def\abcdf{\;\raisebox{-4mm}{\epsfysize=12mm\epsfbox{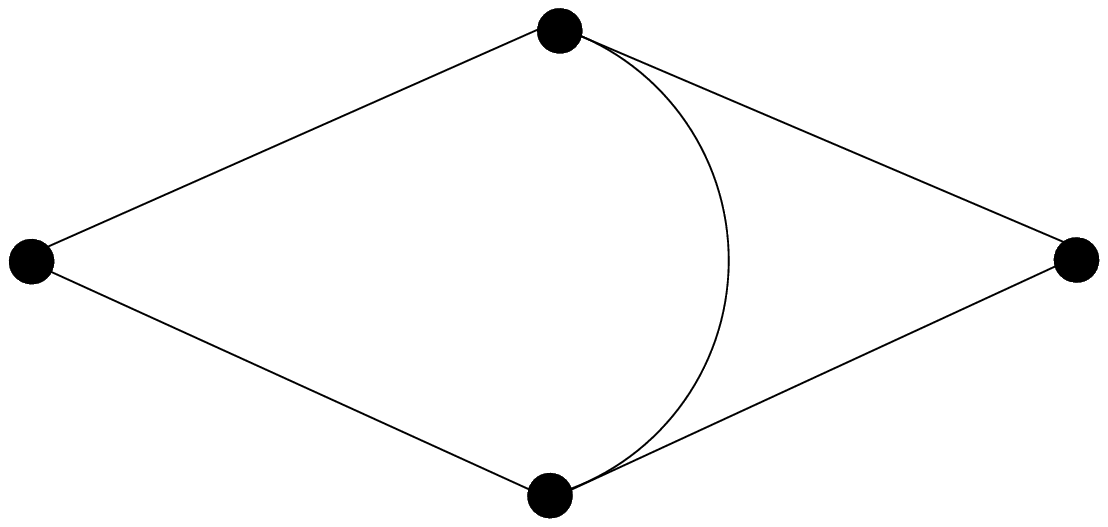}}\;}
\def\grapha{\;\raisebox{-3mm}{\epsfysize=8mm\epsfbox{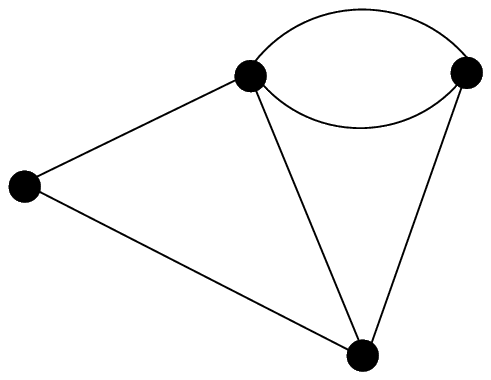}}\;}
\def\graphb{\;\raisebox{-3mm}{\epsfysize=8mm\epsfbox{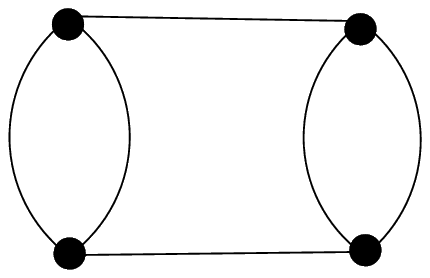}}\;}
\def\graphc{\;\raisebox{-3mm}{\epsfysize=8mm\epsfbox{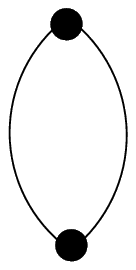}}\;}
\def\graphd{\;\raisebox{-3mm}{\epsfysize=8mm\epsfbox{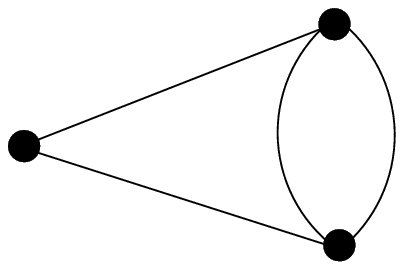}}\;}
\def\figa{\;\raisebox{-24mm}{\epsfxsize=80mm\epsfbox{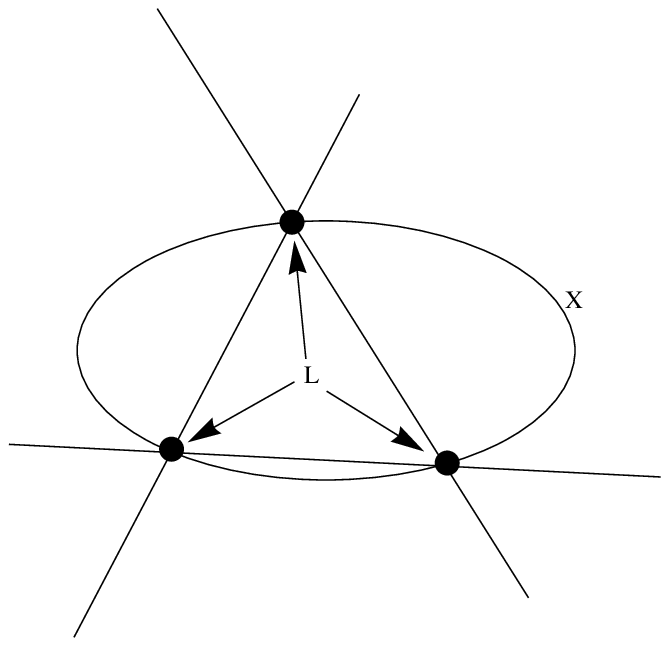}}\;}
\def\figb{\;\raisebox{-24mm}{\epsfxsize=80mm\epsfbox{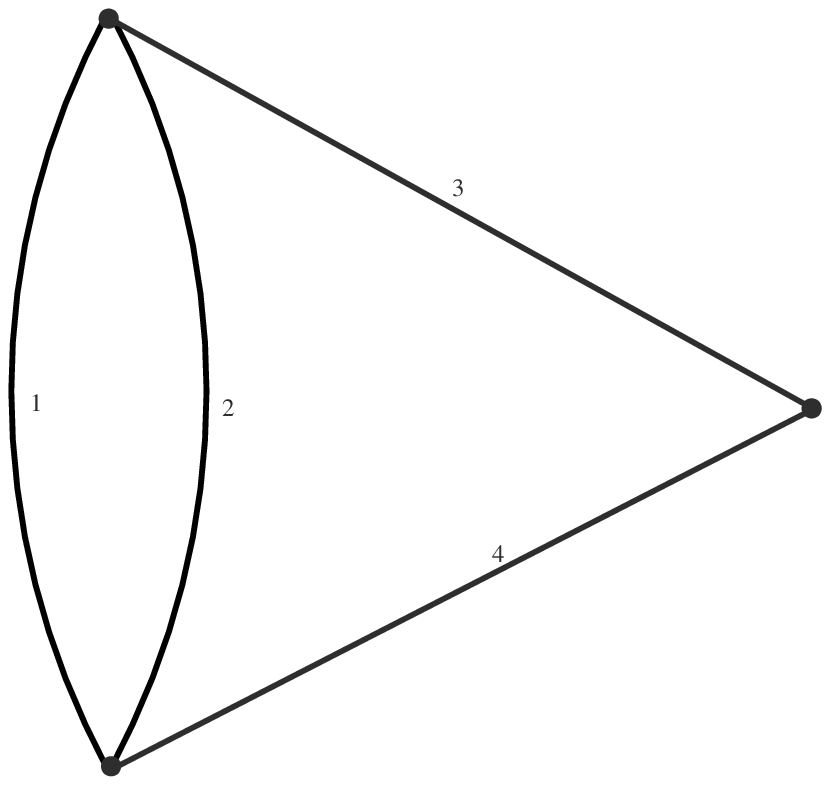}}\;}
\def\figc{\;\raisebox{-24mm}{\epsfxsize=80mm\epsfbox{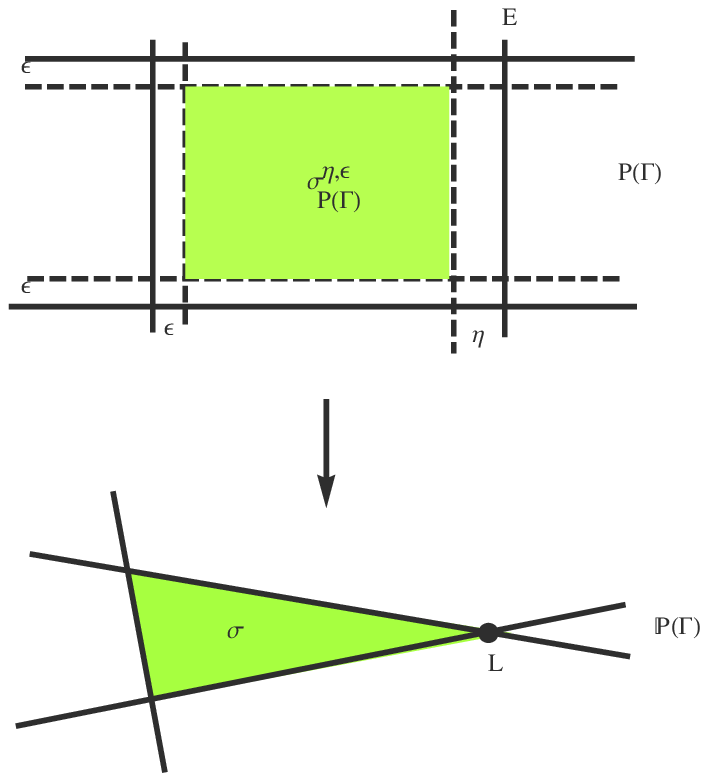}}\;}
\def\figd{\;\raisebox{-24mm}{\epsfxsize=80mm\epsfbox{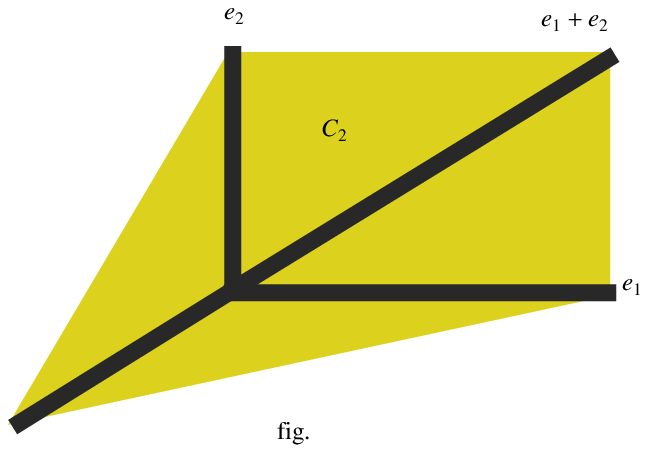}}\;}
\def\fige{\;\raisebox{-24mm}{\epsfxsize=80mm\epsfbox{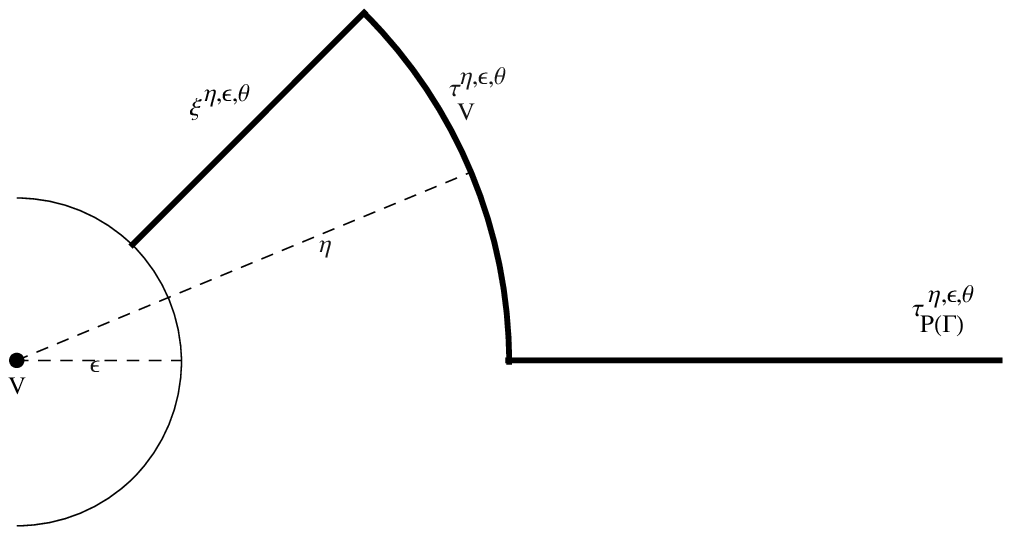}}\;}
\def\figf{\;\raisebox{-24mm}{\epsfxsize=80mm\epsfbox{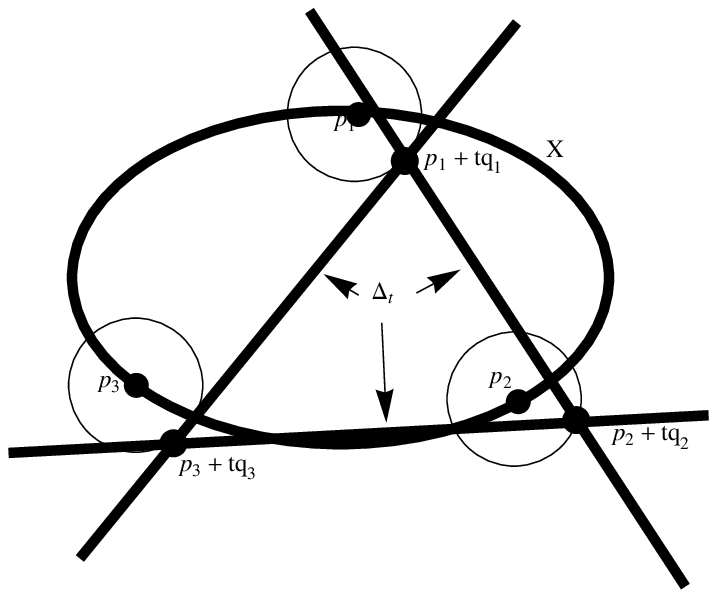}}\;}
\def\figg{\;\raisebox{-24mm}{\epsfxsize=80mm\epsfbox{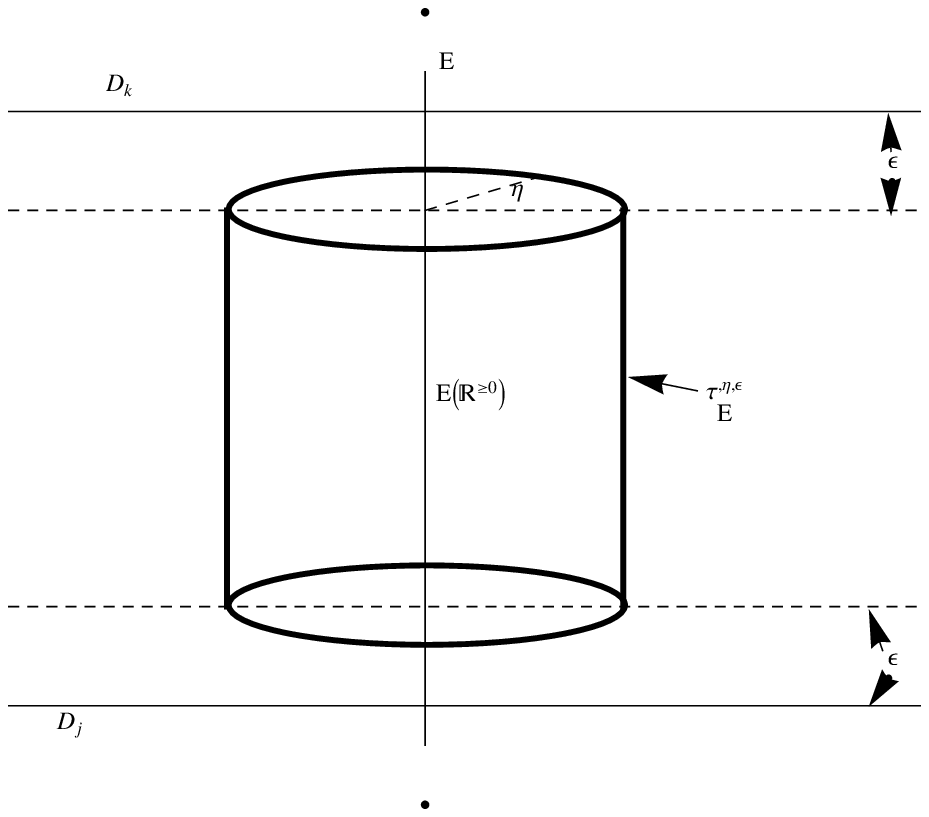}}\;}
\def\figh{\;\raisebox{-24mm}{\epsfxsize=80mm\epsfbox{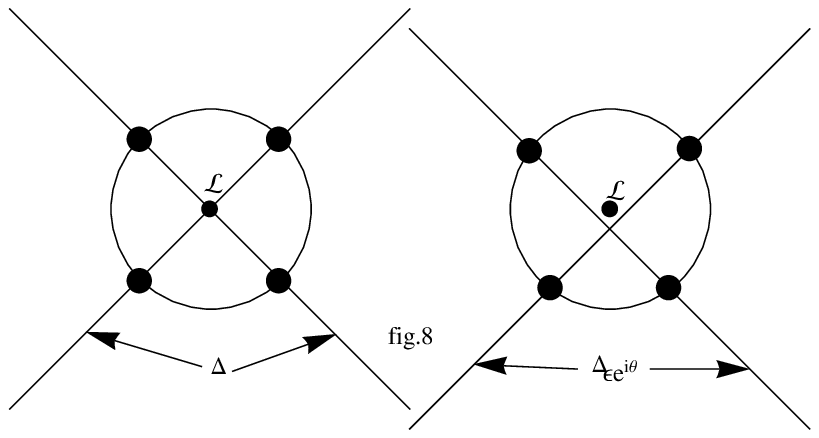}}\;}
\title{Mixed Hodge Structures and Renormalization in Physics}
\author{Spencer Bloch}\address{Dept. of Mathematics, University of Chicago, Chicago, IL 60637,
USA\\
E-mail address: bloch@math.uchicago.edu}
\author{Dirk Kreimer}
\address{CNRS-IHES, 91440 Bures sur Yvette, France and Center for Math.\ Phys.\, Boston U., Boston, MA 02215\\ E-mail address: kreimer@ihes.fr}
\begin{document}
\maketitle
\section{Introduction}
\subsection{} This paper is a collaboration between a mathematician and a physicist. It is based on the observation that renormalization of Feynman amplitudes in physics is closely related to the theory of limiting mixed Hodge structures in mathematics. Whereas classical physical renormalization methods involve manipulations with the integrand of a divergent integral, limiting Hodge theory involves moving the chain of integration so the integral becomes convergent and studying the {\it monodromy} as the chain varies.

Even methods like minimal subtraction in the context of dimensional or analytic regularization implicitly modify the integrand through the definition of a measure $\int d^Dk$ via analytic continuation. Still, as a regulator dimensional regularization is close to our approach in so far as it leaves the rational integrand assigned to a graph unchanged. Minimal subtraction as a renormalization scheme differs though from the renormalization schemes which we consider -momentum subtractions essentially- by a finite renormalization. Many of the nice algebro-geometric structures developed below are not transparent in that scheme.

The advantages of the limiting Hodge method are firstly that it is linked to a very central and powerful program in mathematics: the study of Hodge structures and their variations. As a consequence, one gains a number of tools, like weight, Hodge, and monodromy filtrations to study and classify the Feynman amplitudes. Secondly, the method depends on the integration chain, and hence on the graph, but it is in some sense independent of the integrand. For this reason it should adapt naturally e.g.\ to gauge theories where the numerator of the integrand is complicated.

An important point is to analyse the nature of the poles. Limiting mixed Hodge structures demand that the divergent subintegrals have at worst log poles. This does not imply that we can not apply our approach to perturbative amplitudes which have worse than logarithmic degree of divergence. It only means that we have to correctly isolate the polynomials in masses and external momenta which accompany those divergences such that the corresponding
integrands have singularities provided by log-poles. This is essentially automatic from the notion of a residue available by our very methods.
As a very pleasant byproduct, we learn that physical renormalization schemes -on-shell subtractions, momentum subtractions, Weinberg's scheme,- belong to a class of schemes for which this is indeed automatic.

Moreover, for technical reasons, it is convenient to work with projective rather than affine integrals. One of the central physics results in this paper is that the renormalization problem can be reduced to the study of logarithmically divergent, projective integrals.
This is again familiar from analytic regulators. The fact that it can be achieved here by leaving the integrand completely intact will hopefully
some fine day allow to understand the nature of the periods assigned to renormalized values in quantum field theory.

   A remark for Mathematicians: our focus in this paper has been renormalization, which is a problem arising in physics. We suspect, however, that similar methods will apply more generally for example to period integrals whenever the domain of integration is contained in $\R^{+n}$ and the integrand is a rational function with polar locus defined by a polynomial with non-negative real coefficients. The toric methods and the monodromy computations should go through in that generality.
\subsection*{Acknowledgments} Both authors thank Francis Brown, H\'el\`ene Esnault and Karen Yeats for helpful discussions. This work was partially supported by NSF grants DMS-0603781 and DMS-0653004. S.B.\ thanks the IHES for hospitality January-March 2006 and January-March 2008. D.K.\ thanks Chicago University for hospitality in February 2007.

\subsection{Physics Introduction}
This paper studies the renormalization problem in the context of parametric representations, with an emphasis on algebro-geometric properties.
We will not study the nature of the periods one obtains from renormalizable quantum field theories in an even dimension of space-time. Instead, we
provide the combinatorics of renormalization such that a future motivic analysis of renormalized amplitudes is feasible along the lines of \cite{BEK}.
Our result will in particular put renormalization in the framework of a limiting mixed Hodge structure, which hopefully provides a good starting point for an analysis of the periods in renormalized amplitudes. That these amplitudes are provided by numbers which are periods (in the sense of \cite{KZ})
is an immediate consequence of the properties of parametric representations, and will also emerge naturally below (see Thm.(\ref{cructhm})).

The main result of this paper is a careful study of the singularities of the first Kirchhoff--Symanzik polynomial, which carries all the short-distance singularities of the theory. The study of this polynomial can proceed via an analysis with the help of projective integrals.
Along the way, we will also give useful formulas for parametric representations involving affine integrals, and clarify the role of the second
Kirchhoff--Symanzik polynomial for affine and projective integrals.

Our methods are general, but in concrete examples we restrict ourselves to $\phi^4_4$ theory. Parametric representations are used which result from free-field propagators for propagation in flat space-time. In such circumstances, the advantages of analytic regularizations are also available
in our study of parametric representations as we will see. In particular, our use of projective integrals below combines such advantages with
the possibility to discuss    renormalization on the level of the pairing between integration chains and de Rham classes.

In examples, special emphasis is given to the study of  particular renormalization schemes, the momentum scheme (MOM-scheme, Weinberg's scheme, on-shell subtractions).

Also,  we often consider Green functions as functions of a single kinematical scale $q^2>0$. Green functions are defined throughout as the scalar coefficient functions (structure functions) for the radiative corrections to tree-level amplitudes $r$. They are to be regarded as scalar quantities of the form $1+{\mathcal O}(\hbar)$. Renormalized amplitudes are then, in finite order in perturbation theory, polynomial corrections  in $L=\ln q^2/\mu^2$
($\mu^2>0$) without constant term, providing the quantum corrections to the tree-level amplitudes appearing as monomials in a renormalizable Lagrangian \cite{Tor}:
\be \phi_R(\Gamma)=\sum_{j=1}^{{\textrm{aug}}(\Gamma)}p_j(\Gamma)L^j.\ee
 Correspondingly, Green functions become triangular series in two variables
\be G^r(\alpha,L)=1+\sum_{j=1}^\infty \gamma_j^r(\alpha)L^j=1+\sum_{j=1}^\infty c_j^r(L)\alpha^j.\ee
The series $\gamma_j^r(\alpha)$ are related by the renormalization group which leaves only the $\gamma_1^r(\alpha)$ undetermined,
while the polynomials $c_j^r(L)$ are bounded in degree by $j$. The series $\gamma_1^r$ fulfill ordinary differential equations driven by the primitive
graphs of the theory \cite{KrY}.

The limiting Hodge structure $ A(\Gamma)$ which we consider for each Feynman graph $\Gamma$ provides contribution of a graph $\Gamma$ to the coefficients of $\gamma_1^r$
in the limit. This limit is a period matrix (a column vector here) which has, from top to bottom, the periods provided by a renormalized graph $\Gamma$ as entries. The first entry is the contribution to $\gamma_1^r$ of a graph with $\textrm{res}(\Gamma)=r$ and the $k$-th is a rational multiple
of the contribution to $\gamma_k^r$. In section \ref{seclmhsvsren}
we determine the rational weights which connect these periods to the coefficients $p_j(\Gamma)$ attributed to the renormalization of a graph $\Gamma$.

We include a discussion of  the structure of renormalization which comes from an analysis of the second Kirchhoff--Symanzik polynomial. While this polynomial does not provide short-distance singularities in its own right, it leads to integrals of the form
\be \label{1.3a} \int \omega\ln(f)\ee
for a renormalized Feynman amplitude, with $\omega$ a de Rham class determined by the first Kirchhoff--Symanzik polynomial, and
$f$  -congruent to one along any remaining exceptional divisor-  determined by the second.  We do not actually do the monodromy calculation for integrals \eqref{1.3a} involving a logarithm, but it will be similar to the calculation for \eqref{0.2} which we do.
 A full discussion of the Hodge structure of a Green function seems feasible but will be postponed to future work.
\subsection{Math Introduction}\label{ssecmi}
Let $\P^{n-1}$ be the projective space of lines in $\C^n$ which we view as an algebraic variety with homogeneous coordinates $A_1,\dotsc,A_n$. Let $\psi(A_1,\dotsc,A_n)$ be a homogeneous polynomial of some degree $d$, and let $X \subset \P^{n-1}$ be the hypersurface defined by $\psi=0$. We assume the coefficients of $\psi$ are all real and $\ge 0$. Let $\sigma = \{[a_1,\dotsc,a_n]\ |\ a_i \ge 0, \forall i\}$ be the topological $(n-1)$-chain (simplex) in $\P^{n-1}$, where $[\ldots]$ refers to homogeneous coordinates. We will also use the notation $\sigma = \P^{n-1}(\R^{\ge 0})$. Our assumption about coefficients implies
\eq{0.1}{\sigma \cap X = \bigcup_{L\subset X} L(\R^{\ge 0}),
}
where $L$ runs through all coordinate    coordinate linear spaces $L:A_{i_1}=\cdots=A_{i_p}=0$ contained in $X$ (see (see Fig.\ref{figa})).
\begin{figure}[t]
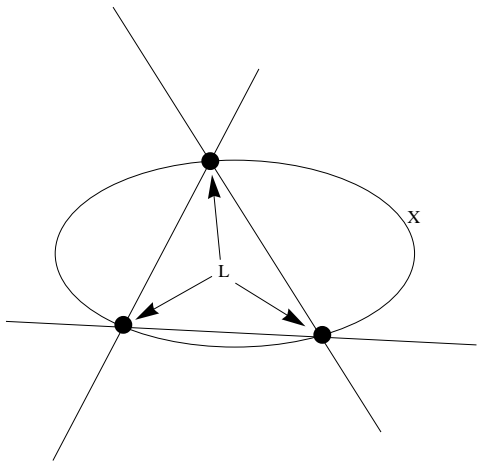
\centering
\figa\caption{Picture of $X$ and $L$}\label{figa}
\end{figure}
The genesis of the renormalization problem in physics is the need to assign values to integrals
\eq{0.2}{\int_\sigma \omega
}
where $\omega$ is an algebraic $(n-1)$-form on $\P^{n-1}$ with poles along $X$. The problem is an important one for physical applications, and there is an extensive literature (see, for example, \cite{ItzZ,Tod,Smirnov}) focusing on practical formulae to reinterpret \eqref{0.2} in some consistent way as a polynomial in $\log t$. (Here $t$ parametrizes a deformation of the integration chain. As a first approximation, one can think of $\int_t^\infty \omega$ when $\omega$ has a logarithmic pole at $t=0$.)

A similar problem arises in pure mathematics in the study of degenerating varieties, e.g. a family of elliptic curves degenerating to a rational curve with a node. In the classical setup, one is given a family $f:\sX \to D$, where $D$ is a disk with parameter $t$. The map $f$ is proper (so the fibres $X_t$ are compact). $\sX$ is assumed to be non-singular, as are the fibres $X_t,\ t\neq 0$. $X_0$ may be singular, though one commonly invokes resolution of singularities to assume $X_0\subset \sX$ is a normal crossing divisor.
Choose a basis $\sigma_{1,t},\dotsc,\sigma_{r,t}$ for the homology of the fibre $H_p(X_t,\Q)$ in some fixed degree $p$. By standard results in differential topology, the fibre space is locally topologically trivial over $D^* = D-\{0\}$, and we may choose the classes $\sigma_{i,t}$ to be locally constant. If we fix a smooth fibre $t_0 \neq 0$, the monodromy transformation $m: H_p(X_{t_0}) \to H_p(X_{t_0})$ is obtained by winding around $t=0$.
An important theorem (\cite{D}, III,2) says this transformation is quasi-unipotent, i.e. after possibly introducing a root $t' = t^{1/n}$ (which has the effect of replacing $m$ by $m^n$), $m-id$ is nilpotent. The matrix
\eq{0.3}{N:= \log m = -\Big[(id -m) + (id-m)^2/2 + \ldots\Big]
}
is thus also nilpotent. This is the mathematical equivalent of locality in physics. It insures that our renormalization of \eqref{0.2} will be a polynomial in $\log t$ rather than an infinite series.
We take a cohomology class $[\omega_t] \in H^p(X_t,\C)$ which varies algebraically. For example, in a family of elliptic curves $y^2=x(x-1)(x-t)$, the holomorphic $1$-form $\omega_t = dx/y$ is such a class. Note $\omega_t$ is single-valued over all of $D^*$. It is not locally constant. The expression
\eq{0.4}{\exp\Big(-(N\log t)/2\pi i\Big) \begin{pmatrix}\int_{\sigma_{1,t}}\omega_t \\  \vdots \\
\int_{\sigma_{r,t}}\omega_t \end{pmatrix}.
}
is then single-valued and analytic on $D^*$.
Suppose $\omega_t$ chosen such that the entries of the column vector in \eqref{0.4} grow at worst like powers of $|\log |t||$ as $|t| \to 0$. A standard result in complex analysis then implies that \eqref{0.4} is analytic at $t=0$. We can write this
\eq{0.5}{\begin{pmatrix}\int_{\sigma_{1,t}}\omega_t \\  \vdots \\
\int_{\sigma_{r,t}}\omega_t \end{pmatrix} \sim \exp\Big((N\log t)/2\pi i\Big)\begin{pmatrix}a_1 \\ \vdots \\a_r \end{pmatrix}.
}
Here the $a_j$ are constants which are periods of a {\it limiting Hodge structure}. The exponential on the right expands as a matrix whose entries are polynomials in $\log t$, and the equivalence relation $\sim$ means that the difference between the two sides is a column vector of (multi-valued) analytic functions vanishing at $t=0$.

We would like to apply this program to the integral \eqref{0.2}. Let $\Delta : \prod_1^n A_j = 0$ be the coordinate divisor in $\P^{n-1}$. Note that the chain $\sigma$ has boundary in $\Delta$, so as a first attempt to interpret \eqref{0.2} as a {\it period}, we might consider the pairing
\eq{0.6}{H^{n-1}(\P^{n-1}-X,\Delta-X\cap \Delta)\times H_{n-1}(\P^{n-1}-X,\Delta-X\cap \Delta) \to \C
}
The form $\omega$ is an algebraic $(n-1)$-form and it vanishes on $\Delta$ for    degree reasons, so it does give a class in the    relative cohomology group appearing in \eqref{0.6} (see the discussion \eqref{10.8}-\eqref{10.10}). On the other hand, the chain $\sigma$ meets $X$ \eqref{0.1}, so we do not get a class in homology. Instead we consider a family of coordinate divisors $\Delta_t: \prod_1^n A_{j,t}=0$ with $\Delta_0 = \Delta$. (For details, see section \ref{secmono}.) For $t=\ve>0$ there is a natural chain $\sigma_\ve$ which is what the physicists would call a cutoff. We have $\partial \sigma_\ve \subset \Delta_\ve$ and $\sigma_\ve \cap X = \emptyset$, so $\int_{\sigma_\ve} \omega$ is defined.
One knows on abstract grounds that the monodromy of
$$H_{n-1}(\P^{n-1}-X,\Delta_t-X\cap \Delta_t)
$$
is quasi-unipotent as above (\cite{D}, III,\S 2). The main mathematical work in this paper will be to compute the monodromy of $\sigma_\ve$ in the specific case of Feynman amplitudes in physics. More precisely, $X$ will be a graph hypersurface $X_\Gamma$ associated to a graph $\Gamma$ (section \ref{secgrhyp}).
We will write down chains $\tau_{\gamma}^\ve$, one for each flag of {\it core} ({\it one particle irreducible} in physics) subgraphs $\gamma= \{\Gamma_1 \subsetneq \cdots \Gamma_{p(\gamma)} \subsetneq \Gamma\}$, representing linearly independent homology classes in $H_{n-1}(\P^{n-1}-X, \Delta_\ve - X\cap \Delta_\ve)$. (The combinatorics here is similar to that found in \cite{BF}, \cite{L}.) We will show the monodromy in our case is given by
\eq{}{m(\sigma_\ve) = \sigma_\ve + \sum_\gamma (-1)^{p(\gamma)}\tau_\gamma^\ve.
}
We will then exhibit a nilpotent matrix $N$ such that
\eq{}{\begin{pmatrix} m(\sigma_\ve) \\ \vdots \\ m(\tau_\gamma^\ve)  \\ \vdots\end{pmatrix} = \exp(N)\begin{pmatrix} \sigma_\ve \\ \vdots \\ \tau_\gamma^\ve  \\ \vdots\end{pmatrix}.
}
With this in hand, renormalization is automatic for any physical theory for which $\Gamma$ and its subgraphs are at worst {\it logarithmically divergent} after taking out suitable polynomials in masses and momenta. Namely, such a physical theory gives a differential form $\omega_\Gamma$ as in \eqref{0.2} and we may repeat the above argument:
\eq{0.9}{\exp(-(N\log t)/2\pi i)\begin{pmatrix} \int_{\sigma_t}\omega \\ \vdots \\ \int_{\tau_\gamma^t}\omega  \\ \vdots\end{pmatrix}
}
is single-valued on the punctured disk. The hypothesis of log
divergence at worst for subgraphs of $\Gamma$ will imply that the
integrals will grow at worst like a power of $\log$ as $|t| \to
0$,(lemma \ref{lem5.2}). Precisely as in \eqref{0.5}, one gets the renormalization
\eq{0.10}{\int_{\sigma_t}\omega_\Gamma = \sum_{k=0}^r b_k(\log t)^k + O(t),
}
where $O(t)$ denotes a (multi-valued) analytic function vanishing at $t=0$. The renormalization schemes considered here can be characterized
by the condition $b_0=0$.

Of course, the requirement that a physical theory have at worst log divergences is a very strong constraint. The difficult computations in section 7 show how general divergences encountered in physics can be reduced to log divergences.

\begin{rmks}The renormalization scheme outlined above, and worked out
  in detail in the following sections, has a number of properties,
  some of which may seem strange to the physicist. \newline\noindent
(i) It does not work in renormalization schemes which demand counter-terms which are not defined by subtractions at fixed values of masses and momenta of the theory. So conditions on the regulator for example, as in minimal subtraction where one defines the counterterm by projection onto a pole part, are not considered. In such schemes, and for graphs which are worse than log divergent, a topological procedure of
the sort given here can not work. It is necessary instead to modify
the integrand $\omega_\Gamma$ in a non-canonical way. \newline\noindent
(ii) On the other hand, our approach is very canonical. It depends on
the choice of a parameter $t$, as any renormalization scheme
must. Somewhat more subtle is the dependence on the monodromy
associated to the choice of a family
$\Delta_t$ of coordinate divisors deforming the given
$\Delta=\Delta_0$. We have taken the most evident such monodromy,
moving all the vertices of the simplex.  Note that this choice is stable in the sense that a small deformation leaves the monodromy unchanged.\newline\noindent
(iii) It would seem that our answer is much more complicated than need
be, because $\Gamma$ will in general contain far more core subgraphs
than divergent subgraphs. For example, in $\varphi^4$-theory, the
``dunce's cap'' (see Fig.\ref{figb})
\begin{figure}[t]
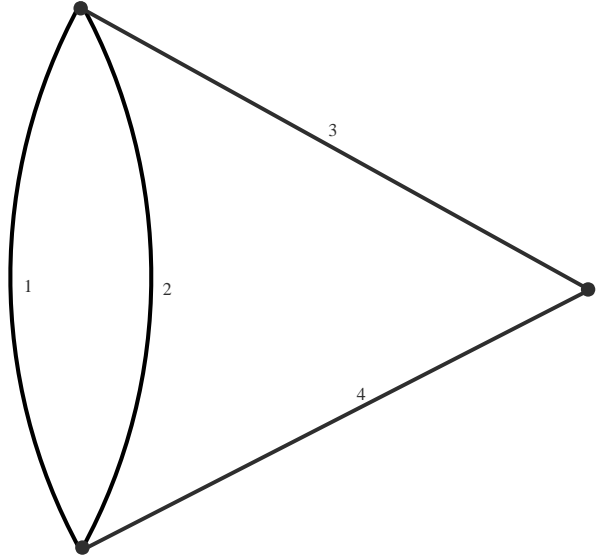
\centering
\figb\caption{Dunce's cap. Here and in following figures, external half-edges are often not drawn and are determined by the requirement that all vertices are four-valent.}\label{figb}
\end{figure}
has only one divergent subgraph, given in
the picture by edges $1,2$. It has $3$ core subgraphs
$(3,4,1),(3,4,2),(1,2)$. From the point of view of renormalization,
this problem disappears. The $\tau^\ve_\gamma$ are tubes, and the
integral $\int_{\tau^\ve_\gamma} \omega_\Gamma$ is basically a residue
which will vanish unless $\gamma\subset \Gamma$ is a divergent
subgraph. In \eqref{0.9}, the column vector of integrals will consist
mostly of $0$'s and the final regularization \eqref{0.10} will involve
only divergent subgraphs. \newline\noindent
(iv) An important property of the theory is the presence of a {\it
  limiting mixed Hodge structure}. The constants on the right hand
side of \eqref{0.5} are periods of a mixed Hodge structure called
the limiting MHS for the degeneration. One may hope that the
tendency for Feynman amplitudes to be multi-zeta numbers \cite{BK}
will some day be understood in terms of this Hodge structure. From
the point of view of this paper, the vector space $W \subset
H_{n-1}(\P^{n-1}-X_\Gamma, \Delta_t-X_\Gamma\cap \Delta_t)$
spanned by $\sigma_t$ and the $\tau^t_\gamma$ is invariant under
the monodromy. One may ask whether the image of $W$ in the
limiting MHS spans a sub-Hodge structure. If so, we would expect
that this HS would be linked to the multi-zeta numbers. Note that
$W$ is highly non-trivial even when $\Gamma$ has no
subdivergences. This $W$ is an essentially new invariant which
comes out of the monodromy. See section (\ref{seclmhs}) for a
final discussion of our viewpoint.
\newline\noindent (v) There are a number of renormalization
schemes in physics, some of which are not compatible with our
approach. One general test is that our scheme depends only on the
graph polynomials of $\Gamma$. For example, suppose $\Gamma =
\Gamma_1\cup \Gamma_2$ where the $\Gamma_i$ meet at a single
vertex. Then the renormalization polynomial in $\log t$ our theory
yields for $\Gamma$ will be the product of the renormalizations
for the $\Gamma_i$.
\end{rmks}
Most of the mathematical work involved concerns the calculation of monodromy for a particular topological chain. It is perhaps worth taking a minute to discuss a toy model. Suppose one wants to calculate $\int_0^\infty \omega$, where $\omega = \frac{dz}{(z-i)z}$. The integral diverges, so instead we consider $\int_t^\infty \omega$ as a function of $t=\ve e^{i\theta}$ for $0 \le \theta\le 2\pi$. If we take the path $[t,\infty]$ to be a great circle, then as $t$ winds around $0$, the path will get tangled in the singularity of $\omega$ at $z=i$. Assuming we do not understand the singularities of our integral far from $0$, this could be a problem. Instead we chose our path to follow the small circle from $\ve e^{i\theta}$ to $\ve$ and then the positive real axis from $\ve$ to $\infty$. The {\it variation} of monodromy is the difference in the paths for $\theta=0$ and $\theta = 2\pi$. In this case, it is the circle $\{|t|=\ve\}$. If we assume something (at worst superficial log divergence for the given graph and all subgraphs in the given physical theory) about the behavior of $\omega$ near the pole at $0$, then the behavior of our integral for $|t|<<1$ is determined by this monodromy, which is a topological invariant. This is quite different from the usual approach in physics involving complicated algebraic manipulations with $\omega$.
A glance at fig.(\ref{fige})  suggests that our toy model is too simple. We have to work with two scales, $\ve<<\eta<<1$. This is because in the more complicated situation, we have to deal with cylinders of small radius $\eta$, but then we have further to slightly deform the boundaries of the cylinder (cf.\ fig.(\ref{figg})).
\subsection{Leitfaden} Section \ref{secha} is devoted to Hopf algebras of graphs and of trees. These have played a central role in the combinatorics of renormalization. In particular, the insight afforded by passing from graphs to trees is important. Since the combinatorics of core subgraphs is even more complicated than that of divergent subgraphs, it seemed worth going carefully through the construction. Section \ref{seccomb} studies the toric variety we obtain from a graph $\Gamma$  by blowing up certain coordinate linear spaces in the projective space with homogeneous coordinates labeled by the edges of $\Gamma$. The orbits of the torus action are related to flags of core subgraphs of the given graph. In section \ref{sectopch}, we use the $\R$-structure on our toric variety to construct certain topological chains which will be used to explicit the monodromy. Section \ref{secgrhyp} recalls the basic properties of the graph polynomial $\psi_\Gamma\equiv\psi(\Gamma)$ and the graph hypersurface $X_\Gamma: \psi_\Gamma=0$. The crucial point is corollary \ref{cor3.3} which says that the strict transform of $X_\Gamma$ on our toric blowup avoids points with coordinates $\ge 0$. Any chain we construct which stays close to the locus of such points necessarily is away from $X_\Gamma$ and hence also away from the polar locus of our integrand. Section \ref{secmono} computes the monodromy of our chain. Section \ref{parrep} considers how to reduce Feynman amplitude calculations as they arise in physics, including masses and momenta as well as divergences which are worse than logarithmic, to the basic situation where limiting methods can apply. In section \ref{secn} we calculate the nilpotent matrix $N$ which is the log of the monodromy transformation, and in section \ref{secld} we prove the main renormalization theorem in the log divergent case, to which we have reduced the theory.
\section{Hopf algebras of trees and graphs}\label{secha}
\subsection{Graphs}\label{subsecgr}     In this section we bring together material on graphs and the graph Hopf algebra which will be used in the sequel. We also discuss Hopf algebras related to rooted trees and prove a result (proposition \ref{proptree}) relating the Hopf algebra of core graphs to a suitable Hopf algebra of labeled trees. Strictly speaking this is not used in the paper, but it provides the best way we know to understand flags of core subgraphs, and these play a central role in the monodromy computations. Trees labeled by divergent subgraphs have a long history in renormalization theory \cite{overl}, \cite{Kreimer}.

A graph $\Gamma$  is determined by giving a finite set $HE(\Gamma)$ of half-edges, together with two further sets $E(\Gamma)$ (edges) and $V(\Gamma)$ (vertices) and surjective maps
\eq{}{p_V: HE(\Gamma) \to V;\quad p_E: HE(\Gamma) \to E.
}
(Note we do not allow isolated vertices.) In combinatorics, one typically assumes all fibres $p_E^{-1}(e)$ consist of exactly two half-edges ($e$ an {\it internal edge}), while in physics the calculus of path integrals and correlation functions dictates that one admit {\it external edges} $e\in E$ with $\# p_E^{-1}(e) = 1$. If all internal edges of $\Gamma$ are shrunk to $0$, the resulting graph (with no internal edges) is called the {\it residue} $\text{res}(\Gamma)$. In certain theories, the vertices are decomposed into different types $V=\amalg V_i$, and the valence of the vertices in $V_i$, $\# p_{V}^{-1}(v)$, is fixed independent of $v\in V_i$.

We will typically work with labeled graphs which are triples $(\Gamma, A, \phi: A \cong E(\Gamma))$. We refer to $A$ as the set of edges.

A graph is a topological space with Betti numbers $|\Gamma| = h_1(\Gamma) = \dim H_1(\Gamma,\Q)$ and $h_0(\Gamma)$. We say $\Gamma$ is {\it connected} if $h_0=1$. Sometimes $h_1$ is referred to as the {\it loop number}.

A subgraph $\gamma \subset \Gamma$ is determined (for us) by a subset $E(\gamma) \subset E(\Gamma)$. We write $\Gamma/\!/\gamma$ for the quotient graph obtained by contracting all edges of $\gamma$ to points. If $\gamma$ is not connected, $\Gamma/\!/\gamma$ is different from the naive quotient $\Gamma/\gamma$. If $\gamma=\Gamma$, we take $\Gamma/\!/\Gamma = \emptyset$ to be the empty set. It will be convenient when we discuss Hopf algebras below to have the empty set as a graph.

Also, for $\gamma=e$ a single edge, we have the contraction $\Gamma/\!/e = \Gamma/e$. In this case we also consider the {\it cut} graph $\Gamma-e$ obtained by removing $e$ and also any remaining isolated vertex.

A graph $\Gamma$ is said to be {\it core} ($1PI$ in physics terminology) if for any edge $e$ we have $|\Gamma-e|<|\Gamma|$.

A {\it cycle} $\gamma \subset \Gamma$ is a core subgraph such that $|\gamma|=1$. If $\Gamma$ is core, it can be written as a union of cycles (see e.g. the proof of lemma 7.4 in \cite{BEK}).

\subsubsection{Self-energy graphs}
Special care has to be taken when the residue $\textrm{res}(\gamma)$ of a connected component  $\gamma$ of some subgraph consists of two half-edges connected to a vertex, $|\textrm{res}(\gamma)|=2$. Such graphs are called self-energy graphs in physics.
In such a situation, if the internal edges of $\gamma$ contract to a point, we are left with two edges in $\Gamma/\!/\gamma$, which are connected at this point $u$ . It might happen that the theory  provides more than one two-point vertex. In fact, for a
massive theory, there are two two-point vertices provided by the theory corresponding to the two monomials in the Lagrangian quadratic in the fields,
we call them of mass and kinetic type. $\Gamma/\!/\gamma$ represents then a sum over two graphs by summing over the two types of vertices for that point $u$. (see Fig.(\ref{selfu}) for an example).

\begin{figure}[t]
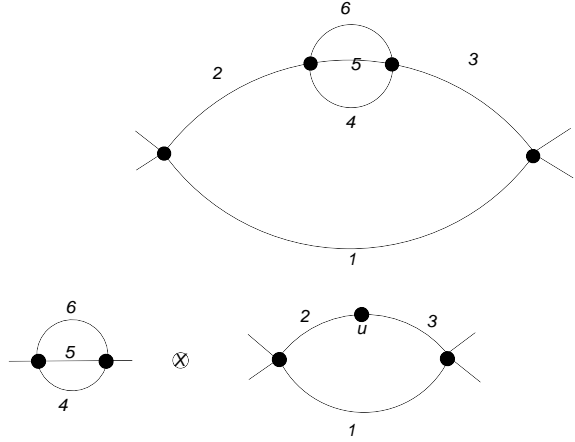
\centering\selfu
\caption{This vertex graph has a propagator correction given by edges $4,5,6$. The non-trivial part of the coproduct then delivers on the left
the subgraph with internal edges $4,5,6$ amongst other terms. The coproduct on the right has a co-graph on edges $1,2,3$. There is a two-point vertex $u$ between edges $2,3$. Choosing two labels $u=m^2$ or $u=\Box$ allows to distinguish between mass and wave-function renormalization.
We remind the reader that the corresponding monomials in the Lagrangian are $m^2\phi^2/2$ and $\phi\Box\phi/2$.}\label{selfu}
\end{figure}

The edges and vertices of various types have weights. We set the weight of an edge to be two, the weight of a vertex with valence greater than two is zero,
the weight of a vertex of mass type is zero, the weight of the kinetic type is +2.

Then, the superficial degree of divergence $\textrm{sdd}(\Gamma)$ for a connected core graph $\Gamma$ is
\be \textrm{sdd}(\Gamma)=4|\Gamma|-2|\Gamma^{[1]}|+2|\Gamma^{[0],\textrm{kin}}|,\label{sdd}\ee
where $\Gamma^{[0],\textrm{kin}}$ is the set of vertices of kinetic type,
and $\Gamma^{[1]}$ the set of internal edges. $\Gamma^{[0]}$, the set of interaction vertices (for which we assume we have only one type) does not show up as they have weight zero, nor does $\Gamma^{[0],\textrm{mass}}$.
By $|\cdots|$ we denote the cardinality of these sets.

Note that a graph $\Gamma/\!/\gamma$ which has one two-point vertex labeled $m^2$ (of mass type) which appears after contracting a self-energy subgraph $\gamma$
has an improved power-counting as its edge number is $2h_1(\Gamma/\!/\gamma)+1$. If the two-point vertex is labeled by $\Box$ (kinetic type), it has not changed though: $\textrm{sdd}(\Gamma/\!/\gamma)=\textrm{sdd}(\Gamma)$, as the weight of the two-point vertex compensates for the weight of the extra propagator. Quite often, in massless theories, one then omits the use of these two-point vertices altogether.

\subsection{Hopf algebras of graphs}\label{subsechag} Let $\sP$ be a class of graphs. We assume $\emptyset \in \sP$ and that $\Gamma \in \sP$ and $\Gamma'\cong \Gamma$ implies $\Gamma' \in \sP$. We say $\sP$ is {\it closed under extension} if given $\gamma \subset \Gamma$ we have
\eq{2.2b}{\gamma, \Gamma \in \sP \Leftrightarrow \gamma, \Gamma/\!/\gamma \in \sP.
}

Easy examples of such classes of graphs are $\sP = \text{core graphs}$, and $\sP = \text{log divergent graphs}$, where  $\Gamma$ is  log divergent (in $\phi^4_4$ theory) if it is core and if further $\# E(\Gamma_i)= 2|\Gamma_i|$ for every connected component $\Gamma_i\subset \Gamma$. (Both examples are closed under extension by virtue of the identity $|\gamma|+|\Gamma/\!/\gamma| = |\Gamma|$.) Examples which arise in physical theories are more subtle. Verification of \eqref{2.2b} requires an analysis of which graphs can arise from a given Lagrangian.
To verify $\sP = \{\Gamma\ |\ \textrm{sdd}(\Gamma) \ge 0\}$ satisfies \eqref{2.2b} one must consider self-energy graphs and the role of vertices of kinetic type as discussed above.

In particular, in  massless $\phi_4^4$ theory divergent graphs are closed under extension, and so is the class of graphs for which
$4|\Gamma|-2|\Gamma^{[1]}|+2|\Gamma^{[0],\textrm{kin}}|+2|\Gamma^{[0],\textrm{mass}}|\geq 0$. Note that this may contain superficially convergent graphs if there are sufficiently many two-point vertices of mass type.
It pays to include them in the class of graphs to be considered, which enables one to discuss the effect of mass in the renormalization group flow.

Associated to a class $\sP$ which is closed under extension as above, we define a (commutative, but not cocommutative) Hopf algebra $H = H_\sP$ as follows. As a vector space, $H$ is freely spanned by isomorphism classes of graphs in $\sP$. (A number of variants are possible. One may work with oriented graphs, for example. In this case, the theory of graph homology yields a (graded commutative) differential graded Hopf algebra. One may also rigidify by working with disjoint unions of subgraphs of a given labeled graph.) $H$ becomes a commutative algebra with $1=[\emptyset]$ and product given by disjoint union. Define a comultiplication $\Delta: H \to H\otimes H$:
\eq{2.3b}{\Delta(\Gamma) = \sum_{\substack{\gamma\subset\Gamma \\ \gamma \in \sP}} \gamma\otimes \Gamma/\!/\gamma.
}
One checks that \eqref{2.2b} implies that \eqref{2.3b} is coassociative. Since $H$ is graded by loop numbers and each $H_n$ is finite dimensional, the theory of Hopf algebras guarantees the existence of an antipode, so $H$ is a Hopf algebra.

If $\sP'\subset \sP$ with Hopf algebras $H', H$ (   e.g.\ take $\sP$ to be core graphs, and $\sP'\subset \sP$ divergent core graphs) then the map $H \surj H'$ obtained by sending $\Gamma\mapsto 0$ if $\Gamma \not\in \sP'$ is a homomorphism of Hopf algebras. For example, the divergent Hopf algebra carries the information needed for renormalization \cite{Kreimer}, while the core Hopf algebra $H_\sC$ determines the monodromy. In terms of groupschemes, one has $\Spec(H_{\text{log. div.}}) \inj \Spec(H_{\sC})$ is a closed subgroupscheme, and renormalization can be viewed as a morphism from the affine line with coordinate $L$ to $\Spec(H_{\text{log. div.}})$. Already here we use that for divergent graphs with $\textrm{sdd}(\Gamma)>0$, we can evaluate them as polynomials in masses and external momenta with coefficients determined from log divergent graphs, see below.

Let $\Gamma_i, i=1,2$ be core graphs (a similar discussion will be valid for other classes of graphs) and let $v_i \in \Gamma_i$ be vertices. Let $\Gamma = \Gamma_1\cup \Gamma_2$ where the two graphs are joined by identifying $v_1\sim v_2$. Then $\Gamma$ is core (cf. proposition \ref{prop1.4}). Further, core subgraphs $\Gamma' \subset \Gamma$ all arise as the image of $\Gamma_1'\amalg\Gamma_2' \to \Gamma$ for $\Gamma_i'\subset \Gamma_i$ core. Thus
\ml{}{\Delta(\Gamma) = \sum \Gamma'\otimes \Gamma/\!/\Gamma' = \Big(\sum \Gamma_1'\otimes \Gamma_1/\!/\Gamma_1' \Big)\Big(\sum \Gamma_2'\otimes \Gamma_2/\!/\Gamma_2' \Big)+\\
\sum(\Gamma'-\Gamma_1'\cdot\Gamma_2')\otimes(\Gamma_1/\!/\Gamma_1' \cdot \Gamma_2/\!/\Gamma_2') + \sum \Gamma'\otimes \Big(\Gamma/\!/\Gamma' - \Gamma_1/\!/\Gamma_1' \cdot \Gamma_2/\!/\Gamma_2' \Big).
}
It follows that the vector space $I\subset H_\sC$ spanned by elements $\Gamma - \Gamma_1\cdot\Gamma_2$ as above satisfies $\Delta(I) \subset I\otimes H_\sC + H_\sC\otimes I$. Since $I$ is an ideal, we see that $\overline H_\sC := H_\sC/I$ is a commutative Hopf algebra. Roughly speaking, $\overline H_\sC$ is obtained from $H_\sC$ by identifying {\it one vertex reducible} graphs with products of the component pieces.

Generalization to theories with more vertex and edge types are straightforward.

\begin{figure}[t]
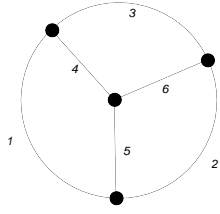

\wthree\caption{In Eq.(\ref{copw3}), we give the coproduct for this wheel with three spokes in the core Hopf algebra.}\label{wthree}
\end{figure}
Fig.(\ref{wthree}) gives the wheel with three spokes. This graph, which in $\phi^4$ theory (external edges to be added such that each vertex is four-valent) has a residue $6\zeta(3)$ for conceptual reasons \cite{BEK}, has a coproduct (we omit edge labels and identify terms which are identical under this omission, which gives the indicated multiplicities)
\bea \Delta\left(\wsix\right)  =  \wsix & \otimes & \One\label{copw3}\\+\One & \otimes &\wsix\nonumber\\
  + 4 \wthr & \otimes & \wtwo\nonumber\\
   +3\wfour & \otimes & \wtadtad\nonumber\\
    6 \wfive & \otimes & \wtad\nonumber.
\eea
For example, the three possible labelings for the four-edge cycle in the third line are $4523$, $5631$ and $6412$.
While the graph has a non-trivial coproduct in the core Hopf algebra, it is a primitive element in the renormalization Hopf algebra.
It is tempting to hope that the core coproduct relates to the Hodge structure underlying the period which appears in the residue of this graph.

\subsection{Rooted tree Hopf algebras \cite{overl}, \cite{BerK2}}\label{ssecrth} We introduce the Hopf
algebra of decorated non-planar rooted trees $H_{\mathcal T}$ using
non-empty finite sets as decorations (decorations will be sets of edge labels of Feynman graphs below)
to label the vertices of the rooted tree Hopf algebra $H_{\mathcal T}(\emptyset)$. Products in $H_\sT$ are disjoint unions of trees (forests). We write the coproduct as
\be \Delta(T)=T\otimes \One+\One\otimes T+\sum_{\textrm{admissible cuts $C$}}P^C(T)\otimes R^C(T).\ee
Edges are oriented away from the root and a vertex which has no outgoing edge we call a foot.
An admissible cut is a subset of edges of a tree such that no path
from the root to any vertex of $T$ traverses more than one element of
that subset. Such a cut $C$ separates $T$ into at least $2$ components. The component containing the root is denoted $R^C(T)$, and the product of the other components is $P^C(T)$.

A {\it ladder} is a tree without side branching. Decorated ladders
generate a sub-Hopf algebra $L_{\sT} \subset H_{\sT}$. A general element in $L_{\sT}$ is a sum of {\it bamboo forests}, that is disjoint unions of ladders. Decorated ladders have an associative shuffle product
\eq{2.7b}{L_1\star L_2 := \sum_{k\in\text{shuffle}(\ell_1,\ell_2)} L(k)
}
where $\ell_i$ denotes the ordered set of decorations for $L_i$ and $\text{shuffle}(\ell_1,\ell_2)$ is the set of all ordered sets obtained by shuffling together $\ell_1$ and $\ell_2$.
\begin{lem}\label{lem2.1b} Let $\sK \subset L_\sT$ be the ideal generated by elements of the form $L_1\cdot L_2 - L_1\star L_2$. Then $\Delta(\sK) \subset \sK\otimes L_\sT + L_\sT \otimes \sK$.
\end{lem}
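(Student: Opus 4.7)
The plan is to verify the coideal condition on generators of $\sK$, then bootstrap to the full ideal using that $\sK \otimes L_\sT + L_\sT \otimes \sK$ is a sub-$(L_\sT\otimes L_\sT)$-module.

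First I would write down the coproduct of a single decorated ladder $L = L(\ell)$ with ordered decoration sequence $\ell = (a_1,\dots,a_n)$ (root $a_1$ at the bottom). The admissible cuts of a ladder are precisely the edges, so cuts correspond to ordered splittings $\ell = \ell' \cdot \ell''$ into an initial (root-end) segment $\ell'$ and final (leaf-end) segment $\ell''$; hence
\eq{ladcop}{\Delta L(\ell) = \sum_{\ell = \ell' \cdot \ell''} L(\ell'') \otimes L(\ell').}

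Next I would compute $\Delta(L_1 \star L_2)$ using \eqref{ladcop} and the key combinatorial identity that a splitting of a shuffle is a shuffle of splittings: namely, the shuffles of $\ell_1,\ell_2$ equipped with a cut position are in bijection with pairs of splittings $\ell_i = \ell_i' \cdot \ell_i''$ together with a shuffle of $\ell_1',\ell_2'$ and an independent shuffle of $\ell_1'',\ell_2''$. Regrouping yields
\eq{}{\Delta(L_1 \star L_2) = \sum_{\ell_i = \ell_i'\cdot \ell_i''} \bigl(L(\ell_1'') \star L(\ell_2'')\bigr) \otimes \bigl(L(\ell_1') \star L(\ell_2')\bigr).}
Meanwhile, since $\Delta$ is an algebra map,
\eq{}{\Delta(L_1 \cdot L_2) = \sum_{\ell_i = \ell_i'\cdot \ell_i''} \bigl(L(\ell_1'') \cdot L(\ell_2'')\bigr) \otimes \bigl(L(\ell_1') \cdot L(\ell_2')\bigr).}

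Subtracting, I would use the standard telescoping $AB - A'B' = (A-A')B + A'(B-B')$ summand by summand:
\begin{multline*}
\Delta(L_1 L_2 - L_1 \star L_2) = \sum \bigl[L(\ell_1'') L(\ell_2'') - L(\ell_1'')\star L(\ell_2'')\bigr] \otimes L(\ell_1')L(\ell_2')\\
+ \sum L(\ell_1'')\star L(\ell_2'') \otimes \bigl[L(\ell_1')L(\ell_2') - L(\ell_1')\star L(\ell_2')\bigr].
\end{multline*}
Each bracketed factor is a generator of $\sK$, so the right-hand side lies in $\sK \otimes L_\sT + L_\sT \otimes \sK$. This handles the generators.

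Finally I would promote the statement to all of $\sK$: an arbitrary element of $\sK$ has the form $x = \sum_i g_i y_i$ with $g_i$ a generator and $y_i \in L_\sT$, so $\Delta(x) = \sum_i \Delta(g_i)\Delta(y_i)$; since $\sK$ is an ideal of $L_\sT$, the subspace $\sK \otimes L_\sT + L_\sT \otimes \sK$ is stable under left multiplication by elements of $L_\sT \otimes L_\sT$, and the conclusion follows. The only real obstacle is the shuffle-splitting bijection in step two; everything else is formal.
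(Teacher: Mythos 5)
Your proposal is correct and follows essentially the same route as the paper: the paper likewise expands $\Delta(L_1)\Delta(L_2)$ and $\Delta(L_1\star L_2)$ over cut positions, groups the shuffle-with-cut pairs into classes $K(j,\mu)$ (which is exactly your ``splitting of a shuffle is a shuffle of splittings'' bijection), and then applies the same telescoping $a\otimes b - a^\star\otimes b^\star = (a-a^\star)\otimes b + a^\star\otimes(b-b^\star)$. The only difference is cosmetic: the paper leaves the passage from generators to the full ideal implicit, whereas you spell it out.
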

\begin{proof}Write $\Delta(L_i) = \sum_{j=0}^{d_i} L_{ij}\otimes L_i^{d_i-j}$ where $d_i$ is the length of $L_i$ and $L_{ij}$ (resp. $L_i^j$) is the bottom (resp. top) subladder of length $j$. Then
\ga{2.8b}{\Delta(L_1)\Delta(L_2) = \sum_{j,\mu}L_{1j}L_{2\mu}\otimes L_1^{d_1-j}L_2^{d_2-\mu} \\
\Delta(L_1\star L_2) = \sum_k \Delta(L(k)) = \sum_{k,\nu} L(k)_\nu\otimes L(k)^{d_1+d_2-\nu}. \notag
}
Consider pairs $(j,\mu)$ of indices in \eqref{2.8b} and write $j+\mu=\nu$. Among the pairs $k,\nu$ we consider the subset $K(j,\mu)$ for which the first $\nu=j+\mu$ elements of the ordered set consist of a shuffle of the decorations on the ladders $L_{1j}, L_{2\mu}$. It is clear that the remaining $d_1+d_2-\nu$ elements of $k$ will then run through shuffles of the decorations of $L_1^{d_1-j}, L_2^{d_2-\mu}$, so
\ml{}{\Delta(L_1)\Delta(L_2) - \Delta(L_1\star L_2) =  \sum_{j,\mu}\Big((L_{1j}L_{2\mu}-
\sum_{k\in K(j,\mu)}L(k)_{j+\mu})\otimes L_1^{d_1-j}L_2^{d_2-\mu}\Big) + \\
\sum_{j,\mu}\Big(\sum_{k\in K(j,\mu)}L(k)_{j+\mu}\otimes(L_1^{d_1-j}L_2^{d_2-\mu} - L(k)^{d_1+d_2-j-\mu})\Big) \in \sK\otimes L_\sT + L_\sT \otimes \sK.
}
\end{proof}

\begin{rmk}\label{rmk2.2b} Any bamboo forest is equivalent mod $\sK$ to a sum of stalks. Indeed, one has e.g.
\eq{}{L_1\cdot L_2\cdot L_3\equiv (L_1\star L_2)\cdot L_3 \equiv (L_1\star L_2)\star L_3 \equiv L_1\star L_2\star L_3.
}
\end{rmk}

For any decoration $\ell$, one has an operator  \cite{BerK2}
\eq{}{B_+^\ell: H_\sT \to H_\sT
}
which carries any forest to the tree obtained by connecting a single root vertex with decoration $\ell$ to all the roots of the forest. This operator is a Hochschild 1-cocycle, i.e.
\eq{2.11b}{\Delta B_+^\ell = B_+^\ell \otimes \One + (\text{id}\otimes B_+^\ell)\Delta.
}
Let $\sJ \subset H_\sT$ be the smallest ideal containing the ideal $\sK$ as in lemma \ref{lem2.1b} and stable under all the operators $B_+^\ell$. Generators of $\sJ$ as an abelian group are obtained by starting with elements of $\sK$ and successively applying $B_+^\ell$ for various $\ell$ and multiplying by elements of $H_\sT$. It follows from \eqref{2.11b} that $\Delta\sJ \subset \sJ\otimes H_\sT + H_\sT\otimes \sJ$. Define
\eq{}{\overline H_\sT := H_\sT/\sJ.
}

 A {\it flag}
in a core graph $\Gamma$ is a chain
\eq{2.6b}{f:=\emptyset \subsetneq \Gamma_1\subsetneq\cdots\subsetneq \Gamma_n=\Gamma
}
of core subgraphs.  Write
$F(\Gamma)$ for the collection of all maximal flags of $\Gamma$. One checks easily that for a maximal flag, $n=|\Gamma|$.
Let us consider an example.
\bea \ef & \subsetneq & \abef\subsetneq \abcdef,\label{flag1}\\
\ef & \subsetneq & \cdef\subsetneq \abcdef,\label{flag2}\\
\abe & \subsetneq & \abef\subsetneq \abcdef,\\
\abf & \subsetneq & \abef\subsetneq \abcdef,\\
\abe & \subsetneq & \abcde\subsetneq \abcdef,\\
\abf & \subsetneq & \abcdf\subsetneq \abcdef,\\
\cde & \subsetneq & \cdef\subsetneq \abcdef,\\
\cdf & \subsetneq & \cdef\subsetneq \abcdef,\\
\cde & \subsetneq & \abcde\subsetneq \abcdef,\\
\cdf & \subsetneq & \abcdf\subsetneq \abcdef,\\
\abcd & \subsetneq & \abcde\subsetneq \abcdef,\\
\abcd & \subsetneq & \abcdf\subsetneq \abcdef,
\eea
are the twelve flags for the graph given in Fig.(\ref{overl}).
We omitted the edge labels in the above flags. Note that only the first two , (\ref{flag1},\ref{flag2})
are relevant for the renormalization Hopf algebra to be introduced below.
\begin{figure}[t]
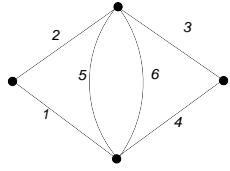
\overl
\caption{A graph with overlapping subdivergences. The renormalization Hopf algebra gives $\Delta^\prime{123456}=56\otimes 1234+1256\otimes 34+3456\otimes 12$. Note that each edge belongs to some subgraph with $\textrm{sdd}\geq 0$.}\label{overl}
\end{figure}

To the flag
$f$ we associate the ladder $L(f)$ with $n$ vertices decorated by
$\Gamma_i-\Gamma_{i-1}$. (More precisely, the foot is decorated by
$\Gamma_1$ and the root by $\Gamma-\Gamma_{n-1}$.). Define
\eq{}{\rho_L: H_\sC \to L_\sT;\quad  \rho_L(\Gamma) := \sum_{f\in F(\Gamma)} L(f)
}
Here the set of labels $D$ will be the set of subsets of graph
labels.
\begin{lem}The map $\rho_L$ is a homomorphism of Hopf algebras.
\end{lem}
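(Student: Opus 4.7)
There are two things to verify: that $\rho_L$ respects the product and that it respects the coproduct. Given the preceding discussion of the ideal $\sK$, I read the equality as being modulo $\sK$, so that $L_1\cdot L_2\equiv L_1\star L_2$; equivalently, one can view the assertion as giving a genuine Hopf algebra map $\bar\rho_L: H_\sC\to L_\sT/\sK$.

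For multiplicativity, take $\Gamma=\Gamma_1\cdot\Gamma_2$ as a disjoint union of core graphs. Every core subgraph of $\Gamma$ splits as a disjoint union of core subgraphs of the $\Gamma_i$, and since loop number is additive, a maximal flag
\[
\emptyset\subsetneq\Gamma^{(1)}\subsetneq\cdots\subsetneq\Gamma^{(n)}=\Gamma,\quad n=|\Gamma_1|+|\Gamma_2|,
\]
is precisely the datum of a pair of maximal flags $f_1\in F(\Gamma_1),f_2\in F(\Gamma_2)$ together with an interleaving specifying, at each step, whether the loop is added to the $\Gamma_1$- or $\Gamma_2$-component. The associated ladder $L(f)$ is the shuffle $L(f_1)\star L(f_2)$ with decorations concatenated in the chosen order. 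Summing over all flags yields $\rho_L(\Gamma_1)\star\rho_L(\Gamma_2)$, which coincides with $\rho_L(\Gamma_1)\cdot\rho_L(\Gamma_2)$ modulo $\sK$ by the definition of $\sK$.

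For compatibility with coproducts, expand both sides. The admissible-cut coproduct restricted to a ladder $L(f)$ of length $n$ is the deconcatenation $\sum_{j=0}^n L(f)_j\otimes L(f)^{n-j}$ where $L(f)_j,L(f)^{n-j}$ are the bottom and top subladders. For a maximal flag $f:\emptyset\subsetneq\Gamma_1\subsetneq\cdots\subsetneq\Gamma_n=\Gamma$, the bottom subladder $L(f)_j$ is the ladder associated to the maximal flag $\emptyset\subsetneq\Gamma_1\subsetneq\cdots\subsetneq\Gamma_j$ of $\gamma:=\Gamma_j$. The top subladder $L(f)^{n-j}$ has vertex decorations $\Gamma_{j+1}-\Gamma_j,\Gamma_{j+2}-\Gamma_{j+1},\dots,\Gamma_n-\Gamma_{n-1}$, which are exactly the decorations of the ladder associated to the maximal flag $\emptyset\subsetneq\Gamma_{j+1}/\!/\Gamma_j\subsetneq\cdots\subsetneq\Gamma/\!/\Gamma_j$ of the quotient $\Gamma/\!/\gamma$ (which is core, by closure under extension \eqref{2.2b}). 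Reorganizing $\sum_{f\in F(\Gamma)}\sum_{j=0}^n L(f)_j\otimes L(f)^{n-j}$ by the intermediate subgraph $\gamma=\Gamma_j$ gives a bijection of index sets with triples $(\gamma,f',f'')$, $f'\in F(\gamma)$, $f''\in F(\Gamma/\!/\gamma)$, producing exactly $(\rho_L\otimes\rho_L)\Delta(\Gamma)=\sum_\gamma\sum_{f',f''}L(f')\otimes L(f'')$.

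The main obstacle is the shuffle-versus-disjoint-union mismatch encountered at the multiplicativity step; this is precisely why the ideal $\sK$ and Lemma \ref{lem2.1b} were set up, and one must be careful to state the conclusion with that understanding. The coalgebra compatibility, by contrast, is a clean bookkeeping bijection between (maximal flag, cut point) pairs in $\Gamma$ and (core subgraph $\gamma$, maximal flag in $\gamma$, maximal flag in $\Gamma/\!/\gamma$) triples; the only nontrivial check there is that edge-set decorations are preserved under passage to the quotient $\Gamma/\!/\gamma$.
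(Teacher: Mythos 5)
Your coalgebra verification is exactly the paper's proof: the authors reduce $\Delta_L\circ\rho_L=(\rho_L\otimes\rho_L)\circ\Delta_\sC$ to the bijection between pairs (maximal flag of $\Gamma$, cut position) and triples ($\gamma$, maximal flag of $\gamma$, maximal flag of $\Gamma/\!/\gamma$), i.e.\ to the identification $F(\Gamma,\gamma)=F(\gamma)\times F(\Gamma/\!/\gamma)$, which is precisely the bookkeeping you spell out (your check that the decorations survive passage to $\Gamma/\!/\gamma$, and that closure under extension \eqref{2.2b} plus additivity of loop number preserves maximality, makes explicit what the paper dismisses as ``clear''). Where you genuinely depart from the paper is on the algebra side: the paper's proof checks only the coproduct and says nothing about products. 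Your observation that $\rho_L(\Gamma_1\cdot\Gamma_2)$ is a sum of single shuffled ladders, hence equals $\rho_L(\Gamma_1)\star\rho_L(\Gamma_2)$ rather than the bamboo forest $\rho_L(\Gamma_1)\cdot\rho_L(\Gamma_2)$, is correct --- it is essentially the content of the paper's own Lemma \ref{lem2.3b}, which is only stated afterwards --- and it shows the lemma has to be read either with the shuffle product \eqref{2.7b} as the algebra structure on ladders or as a statement about the composite map to $L_\sT/\sK$. That reading is consistent with how $\rho_L$ is actually used downstream, where everything is pushed to $\overline H_\sT=H_\sT/\sJ$ with $\sJ\supset\sK$ in Proposition \ref{proptree}. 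So: same route on the coalgebra half, and a substantive, needed supplement on the algebra half that the paper's proof omits.
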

\begin{proof}For a flag $f$ let $f^{(p)}$ be the bottom $p$ vertices
  with the given labeling, and let $f_{(p)}$ be the top $n-p$ vertices
  with the quotient labeling gotten by contracting the core graph
  associated to the bottom $p$ vertices. For $\gamma \subset \Gamma$ a
  core subgraph, define $F(\Gamma,\gamma) := \{f\in F(\Gamma)\ |\
  \gamma \in f\}$. There is a natural identification
\eq{}{F(\Gamma,\gamma) = F(\gamma) \times F(\Gamma/\!/\gamma).
}
We have
\eq{}{(\rho_L\otimes \rho_L)\circ\Delta_{\sC}(\Gamma) = \sum_\gamma
  \rho_L(\gamma)\otimes \rho_L(\Gamma/\!/\gamma) =
\sum_\gamma\sum_{f\in F(\Gamma,\gamma)}L(f^{|\gamma|})\otimes L(f_{|\gamma|}).
}
On the other hand
\eq{}{\Delta_L \circ \rho_L(\Gamma) = \sum_{f\in F(\Gamma)}
  \sum_{i=1}^n L(f^{(i)})\otimes L(f_{(i)}).
}
The assertion of the lemma is that there is a $1-1$ correspondence
\eq{}{\{\gamma, \text{max. flag of $\Gamma$ containing }\gamma\}
  \leftrightarrow \{\text{max. flag of $\Gamma$ }, i\le n\}.
}
This is clear.
\end{proof}

In fact, the tree structure associated to a maximal flag $f$ of $\Gamma$ is
rather more intricate than just a ladder. Though we do not use this tree structure in the sequel, we present the construction in some detail to help in understanding the difference between the core
and renormalization Hopf algebra.

We want to associate a
forest $T(f)$ to the flag $f$, and we proceed by induction on
$n=|\Gamma|$. We can write $\Gamma = \bigcup \Gamma^{(j)}$ in such a
way that all the $\Gamma^{(j)}$ are core and one vertex irreducible,
and such that $|\Gamma| = \sum |\Gamma^{(j)}|$. This decomposition is unique. If it is nontrivial, we define $T(f) = \prod T(f^{(j)})$ where $f^{(j)}$ is
the induced flag from $f$ on $\Gamma^{(j)}$. We now may assume
$\Gamma$ is one vertex irreducible. If the $\Gamma_i$ in our flag are
all one vertex irreducible, we take $T(f) = L(f)$ to be a ladder as
above. Otherwise, let $m<n$ be maximal such that $\Gamma_m \subsetneq
\Gamma$ is one vertex reducible. By induction, we have a forest
$T(f|\Gamma_m)$. To define $T(f)$, we glue the foot of the ladder with
decorations
$\Gamma_{m+1}-\Gamma_m,\dotsc,\Gamma-\Gamma_{n-1}$ to all the roots of
$T(f|\Gamma_m)$. (For an example, see figs.(\ref{coreco}) and (\ref{reconstr}).)
\begin{figure}[t]
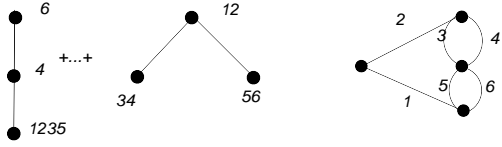
\coreco
\caption{The core Hopf algebra on rooted trees. We indicate subgraphs by edge labels on the vertices of rooted trees. The dots indicate seven more such trees, corresponding to flags $\Gamma_i\subsetneq\Gamma_j\subsetneq\Gamma_k$ with $\Gamma_i$ a cycle on four edges. The last tree represents a sum of two flags, $34\subsetneq 3456\subsetneq 123456+56\subsetneq 3456\subsetneq 123456$, again indicating graphs by edge labels. Hence that tree corresponds to a sum of two ladders, as it should.}\label{coreco}
\end{figure}
\begin{lem}\label{lem2.3b} Let $\Gamma = \bigcup \Gamma^{(j)}$ where $\Gamma$ and the $\Gamma^{(j)}$ are core. Assume $|\Gamma| = \sum_j |\Gamma^{(j)}|$. Then, viewing flags $f\in F(\Gamma)$ as sets of core subgraphs, ordered by inclusion, there is a $1-1$ correspondence between $F(\Gamma)$ and shuffles of the $F(\Gamma^{(j)})$.
\end{lem}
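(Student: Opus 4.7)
The plan is to reduce everything to the observation that, under the hypothesis $|\Gamma|=\sum_j|\Gamma^{(j)}|$, any core subgraph of $\Gamma$ splits along the $\Gamma^{(j)}$ decomposition, and then to show that the ``direction'' in which a maximal flag grows at each step uniquely determines (and is determined by) a shuffle word.

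First I would establish the key structural lemma: if $\gamma\subset\Gamma$ is core, then setting $\gamma^{(j)}:=\gamma\cap\Gamma^{(j)}$ one has $\gamma=\bigcup_j\gamma^{(j)}$, each $\gamma^{(j)}$ is core as a subgraph of $\Gamma^{(j)}$, and $|\gamma|=\sum_j|\gamma^{(j)}|$. The additivity $|\Gamma|=\sum_j|\Gamma^{(j)}|$ forces the edge sets $E(\Gamma^{(j)})$ to be pairwise disjoint and the $\Gamma^{(j)}$ to meet only in vertices (i.e.\ the decomposition is the one-vertex-reducible decomposition of $\Gamma$). Hence every cycle of $\Gamma$ lies entirely in a single $\Gamma^{(j)}$. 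Since by the remark preceding subsection \ref{subsechag} every core graph is a union of cycles, $\gamma$ is a union of cycles, each contained in some $\Gamma^{(j)}$; this gives $\gamma=\bigcup_j\gamma^{(j)}$ with each $\gamma^{(j)}$ a union of cycles in $\Gamma^{(j)}$, hence core. Additivity of $h_1$ then follows because $\gamma$ is obtained from the disjoint union $\coprod_j\gamma^{(j)}$ by identifying finitely many vertices, an operation that does not change the first Betti number.

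Next, given a maximal flag $f=(\emptyset\subsetneq\Gamma_1\subsetneq\cdots\subsetneq\Gamma_n=\Gamma)$ with $n=|\Gamma|$, apply the structural lemma at every level to obtain chains $f^{(j)}:\Gamma_1^{(j)}\subseteq\cdots\subseteq\Gamma_n^{(j)}=\Gamma^{(j)}$. At each step $\Gamma_i\subsetneq\Gamma_{i+1}$ the loop number jumps by exactly $1$, and by additivity this increment occurs in exactly one component $j=j_i$, so $\Gamma_i^{(j_i)}\subsetneq\Gamma_{i+1}^{(j_i)}$ while $\Gamma_i^{(k)}=\Gamma_{i+1}^{(k)}$ for $k\neq j_i$. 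Removing repetitions from each $f^{(j)}$ yields a maximal flag of $\Gamma^{(j)}$, and the word $(j_1,\dotsc,j_n)$ — which contains $j$ exactly $|\Gamma^{(j)}|$ times — records precisely how the flags of the components are interleaved inside $f$, i.e.\ a shuffle of the ordered sets $F(\Gamma^{(j)})$.

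Conversely, given maximal flags $f^{(j)}\in F(\Gamma^{(j)})$ and a shuffle word, one reconstructs $\Gamma_i$ from $\Gamma_{i-1}$ by replacing the appropriate $\Gamma_{i-1}^{(j_i)}$ by the next term of $f^{(j_i)}$ and unioning; the structural lemma, applied in reverse, guarantees that the result is core in $\Gamma$, so the two constructions are mutually inverse. The main obstacle is the structural lemma of the first paragraph: one must verify that the $\Gamma^{(j)}$ are edge-disjoint and meet only at cut vertices, and then invoke the decomposition of core graphs into cycles — once that is in place, everything else is bookkeeping.
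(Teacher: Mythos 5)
Your proposal follows the same route as the paper's own proof: the paper likewise first asserts that the $\Gamma^{(j)}$ share no edges, then records the dictionary $\Gamma'\mapsto\{\Gamma'\cap\Gamma^{(j)}\}$, $\{\Gamma^{(j)}{}'\}\mapsto\bigcup\Gamma^{(j)}{}'$ between core subgraphs of $\Gamma$ and tuples of core subgraphs of the components, and declares the shuffle statement to follow. Your second and third paragraphs are a correct and welcome expansion of that last step; the ``jump word'' bookkeeping is exactly what ``the assertion of the lemma follows'' is hiding.

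The one genuine problem is the opening claim of your structural lemma: additivity $|\Gamma|=\sum_j|\Gamma^{(j)}|$ does \emph{not} force the $\Gamma^{(j)}$ to be edge-disjoint, nor to meet only in single vertices. Take $\Gamma^{(1)},\Gamma^{(2)}$ to be two triangles glued along a common edge: each is core with $h_1=1$, the union $\Gamma$ is core with $h_1=5-4+1=2=1+1$, yet the two triangles share an edge and $\Gamma$ contains a third cycle (the outer $4$-cycle) lying in neither of them, so $F(\Gamma)$ has three elements while there are only two shuffles --- the statement itself fails for this decomposition. (More generally, Mayer--Vietoris gives additivity of $h_1$ whenever $\Gamma^{(1)}\cap\Gamma^{(2)}$ is a tree, not only when it is a single vertex.) The paper's ``one checks easily'' glosses over the same point; the lemma is really about the canonical one-vertex-reducible (block) decomposition, in which distinct $\Gamma^{(j)}$ meet in at most one vertex, and that is the only situation in which it is applied. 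You should either add that as a hypothesis or drop the claim that it follows from additivity. A second, smaller imprecision: identifying two vertices that already lie in the same connected component raises $h_1$ by one, so your assertion that finitely many vertex identifications preserve the first Betti number needs the block-tree structure behind it rather than being true outright. Once the decomposition is understood to be the block decomposition, the rest of your argument --- cycles lie in a single component, core graphs are unions of cycles, and the interleaving word gives the shuffle bijection --- is correct and agrees with the paper.
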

\begin{proof}One checks easily that the $\Gamma^{(j)}$ can have no edges in common. Further, there is a $1-1$ correspondence between core subgraphs $\Gamma' \subset \Gamma$ and collections of core subgraphs $\Gamma^{(j)}{}' \subset \Gamma^{(j)}$. Here, the dictionary is given by $\Gamma' \mapsto \{\Gamma'\cap \Gamma^{(j)}\}$ and $\{\Gamma^{(j)}{}'\} \mapsto \bigcup \Gamma^{(j)}{}'$. The assertion of the lemma follows.
\end{proof}
\begin{figure}[t]
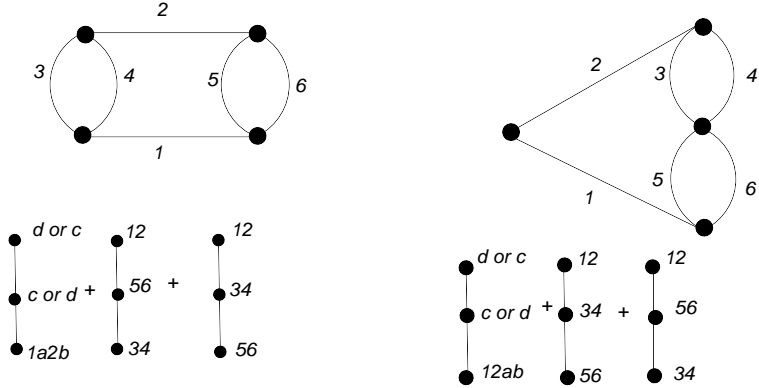
\reconstr\caption{The two graphs differ in how the subdivergences are inserted. $a,c\in{3,4}$, and $b,d\in{5,6}$, $c\not=a,b\not=d$. So there are eight such legal trees, plus the two which are identical between the two graphs. Note the permutation of labels at the feet of the trees in $\rho(\Gamma)$: $1a2b\leftrightarrow 12ab$. Keeping that order, we can uniquely reconstruct each graph from the knowledge of the labels at the feet: $1a2b,34,56$ and $12ab,34,56 $, which are the cycles in each graph. Note that in the difference of the two graphs, only the difference of those eight trees remains, corresponding to a primitive element in the renormalization Hopf algebra. The core Hopf algebra hence stores much more information than the renormalization Hopf algebra, which we hope to use in the future to understand the periods assigned to Feynman graphs by the Feynman rules.}\label{reconstr}
\end{figure}

As a consequence of lemma \ref{lem2.3b} we may partition the flags $F(\Gamma)$ associated to a core $\Gamma$ as follows. Given $f\in F(\Gamma)$, Let $\Gamma_m \subset \Gamma$ be maximal in the flag $f$ such that $\Gamma_m$ is $1$-vertex reducible. The flag $f$ induces a flag $f_m$ on $\Gamma_m$, and we know that it is a shuffle of flags $f_m^{(j)}$ on $\Gamma_m^{(j)}$ where $\Gamma_m = \bigcup \Gamma_m^{(j)}$ as in the lemma. We say two flags are equivalent, $f\sim f'$, if $f$ and $f'$ agree at $\Gamma_m$ and above, and if they simply correspond to two different shuffles of the flags $f_m^{(j)}$. We now have
\eq{2.20b}{T(f) \equiv \sum_{f'\sim f} L(f') \mod \sJ.
}
Indeed, $T(f)$ is obtained by successive $B_+^\ell$ operations applied to the forest $T(f|\Gamma_m)$. The latter, by remark \ref{rmk2.2b}, coincides with the righthand side of \eqref{2.20b}. We conclude
\begin{prop}\label{proptree} With notation as above, there exist homomorphisms of Hopf algebras
\eq{homomorph}{\begin{CD} H_\sC @> \rho_L >> L_\sT \\
@VVV @VVV \\
\overline H_\sC @>\rho_T >> \overline H_\sT
\end{CD}
}
\end{prop}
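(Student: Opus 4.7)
The plan is to leverage the fact that $\rho_L \colon H_\sC \to L_\sT \subset H_\sT$ has already been shown to be a Hopf algebra homomorphism. Composing with the Hopf quotient $H_\sT \twoheadrightarrow \overline H_\sT$, it suffices to check that the resulting map $H_\sC \to \overline H_\sT$ annihilates the ideal $I\subset H_\sC$ defining $\overline H_\sC$. Once this is verified, $\rho_L$ descends to a map $\rho_T\colon \overline H_\sC \to \overline H_\sT$, the square commutes by construction, and $\rho_T$ is automatically a Hopf algebra homomorphism because the two vertical arrows are quotients of Hopf algebras by Hopf ideals.

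The substantive step is therefore to verify $\rho_L(I) \subset \sJ$. The ideal $I$ is generated by elements $\Gamma - \Gamma_1\cdot\Gamma_2$, with $\Gamma = \Gamma_1\cup\Gamma_2$ meeting at a single vertex, the $\Gamma_j$ taken to be the one-vertex irreducible components so that $|\Gamma|=|\Gamma_1|+|\Gamma_2|$. By Lemma \ref{lem2.3b}, $F(\Gamma)$ is in bijection with pairs $(f_1,f_2)\in F(\Gamma_1)\times F(\Gamma_2)$ together with a shuffle of the two decoration sequences. Using the definition \eqref{2.7b} of the shuffle product $\star$,
\[
\rho_L(\Gamma) \;=\; \sum_{(f_1,f_2)} \sum_{k\in\text{shuffle}(\ell_1,\ell_2)} L(k) \;=\; \sum_{(f_1,f_2)} L(f_1)\star L(f_2),
\]
whereas directly from the fact that $\rho_L$ is an algebra map,
\[
\rho_L(\Gamma_1)\cdot \rho_L(\Gamma_2) \;=\; \sum_{(f_1,f_2)} L(f_1)\cdot L(f_2).
\]
Subtracting and invoking the defining generators of $\sK$ from Lemma \ref{lem2.1b} yields
\[
\rho_L(\Gamma - \Gamma_1\cdot\Gamma_2) \;=\; \sum_{(f_1,f_2)} \bigl(L(f_1)\star L(f_2) - L(f_1)\cdot L(f_2)\bigr) \;\in\; \sK \;\subset\; \sJ.
\]
Since $\sJ$ is an ideal in $H_\sT$, the remaining generators of $I$, obtained by multiplying these basic differences by arbitrary elements of $H_\sC$, are likewise sent into $\sJ$.

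I do not anticipate any serious obstacle: the argument is essentially bookkeeping, assembling Lemmas \ref{lem2.1b} and \ref{lem2.3b} with the shuffle identity. The one conceptual remark worth making is that the descended map $\rho_T$ indeed coincides with the forest-valued refinement suggested by the construction of $T(f)$. This is the content of \eqref{2.20b}: since $T(f) \equiv \sum_{f'\sim f} L(f') \pmod{\sJ}$ and $T$ is constant on equivalence classes of flags under $\sim$, summing over equivalence classes recovers $\rho_L(\Gamma) \pmod{\sJ}$, so the abstractly defined descent $\rho_T$ sends $\Gamma$ to $\sum_{[f]} T(f) \pmod{\sJ}$, confirming that the proposition really does upgrade the ladder map $\rho_L$ to the natural forest map $\overline H_\sC \to \overline H_\sT$.
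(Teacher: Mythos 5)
Your proof is correct and follows essentially the same route as the paper: both arguments rest on Lemma \ref{lem2.3b} (flags of a one-vertex-reducible union are shuffles of flags of the pieces) together with the shuffle relation defining $\sK$, and on \eqref{2.20b} to identify the descended map with the forest-valued map $\sum_{[f]} T(f)$. If anything, you make explicit the one step the paper leaves implicit, namely the verification that $\rho_L(I)\subset \sJ$ so that $\rho_L$ genuinely descends to $\overline H_\sC$.
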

Here $\rho_T(\Gamma)$ is the sum $T(f)$ over equivalence classes of flags $f$ as above.
We will barely use $H_\sT$ in the following, and introduced it for completeness and the benefit of the reader used to it.

\subsection{Renormalization Hopf algebras}\label{renhopf}
In a similar manner, one may define homomorphisms
\eq{2.35b}{\rho_\sR: H_\sR \to \overline H_\sT
}
for any one of the renormalization Hopf algebras obtained by imposing
restrictions on external leg structure. For a graph $\Gamma$, let, as before, the
{\it residue} of $\Gamma$, $\textrm{res}(\Gamma)$, be the graph with
no loops obtained by shrinking all its internal edges to a point. What remains are the external half edges connected to that point (cf. section \ref{subsecgr}). Note that "doubling" an edge by putting a two-point vertex in it does not change the residue.

In $\phi_4^4$ theory for example,
graphs have $2m$ external legs, with $m\geq 0$. For a renormalizable theory, there is a finite set of external leg structures $\mathcal R$ such that we obtain
a renormalization Hopf algebra for that set.

For example, for massive $\phi_4^4$ theory, there are three such structures: the four-point vertex, and two two-point vertices, of kinetic type and mass type.

Let us now consider flags associated to core graphs. Such chains $\cdots \Gamma_i\subsetneq\Gamma_{i+1}\subsetneq\cdots\subsetneq\Gamma$ correspond to decorated ladders,
and the coproduct on the level of such ladders is a sum over all possibilities to cut an edge in such a ladder, splitting the chain
\be [\cdots\subsetneq\Gamma_i]\otimes [\Gamma_{i+1}/\!/\Gamma_i\subsetneq\cdots\subsetneq\Gamma/\!/\Gamma_i].\ee

So let us call such an admissible cut renormalization-admissible, if all core graphs $\Gamma_i$, $\Gamma/\!/\Gamma_i$ obtained by the cut have residues in ${\mathcal R}$.

The set of renormalization-admissible cuts is  a subset of the admissible cuts of a core graph, and the coproduct respects this. Hence the renormalization Hopf algebra $H_\sR$ is a quotient
Hopf algebra of the core Hopf algebra.

If we enlarge the set
${\mathcal R}$ to include other local field operators appearing for
example in an operator product expansion we get quotient Hopf algebras
between the core and the renormalization Hopf algebra.

\subsection{External leg structures}\label{extleg}
External edges are usually labeled by data which characterize the amplitude under consideration. Let $\sigma$ be such data. For graphs $\Gamma$ with a given residue ${\textrm{res}}(\Gamma)$, there is a finite set $\tau\in \{\sigma\}_{{\textrm{res}}(\Gamma)}$ of possible data $\tau$. A choice of such data determines a labeling of the corresponding vertex to which a subgraph shrinks. Let $\Gamma/\!/\gamma_\tau$ be that co-graph with the corresponding vertex labeling.

One gets a Hopf algebra structure on pairs
$(\Gamma,\sigma)$ by using the renormalization coproduct $\Delta(\Gamma)=\Gamma^\prime\otimes\Gamma^{\prime\prime}$ by setting $\Delta(\Gamma,\sigma)=\sum_{\tau\in \{\sigma\}_{\textrm{res}(\Gamma^\prime)}}(\Gamma^\prime,\tau)\otimes (\Gamma^{\prime\prime}_\tau,\sigma)$.
We regard the decomposition into external leg structures as a partition of unity and write \be \sum_{\tau\in \{\sigma\}_{\textrm{res}(\Gamma)}}(\Gamma,\tau)=(\Gamma,\One).\ee
In our applications we only need this for (sub)graphs $\gamma$ with $|\textrm{res}(\gamma)|=2$, and the use of these notions will become clear in the applications below.

\section{Combinatorics of blow-ups}\label{seccomb}
We consider $\P^{n-1}$ with fixed homogeneous coordinates
$\sA:=\{A_1,\dotsc,A_n\}$. Suppose given a subset $S\subset
2^{\sA}$. Assume $\sA \not\in S$ and that $S$ has the
property that whenever $\mu_1, \mu_2 \in S$ with $\mu_1\cup \mu_2 \neq
\sA$, then $\mu_1\cup \mu_2 \in S$. For $\mu \in S$ we
write $L_\mu \subset \P^{n-1}$ for the coordinate linear space defined
by $A_i=0,\ i\in \mu$. Write $L(S) := \{L_\mu\ |\ \mu \in S\}$. We see
that
\eq{1.1}{L_{\mu_i} \in L(S);\ L_{\mu_1}\cap  L_{\mu_2} \neq \emptyset
  \Rightarrow L_{\mu_1}\cap L_{\mu_2} \in L(S).
}
We can stratify the set $L(S)$ taking $L(S)_1$ to be the set of all
minimal elements (under inclusion) of $L(S)$. More generally, $L(S)_i$
will be the set of minimal elements in $L(S) -
\coprod_{j=1}^{i-1}L(S)_j$.
\begin{prop}\label{prop1.1}(i) Elements in $L(S)_1$ are all disjoint, so
  we may define $P(S)_1$ to be the variety defined by blowing up
  elements in $L(S)_1$ on $\P^{n-1}$. We do not need to specify an order
  in which to
  perform the blowups. \newline\noindent
(ii) More generally, the strict transforms of elements in $L(S)_{i+1}$ to
the space $P(S)_{i}$ obtained by successively blowing the strict
transform of $L(S)_j,\ j=1,\dotsc,i$ are disjoint, so we may
inductively define $P(S)$ to be the successive blowup of the
$L(S)_i$. \newline\noindent
(iii) Let $E_i \subset P(S)$ correspond to the blowup of $L_{\mu_i},\
i=1,\dotsc,r$.    ($E_i$ is the unique exceptional divisor with image $L_{\mu_i}$ in $P(S)$.)Then $E_1\cap \cdots \cap E_r \neq \emptyset$ if and
only if after possibly reordering, we have inclusions
$L_{\mu_1}\subset\cdots\subset L_{\mu_r}$. \newline\noindent
(iv) The total exceptional divisor $E \subset P(S)$ is a normal
crossings divisor. \newline\noindent
(v) Let $M \subset \P^{n-1}$ be a coordinate linear space. Assume $M
\not\subset L$ for any $L\in L(S)$. Then $M\cap L(S):= \{M\cap L\ |\ L \in
L(S)\}$ satisfies \eqref{1.1}. The strict transform of $M$ in $P(S)$
is obtained by blowing up elements of $M\cap L(S)$ on $M$ as in (i) and (ii)
above.
\end{prop}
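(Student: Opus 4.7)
The plan is to exploit two elementary facts: (a) for coordinate linear spaces, $L_\mu \cap L_\nu = L_{\mu\cup\nu}$, with the convention $L_\sA = \emptyset$, and (b) $P(S)$ is naturally a toric variety since we only blow up torus-invariant centers. I will proceed by induction along the stratification $L(S)_1, L(S)_2,\ldots$. The closure hypothesis on $S$ makes minimality behave well under intersection: if $L_\mu, L_\nu \in L(S)$ and their intersection is nonempty, then $\mu\cup\nu \neq \sA$, so $L_{\mu\cup\nu}\in L(S)$, which is strictly smaller than both.

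For (i), suppose $L_\mu, L_\nu \in L(S)_1$ are distinct with nonempty intersection. The previous remark gives $L_{\mu\cup\nu}\in L(S)$ and $L_{\mu\cup\nu}\subsetneq L_\mu$, contradicting minimality; thus elements of $L(S)_1$ are pairwise disjoint, and blowing them up in any order produces a smooth variety $P(S)_1$. For (ii), assume inductively that $P(S)_i$ has been constructed. Given distinct $L_\mu, L_\nu \in L(S)_{i+1}$, their set-theoretic intersection $L_{\mu\cup\nu}$ either is empty or belongs to $L(S)$ and is strictly smaller than both; by minimality in the remaining stratum, it lies in $L(S)_j$ for some $j\le i$ and has already been blown up. In the toric affine charts on $P(S)_i$, one checks by an explicit equations computation that the strict transforms of two smooth coordinate subvarieties whose intersection has just been replaced by an exceptional divisor become globally disjoint. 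Hence the elements of $L(S)_{i+1}$ may be blown up in any order, yielding $P(S)_{i+1}$.

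For (iii), the ``if'' direction follows from the inductive construction: a chain $L_{\mu_1}\subsetneq\cdots\subsetneq L_{\mu_r}$ is blown up innermost first, and the successive exceptional divisors naturally meet over (a tower of projective bundles above) the maximal $L_{\mu_r}$, so $E_1\cap\cdots\cap E_r\neq\emptyset$. For ``only if'', if the $L_{\mu_i}$ are not totally ordered, pick incomparable $L_{\mu_i}, L_{\mu_j}$; then $L_{\mu_i\cup\mu_j}\subsetneq L_{\mu_i}, L_{\mu_j}$ has been blown up in an earlier stratum, so by the argument of (ii) the strict transforms of $L_{\mu_i}$ and $L_{\mu_j}$ in $P(S)$ are disjoint, whence so are $E_i$ and $E_j$. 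Part (iv) then follows: at any point $x\in E_{i_1}\cap\cdots\cap E_{i_s}$, the corresponding $L_{\mu_{i_k}}$ form a chain by (iii), and in toric affine coordinates the iterated blowup of such a flag is the standard model of a normal crossings divisor.

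For (v), the closure property of $M\cap L(S)$ is immediate from $(M\cap L_\mu)\cap(M\cap L_\nu) = M\cap L_{\mu\cup\nu}$, together with the observation that $M\cap L_{\mu\cup\nu} = \emptyset$ when $L_{\mu\cup\nu} = \emptyset$ or when $M\cap L_{\mu\cup\nu}$ sits inside an already-omitted stratum. The claim about strict transforms is the standard compatibility of blowing up with restriction to a smooth subvariety transverse to the center, applied inductively; transversality is ensured by the hypothesis $M\not\subset L$ for $L\in L(S)$ combined with the coordinate-linear structure. The main technical obstacle throughout is the claim used in (ii) and (iii) that strict transforms of incomparable coordinate linear spaces become disjoint once their (coordinate linear) intersection has been blown up. Once this is verified in a single affine chart, the rest of the proposition reduces to careful bookkeeping against the stratification.
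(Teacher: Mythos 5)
Your proof is correct and follows essentially the same route as the paper: the key point in both is that the intersection of two incomparable elements of $L(S)$ is again in $L(S)$ and lies in a strictly earlier stratum, so it has already been blown up and the strict transforms are thereby separated. The only cosmetic difference is that the paper identifies the open toric orbit of $E_1\cap\cdots\cap E_r$ explicitly as $L_{\mu_1}\times(L_{\mu_2}/L_{\mu_1})\times\cdots\times(L_{\mu_r}/L_{\mu_{r-1}})$ and defers (iv) and (v) to the algorithm of Esnault--Schechtman--Viehweg, where you sketch direct arguments at the same level of detail.
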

\begin{proof}If $L_1\neq L_2 \in L(S)_i$ and $L_1\cap L_2 \neq
  \emptyset$, then $L_1\cap L_2 \in L(S)_j$ for some $j<i$. This means
  that when we get to the $i$-th step, $L_1\cap L_2$ has already been
  blown up, so the strict transforms of the $L_i$ are disjoint,
  proving (ii). For $(iii)$, $\bigcap E_i \neq \emptyset \Leftarrow
  L_{\mu_1}\subset \cdots \subset L_{\mu_r}$ follows from the above
  argument. Conversely, if we have strict inclusions among the
  $L_{\mu_i}$, we may write (abusively) $L_{\mu_i}/L_{\mu_{i-1}}$ for
  the projective space with homogeneous coordinates the    homogeneous
  coordinates on $L_{\mu_i}$ vanishing on $L_{\mu_{i-1}}$. The
  exceptional divisor on the blowup of $L_{\mu_{i-1}} \subset
  L_{\mu_i}$ is identified with $L_{\mu_{i-1}}\times
  (L_{\mu_i}/L_{\mu_{i-1}})$. A straightforward calculation identifies
  nonempty open sets (open toric orbits in the sense to be discussed
  below) in $\bigcap E_i$ and
\eq{1.2a}{L_{\mu_1}\times (L_{\mu_2}/L_{\mu_{1}})\times\cdots\times
(L_{\mu_r}/L_{\mu_{r-1}})
}
The remaining parts of the proposition follow from the algorithm in
  \cite{ESV}.
\end{proof}
For us, sets $S$ as above will arise in the context of graphs. Recall in \ref{subsecgr} we defined the notion of core graph.
\begin{prop}\label{prop1.4}Let $\Gamma$ be a graph, and let
  $\Gamma_1, \Gamma_2 \subset \Gamma$ be core subgraphs. Then the union
  $\Gamma_1\cup \Gamma_2$ is a core subgraph.
\end{prop}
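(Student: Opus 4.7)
The plan is to use the characterization that a graph $G$ is core if and only if every edge of $G$ lies in some cycle (a subgraph with $h_1=1$). Granting this, the proposition becomes almost tautological: a cycle in $\Gamma_1$ through $e$ persists as a cycle in $\Gamma_1\cup\Gamma_2$ through $e$.

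To establish the characterization, I would invoke the Euler formula $h_1(G)=|E(G)|-|V(G)|+h_0(G)$ and analyze what happens when a single edge $e$ is deleted (together with any isolated vertex left behind, as in the definition of $\Gamma-e$ in subsection \ref{subsecgr}). If $e$ is on a cycle, both endpoints of $e$ have valence $\ge 2$, so no vertex is stripped off; we lose one edge and the component count is unchanged, giving $|\Gamma-e|=|\Gamma|-1$. If $e$ is a bridge, removing it either splits one component into two (when both endpoints still carry other edges) or strips off a pendant vertex, and in either case the Euler count leaves $h_1$ unchanged. Hence $|\Gamma-e|<|\Gamma|$ for every $e$ (i.e.\ $\Gamma$ is core) exactly when every edge lies on a cycle.

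With the characterization in hand, the proof is one line. Let $e\in E(\Gamma_1\cup\Gamma_2)=E(\Gamma_1)\cup E(\Gamma_2)$, and without loss of generality suppose $e\in E(\Gamma_1)$. Since $\Gamma_1$ is core, there is a cycle $C\subset\Gamma_1$ with $e\in E(C)$. But $E(C)\subset E(\Gamma_1)\subset E(\Gamma_1\cup\Gamma_2)$, so $C$ sits inside $\Gamma_1\cup\Gamma_2$ as a cycle through $e$. Hence every edge of $\Gamma_1\cup\Gamma_2$ lies on a cycle, and $\Gamma_1\cup\Gamma_2$ is core.

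There is really no obstacle here; the only care needed is the bookkeeping of isolated vertices in the definition of $\Gamma-e$ when setting up the characterization, and this is handled by the case split above. Alternatively, one can quote the remark at the end of subsection \ref{subsecgr} that a core graph is a union of cycles, and observe that a union of unions of cycles is again a union of cycles containing every edge, which again gives the result directly.
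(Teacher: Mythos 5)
Your proposal is correct and rests on the same mechanism as the paper's own proof: the Euler--Poincar\'e analysis showing that deleting an edge fails to drop $h_1$ exactly when the graph disconnects or a vertex is stripped off, i.e.\ exactly when the edge lies on no cycle. The paper runs this case analysis directly on an edge of $\Gamma_1\cup\Gamma_2$, while you package it once as the characterization ``core $\Leftrightarrow$ every edge lies on a cycle'' and then conclude in one line; this is a reorganization rather than a different argument, and both are sound.
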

\begin{proof}Removing an edge increases the Euler-Poincar\'e
  characteristic by $1$. If $h_1$ doesn't drop, then either $h_0$
  increases (the graph disconnects when $e$ is removed) or $e$ has a
  unary vertex so removing $e$ drops the number of vertices. Suppose $e$ is
  an edge of $\Gamma_1$ (assumed core). Then $e$ cannot have a unary
  vertex. If, on the other hand, removing $e$ disconnects $\Gamma_1\cup
  \Gamma_2$, then since the $\Gamma_i$ are core what must happen is
  that each $\Gamma_i$ has precisely one vertex of $e$. But this would
  imply that $\Gamma_1$ is not core, a contradiction.
\end{proof}
To a graph $\Gamma$ we may associate the projective space $\P(\Gamma)$
with homogeneous coordinates $A_e,\ e\in E(\Gamma)$ labeled by the
edges of $\Gamma$.
Let $\Gamma$ be a core graph. A {\it coordinate linear space}
$L\subset \P(\Gamma)$ is a non-empty linear space defined by some
subset of the
homogeneous coordinate functions, $L:A_{e_1}=\cdots =
A_{e_p}=0$. Define $L(\Gamma)$ to be the set of coordinate linear
spaces in $\P(\Gamma)$ such that the corresponding set of edges
$e_{i_1},\dotsc,e_{i_p}$ is the edge set of a core subgraph $\Gamma'
\subset \Gamma$. It follows from proposition \ref{prop1.4} that
$L(\Gamma)$ satisfies condition \eqref{1.1}, so the iterated blowup
\eq{3.3b}{\pi: P(\Gamma) \to \P(\Gamma)
}
as in proposition \ref{prop1.1} is defined. Define
\eq{1.4}{\sL = \bigcup_{L\in L(\Gamma)}L \subset \P(\Gamma);\quad E =
  \bigcup E_L = \pi^{-1}\sL.
}
\begin{lem}\label{lem1.5} Suppose $\P(\Gamma) = \P^{n-1}$ with
  coordinates $A_1,\dotsc,A_n$. Let $L \subset \P(\Gamma)$ be defined
  by $A_1=\cdots=A_p=0$. Let $\pi_L:P_L \to \P(\Gamma)$ be the blowup
  of $L$. Then the exceptional divisor $E\subset P_L$ is identified
  with $\P^{p-1}\times L$. Further $A_1,\dotsc,A_p$ induce coordinates
  on the vertical fibres $\P^{p-1}$ and $A_{p+1},\dotsc,A_n$ give
  homogeneous coordinates on $L$.
\end{lem}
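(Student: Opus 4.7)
The plan is to use the standard realization of the blowup of a linear subspace of projective space as the closure of the graph of a linear projection. Concretely, I would identify $P_L$ with the subvariety of $\P^{n-1}\times \P^{p-1}$ (writing $Y_1,\dotsc,Y_p$ for homogeneous coordinates on the second factor) cut out by the Pl\"ucker-type equations $A_iY_j - A_jY_i = 0$ for $1\le i,j\le p$. This variety is the closure of the graph of the rational map $[A_1{:}\cdots{:}A_n]\mapsto [A_1{:}\cdots{:}A_p]$, which is regular exactly off $L$. That this coincides with the blowup $\pi_L\colon P_L\to \P^{n-1}$ is standard and can be checked either by verifying that the $A_i$ with $i\le p$ generate an invertible ideal sheaf on $P_L$, or by directly computing $\mathrm{Proj}$ of the Rees algebra of the ideal $(A_1,\dotsc,A_p)$.

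Next, for the identification of the exceptional divisor, I would compute the fibres of $\pi_L$ over points of $L$. At $q = [0{:}\cdots{:}0{:}a_{p+1}{:}\cdots{:}a_n]\in L$ the Pl\"ucker equations impose no constraint on $Y$, so the fibre is all of $\P^{p-1}$. Off $L$ the map is an isomorphism, so $E = \pi_L^{-1}(L)$ set-theoretically, and the map $(\pi_L,\text{pr}_2)\colon E\to L\times \P^{p-1}$ is a bijection. That it is an isomorphism of varieties follows from the explicit inverse $(q,[y])\mapsto (q,[y])\in \P^{n-1}\times \P^{p-1}$, which lies on $P_L$ since $A_i(q)=0$ for $i\le p$ trivially satisfies the Pl\"ucker relations.

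For the coordinate assertions, on the open set $\{Y_j\neq 0\}$ of the second factor the Pl\"ucker relations give $A_i = (Y_i/Y_j)A_j$ on $P_L$, so the $Y_i$ are literally homogeneous coordinates on the vertical $\P^{p-1}$ factor of $E$, and the $A_i$ with $i\le p$ induce them in the precise sense that the ratios $A_i/A_j$ equal $Y_i/Y_j$ wherever both are defined (equivalently, the $A_i$ span the conormal bundle of $E$ in $P_L$ and thus furnish a basis of $H^0(\P^{p-1},\sO(1))$ on each fibre). The statement that $A_{p+1},\dotsc,A_n$ give homogeneous coordinates on $L$ is immediate from the defining equations $L=V(A_1,\dotsc,A_p)\subset \P^{n-1}$, which make $L$ into $\P^{n-p-1}$ with those coordinates.

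The main subtlety is just the correct interpretation of ``induce coordinates'' on $E$: the $A_i$ with $i\le p$ vanish identically as functions on $E$, so one must read the claim either as an equality of ratios $A_i/A_j = Y_i/Y_j$ on the blowup, or equivalently as saying that the $A_i$ give rise, via their normal directions along $L$, to the standard homogeneous coordinates $Y_i$ on the projectivized normal bundle. Once this is clarified, the lemma follows directly from the explicit Pl\"ucker presentation above.
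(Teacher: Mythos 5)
Your proposal is correct and follows essentially the same route as the paper: the paper's proof also uses the linear projection $[a_1,\dotsc,a_n]\mapsto[a_1,\dotsc,a_p]$ away from $L$, extends it to a map $f$ on $P_L$, and identifies $E\cong L\times\P^{p-1}$ via $\pi_L|_E\times f$. Your graph-closure/Pl\"ucker presentation is just the explicit realization of that same extension, with some extra (and welcome) care about what ``induce coordinates'' means on the fibres.
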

\begin{proof} This is standard. One way to see it is to use the map $\P^{n-1}-L \to \P^{p-1},\ [a_1,\dotsc,a_n] \mapsto [a_1,\dotsc,a_p]$. (Here, and in the sequel, $[\cdots]$ denotes a point in homogeneous coordinates.) This extends to a map $f$ on $P_L$:
\eq{}{\begin{CD} E @> \inj >> P_L @> f >> \P^{p-1} \\
@VV\pi_L|_E V  @VV\pi_L V \\
L @> \inj >> \P^{n-1}.
\end{CD}
}
The resulting map $\pi_L|_E \times f: E \cong L\times \P^{p-1}$.
\end{proof}
It will be helpful to better understand the geometry of
$P(\Gamma)$. Let $\G_m = \Spec \Q[t,t^{-1}]$ be the standard one
dimensional algebraic torus. Define $T=\G_m^n/\G_m$ where the quotient
is taken with respect to the diagonal embedding. For all practical
purposes, it suffices to consider complex points
\eq{}{T(\C) = \C^{\times n}/\C^\times \cong \C^{\times n-1}.
}
A {\it toric variety} $P$ is an equivariant (partial) compactification of
$T$. In other words, $T\subset P$ is an open set, and we have an
extension of the natural group map $m$
\eq{}{\begin{CD}T\times T @>\subset >> T\times P \\
@V m VV @V\bar m VV \\
T @>\subset >> P.
\end{CD}
}
For example, $\P(\Gamma)$ is a toric variety for a torus $T(\Gamma)$. Canonically, we may write $T(\Gamma) = (\prod_{e\in \text{Edge}(\Gamma)}\G_m)\big/ \G_m.$
More important
for us:
\begin{prop}\label{prop1.6} (i) $P(\Gamma)$ is a toric variety for
  $T=T(\Gamma)$. \newline\noindent
(ii) The orbits of $T$ on $P(\Gamma)$ are in $1-1$ correspondence with
pairs $(F,\ \Gamma_p \subsetneq \cdots \subsetneq \Gamma_1\subsetneq
\Gamma)$. Here $F \subset \Gamma$ is a (possibly empty) subforest
(subgraph with
$h_1(F)=0$) and the $\Gamma_i$ are core subgraphs of $\Gamma$. We
require that the image of $F_i:=F\cap \Gamma_i$ in
$\Gamma_i/\!/\Gamma_{i+1}$ be a subforest for each $i$.
(cf. \eqref{1.2a}).   The orbit associated to such a pair is
canonically identified with the open orbit in the toric variety
$\P(\Gamma_p/\!/F_p)\times \P((\Gamma_{p-1}/\!/\Gamma_p)/\!/F_{p-1})\times \cdots
\times \P((\Gamma/\!/\Gamma_1)/\!/F)$.
\end{prop}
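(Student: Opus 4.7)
The plan is to establish (i) via standard toric geometry and then derive (ii) by stratifying the $T(\Gamma)$-orbits on $P(\Gamma)$ according to which exceptional divisors they lie in, combining proposition \ref{prop1.1}(iii) with iterated applications of lemma \ref{lem1.5}.

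For (i), I would note that $\P(\Gamma)$ is a toric variety for $T(\Gamma) = \G_m^{E(\Gamma)}/\G_m$ acting by coordinatewise rescaling, and each coordinate linear subspace $L_{\Gamma'}$ (for $\Gamma' \in L(\Gamma)$) is cut out by a monomial ideal, hence is torus-invariant. A standard fact of toric geometry says that blowing up a toric variety along a torus-invariant closed subscheme produces a toric variety for the same torus (at the fan level, a star subdivision). Since $P(\Gamma) \to \P(\Gamma)$ is a finite sequence of such blow-ups by the construction in proposition \ref{prop1.1}, $P(\Gamma)$ inherits a $T(\Gamma)$-action.

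For (ii), I would partition the $T$-orbits on $P(\Gamma)$ by the collection of exceptional divisors containing them. Orbits disjoint from $E$ correspond, via $\pi: P(\Gamma) - E \cong \P(\Gamma) - \sL$, to coordinate strata $O_S := \{[A_e] : A_e = 0 \iff e \in S\}$ on $\P(\Gamma)$ whose closures avoid $\sL$. The latter condition is $E(\Gamma') \not\subset S$ for every core $\Gamma'$, and since every cycle is itself core, this is equivalent to $S = F$ being a subforest of $\Gamma$. This produces the orbits labeled $(F, \emptyset)$, each identified with the open $T$-orbit of $\P(\Gamma/\!/F)$. For orbits lying in some $\bigcap_{i=1}^p E_{\Gamma_i}$ but in no further exceptional divisor, proposition \ref{prop1.1}(iii) forces the $\Gamma_i$ into a chain $\Gamma_p \subsetneq \cdots \subsetneq \Gamma_1$, and iterated application of lemma \ref{lem1.5} exhibits $\bigcap E_{\Gamma_i}$ as a tower of projective bundles whose open toric stratum is the open $T$-orbit of $\P(\Gamma_p) \times \P(\Gamma_{p-1}/\!/\Gamma_p) \times \cdots \times \P(\Gamma_1/\!/\Gamma_2) \times \P(\Gamma/\!/\Gamma_1)$. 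A general (not necessarily open) sub-orbit in this stratum corresponds to extra coordinate vanishing in each factor, and avoids all further exceptional divisors exactly when the vanishing edges in each factor form a subforest of that factor's quotient graph.

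The remaining step is to package the per-factor subforest data as a single subforest $F \subset \Gamma$ such that $F_i := F \cap \Gamma_i$ has image a subforest of $\Gamma_i/\!/\Gamma_{i+1}$ for every $i$. The partition $E(\Gamma) = (E(\Gamma) \setminus E(\Gamma_1)) \sqcup (E(\Gamma_1) \setminus E(\Gamma_2)) \sqcup \cdots \sqcup (E(\Gamma_{p-1}) \setminus E(\Gamma_p)) \sqcup E(\Gamma_p)$ lets one assemble the per-factor subforests into $F$, and conversely decompose any such $F$ into compatible per-factor pieces. The main obstacle I anticipate is verifying this equivalence rigorously: the nontrivial direction is showing that the image condition on $F$ recovers genuine subforests in each quotient, which requires tracking how cycles in $F$ interact with the contractions $\Gamma_{i-1} \to \Gamma_{i-1}/\!/\Gamma_i$, where one contracts a cycle-containing (core) subgraph.
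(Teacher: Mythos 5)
Your strategy runs parallel to the paper's: part (i) is handled identically (blowups along torus-invariant centers stay toric), and for part (ii) both arguments reduce to deciding when an intersection $E_{\Gamma_1}\cap\cdots\cap E_{\Gamma_p}\cap\bigcap_{e\in F}D_e$ of exceptional divisors and strict transforms of coordinate divisors is nonempty. The paper phrases this through the fan --- it lists the rays $\R^{\ge 0}e$ and $\R^{\ge 0}e(\Gamma')$ and asks which subsets span cones, which is literally the nonemptiness condition \eqref{1.14} --- while you stratify orbits directly by the set of $E_{\Gamma'}$ containing them and then locate sub-orbits inside the product \eqref{1.2a}; either way the orbit comes out as the open orbit of $\P(\Gamma_p/\!/F_p)\times\cdots\times\P((\Gamma/\!/\Gamma_1)/\!/F)$. (One slip of language: you want the strata $O_S$ themselves, not their closures, to miss $\sL$; the closure $L_S$ meets $L_\mu$ whenever $S\cup\mu\neq E(\Gamma)$, so the closure condition would be far too strong. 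Your next sentence states the correct condition, so this is cosmetic.)

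The genuine gap is the sentence asserting that a sub-orbit ``avoids all further exceptional divisors exactly when the vanishing edges in each factor form a subforest of that factor's quotient graph.'' That equivalence \emph{is} the content of (ii), and you neither prove it nor correctly locate where the work lies. The ``packaging'' step you flag as the main obstacle is in fact immediate: the edge set of $\Gamma_i/\!/\Gamma_{i+1}$ is $E(\Gamma_i)\setminus E(\Gamma_{i+1})$, so the image of $F_i=F\cap\Gamma_i$ in $\Gamma_i/\!/\Gamma_{i+1}$ is simply the subgraph on $F\cap(E(\Gamma_i)\setminus E(\Gamma_{i+1}))$, and these pieces partition $F$ --- there is no interaction to track. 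What does require an argument is the separation statement: if the image of $F_i$ in $\Gamma_i/\!/\Gamma_{i+1}$ contains a cycle $c$, then $\Gamma':=\Gamma_{i+1}\cup\tilde c$ (with $\tilde c$ the edges of $\Gamma_i$ mapping to $c$) satisfies $\Gamma'/\!/\Gamma_{i+1}=c$, hence is core by closure under extension, hence $L_{\Gamma'}$ is among the centers blown up in constructing $P(\Gamma)$, and that blowup separates $\bigcap_j E_{\Gamma_j}$ from $\bigcap_{e\in c}D_e$, pushing the candidate orbit into the additional divisor $E_{\Gamma'}$; conversely, when every image is a forest one must check the intersection is actually nonempty and compute its open part. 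The paper does exactly this in the model case $E_1\cap D_1$ (the tadpole case and the case where the image of $e_1$ in $\Gamma/\!/\Gamma_1$ is a loop) before leaving the general case to the reader; your write-up needs at least that model computation, since it is the one place where the hypothesis that the blown-up centers are precisely the \emph{core} subgraphs is used.
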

\begin{proof}A general reference for toric varieties is \cite{F}. The fact (i) that $P(\Gamma)$ is a toric variety follows inductively from the fact that the blowup of an invariant ideal $I$ in a toric variety is toric. Indeed, the torus acts on $I$ and hence on the blowup $\text{Proj}(I)$.

We recall some toric constructions. Let $N= \Z^{\text{Edge}(\Gamma)}/\Z$, and let $M=\hom(N,\Z)$. We have canonically $T = \Spec \Q[M]$ where $\Q[M]$ is the group ring of the lattice $M$. A {\it fan} (op. cit., 1.4, p. 20) $\sF$ is a finite set of convex cones in $N_\R = N\otimes \R$ satisfying certain simple axioms. To a cone $C\subset N_\R$ one associates the dual cone (op. cit. p. 4)
\eq{}{C^\vee = \{m \in M_\R\ |\ \langle m,c\rangle \ge 0, \forall c \in C\}
}
(resp. the semigroup $C^\vee_\Z = C^\vee \cap M$).
The toric variety $V(\sF)$ associated to the fan $\sF$ is then a union of the affine sets $U(C):= \Spec \Q[C^\vee_\Z]$.
For example, our $N$ has rank $n-1$. There are $n$ evident elements
$e$ determined by the $n$ edges of $\Gamma$. Let $C_e = \{\sum_{e'\neq
  e} r_{e'}e'\ |\ r_{e'} \ge 0\}$ be the cone spanned by all edges
except $e$. The spanning edges for $C_e$ form a basis for $N$ which
implies that $U(C_e) \cong \A^{n-1}$. Since all the coordinate rings
lie in $\Q[M]$ (i.e. $T(\Gamma) \subset U(C_e)$), one is able to glue
together the $U(C_e)$. The resulting toric variety associated to the
fan $\{C_e\ |\ e\in \text{Edge}(E)\}$ is canonically identified with
$\P(\Gamma)$.
\begin{rmk}\label{rmk1.7} Our toric varieties will all be smooth
  (closures of orbits in smooth
toric varieties are smooth), which is equivalent (\cite{F},
\S 2) to the condition that cones in the fan are all generated by
subsets of bases for the lattice $N$. Faces of these cones are in
$1-1$ correspondence with subsets of the generating set.
\end{rmk}
In general, the orbits of the torus action are in $1-1$ correspondence with the cones $C$ in the fan (op. cit. 3.1, p.51). The subgroup of $N$ generated by $C\cap N$ corresponds to the subgroup of $T$ which acts trivially on the orbit. For example, in the case of projective space $\P^{n-1}$, there are $n$ cones $C_e$ of dimension $n-1$ corresponding to the $n$ fixed points $(0,\dotsc,1,\dotsc,0) \in \P^{n-1}$. For any $S\subsetneq \text{Edge}(\Gamma)$, the cone $C(S)$ spanned by the edges of $S$ corresponds to the orbit $
\{(\dotsc,x_e,\ldots)\ |\ x_e=0 \Leftrightarrow e\in S\}\subset \P^{n-1}$.
Let $L:A_e=0, e\in \Gamma'\subset \Gamma$ be a coordinate linear space in $\P(\Gamma)$ associated to a subgraph $\Gamma'\subset \Gamma$. It follows from lemma \ref{lem1.5} that the exceptional divisor $E_L \subset P_L$ in the blowup of $L$ can be identified with
\eq{1.11}{E_L= \P(\Gamma')\times \P(\Gamma/\!/\Gamma').
}
Let $e(\Gamma') = \sum_{e\in \Gamma'} e \subset N_\R$, and write
$\tau(\Gamma') = \R^{\ge 0}\cdot e(\Gamma')$.  The subgroup $\Z\cdot e(\Gamma') \subset N$ determines a $1$-parameter subgroup $G(\Gamma') \subset T = \Spec \Q[M]$. It follows from \eqref{1.11} that $G(\Gamma')$ acts trivially on $E_L$. One has $\tau(\Gamma') \subset C'\subset C_e$ for all $e\not\in \Gamma'$, where $C'$ is the cone generated by the edges of $\Gamma'$.  For all $e' \in \Gamma'$ we define a subcone $C_{e,e'} \subset C_e$ to be spanned by $\tau(\Gamma')$ together with all edges of $\Gamma$ except $e, e'$. The fan for $P_L$ is then
\eq{}{\{C_e,\ e\in \Gamma'\}\cup \{ \ C_{e,e'},\ e\not\in \Gamma', e' \in \Gamma'\}.
}
Note that $C_e,\ e\not\in \Gamma'$ is not a cone in the fan for
$P_L$. More generally, let $\sF$ be the fan for
$P(\Gamma)$. Certainly, $\sF$ will contain as cones the half-lines
$\tau(\Gamma')$ for all core subgraphs $\Gamma'\subset \Gamma$ as well
as the $\R^{\ge 0}e,\ e\in \Gamma$.  but we must make precise which
subsets of this set of half-lines span higher dimensional cones in
$\sF$. By general theory, the cones correspond to the nonempty
orbits. In other words,
\eq{1.13}{\R^{\ge 0}e_1,\dotsc,\R^{\ge 0}e_p,\R^{\ge
  0}e(\Gamma_1),\dotsc,\R^{\ge 0}e(\Gamma_q)
}
span a cone in $\sF$ if
and only if the intersection
\eq{1.14}{E_1\cap\cdots \cap E_q\cap
D_1\cap\cdots\cap D_p \neq \emptyset,
}
where $E_i \subset P(\Gamma)$
is the exceptional divisor corresponding to $L(\Gamma_i)$ and $D_j
\subset P(\Gamma)$ is the strict transform of the coordinate divisor
$A_{e_i}=0$ in $\P(\Gamma)$. To understand \eqref{1.14}, consider the
simple case $E_1\cap D_1$. We have a core subgraph $\Gamma_1\subset
\Gamma$, and an edge $e_1$ of $\Gamma$. We know by lemma \ref{lem1.5}
that $E_1 \cong \P(\Gamma_1)\times \P(\Gamma/\!/\Gamma_1)$. If $e_1$ is
an edge of $\Gamma_1$, then $D_1\cap E_1=\P(\Gamma_1/\!/e_1)\times
\P(\Gamma/\!/\Gamma_1)$. Otherwise
$$D_1\cap E_1=\P(\Gamma_1)\times
\P((\Gamma/\!/\Gamma_1)/\!/e_1).
$$
One (degenerate) possibility is that $e_1$ is an edge of $\Gamma_1$
which forms a loop (tadpole). In this case, $e_1$ is itself a core
subgraph of $\Gamma$, and the divisor $D_1$ should be treated as one
of the exceptional divisors $E_i$. Thus, we omit this
possibility. Another possibility is that $e_1 \not\in \Gamma_1$, but
that the image of $e_1$ in $\Gamma/\!/\Gamma_1$ forms a loop. In this
case, $\Gamma_2:=\Gamma_1\cup e_1$ is a core subgraph, so the linear
space $L_2:A_e=0,\ e\in \Gamma_2$ gets blown up in the process of
constructing $P(\Gamma)$. But blowing $L_2$ separates $E_1$ and $D_1$,
so the intersection of the strict transforms of $D_1$ and $E_1$ in
$P(\Gamma)$ is empty. The general argument to show that \eqref{1.14}
is empty if and only if the conditions of (ii) in the proposition are
fulfilled is similar and is left for the reader. Note that the case
where there are no divisors $D_i$ follows from proposition
\ref{prop1.1}(iii).
\end{proof}
We are particularly interested in orbits corresponding to filtrations
by core subgraphs $\Gamma_p\subsetneq \cdots \subsetneq
\Gamma_1\subsetneq \Gamma$. Let $V \subset P(\Gamma)$ be the closure
of this orbit. We want to exhibit a toric neighborhood of $V$ which
retracts onto $V$ as a vector bundle of rank $p$. As in the proof of
proposition \ref{prop1.6} we have $e(\Gamma_i):=\sum_{e\in \Gamma_i}
e$. The cone $C$ spanned by the $e(\Gamma_i)$ lies in the fan
$\sF$. For cones $C' \in \sF$ we write $C'>C$ if $C$ is a subcone of
$C'$. By the general theory, this will happen if and only if $C\subset
C'$ is a subcone which appears
on the boundary of $C'$. The orbit corresponding to $C'$ will then
appear in the closure of the orbit for $C$.
\begin{prop}\label{prop1.8} With notation as above, Let $\sF_C \subset
  \sF$ be the
  subset of cones $C'$ such that we have $C'\le C''\ge C$ for some
  $C'' \in \sF$. Write $P^0 \subset P(\Gamma)$ for the open toric subvariety
  corresponding to the subfan $\sF_C\subset \sF$. We have $V\inj P^0
  \subset P(\Gamma)$. Further there is a retraction $\pi: P^0 \to V$
  realizing $P^0$ as a rank $p$ vector bundle over $V$ which is
  equivariant for the action of the torus $T$.
\end{prop}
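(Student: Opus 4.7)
The plan is to recognize this as a general fact about smooth toric varieties: for a simplicial cone $C$ in the fan of a smooth toric variety, the open subscheme corresponding to cones that are faces of cones containing $C$ retracts onto the orbit closure of $C$ as a $T$-equivariant vector bundle of rank $\dim C$. First I would verify that $\sF_C$ is a subfan of $\sF$: closure under faces is automatic, while for intersections, if $C'_1\le D_1$ and $C'_2\le D_2$ with $D_i\ge C$, then $D_1\cap D_2$ is a common face of $D_1,D_2$ by the fan axioms and still contains $C$, so $D_1\cap D_2\ge C$ and $C'_1\cap C'_2\le D_1\cap D_2$ lies in $\sF_C$. The inclusion $V\inj P^0$ follows from the orbit-cone correspondence (the orbits in $V$ correspond to cones $\ge C$, all of which lie in $\sF_C$), and the fan of $V$ is $\sF_V:=\{\bar D:=(D+\langle C\rangle_\R)/\langle C\rangle_\R\,|\,D\in\sF,\ D\ge C\}$ in $(N/N_C)_\R$, where $N_C = N\cap\langle C\rangle_\R$ (\cite{F}, Ch.~3).

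Next, I would invoke smoothness (remark \ref{rmk1.7}). Each $D\ge C$ is generated by part of a $\Z$-basis of $N$, and since $C$ is a face of $D$, the $p$ generators $e(\Gamma_1),\dotsc,e(\Gamma_p)$ of $C$ sit among the generators of $D$; call the remaining ones $v_1,\dotsc,v_k$. Their images in $N/N_C$ form a $\Z$-basis of the sublattice generated by $\bar D$, and dualizing and passing to group rings yields a canonical product decomposition
\[
U(D)=\Spec\Q[D^\vee\cap M]\;\cong\;\A^p\times U(\bar D),
\]
under which $V\cap U(D)$ corresponds to $\{0\}\times U(\bar D)$. The second projection is toric for the quotient $T\twoheadrightarrow T/T_C$, and the rank $p$ subtorus $T_C\subset T$ acts on the fiber $\A^p=\Spec\Q[C^\vee\cap M]$ through $p$ characters with linearly independent weights dual to the $e(\Gamma_i)$.

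Finally I would glue the local trivializations. The $U(D)$ for maximal $D\ge C$ in $\sF_C$ cover $P^0$, and $U(D_1)\cap U(D_2)=U(D_1\cap D_2)$. On each overlap the two product structures identify the fibers with the same $T_C$-space $\Spec\Q[C^\vee\cap M]$, so the transition functions are $T$-equivariant automorphisms of the trivial rank $p$ bundle; because $T_C$ acts with linearly independent weights these are forced to be diagonal with monomial entries in the character ring of $T_V$. This exhibits $\pi:P^0\to V$ as the claimed $T$-equivariant rank $p$ vector bundle. The main obstacle is the lattice bookkeeping in the middle step: one must check that the decomposition $\langle D\rangle_\R=\langle C\rangle_\R\oplus\langle v_1,\dotsc,v_k\rangle_\R$ forced by smoothness is compatible across overlaps, so that the local trivializations glue to a globally defined equivariant bundle. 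The remaining verifications are routine toric chart computations.
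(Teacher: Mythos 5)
Your proposal is correct, but it establishes the fibration by a different mechanism than the paper. The paper also passes to the quotient lattice $N''=N/(\Z e(\Gamma_1)+\cdots+\Z e(\Gamma_p))$ and the image fan for $V$, but then it invokes the general fibration criterion of Oda (op.\ cit., p.~58, Prop.~1.33): the map of fans $\sF_C\to\sF''$ induces an equivariant fibration with fibre the toric variety of the cone spanned by the $e(\Gamma_i)$ (i.e.\ $\A^p$) provided each cone of $\sF''$ has a \emph{unique minimal} cone of $\sF_C$ above it. The paper's real work is the verification of that uniqueness: if two cones $C'\le C''\ge C$ and $C_1'\le C_1''\ge C$ have the same image, one writes relations $f_i+\sum a_{ij}e(\Gamma_j)=g_i+\sum b_{ij}e(\Gamma_j)$ among generators and uses the fan axiom (two cones meet in a common face) to force $f_i=g_i$. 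You instead build the bundle by hand: the smooth charts $U(D)\cong\A^p\times U(\bar D)$ for $D\ge C$, glued along overlaps. These are two packagings of the same content --- the ``lattice bookkeeping'' you flag as the main obstacle is exactly what the paper's unique-minimal-cone argument certifies. Your gap is fillable directly, though: if $m_i,m_i'\in M$ are the elements dual to $e(\Gamma_i)$ with respect to the generating sets of $D_1$ and $D_2$, then $m_i-m_i'$ vanishes on all the $e(\Gamma_j)$ and on the common generators of $D_1\cap D_2$, so the transition function $\chi^{m_i'-m_i}$ is a character of $T/T_C$ that is regular and invertible on $U(\overline{D_1\cap D_2})$; hence the local trivializations do glue with monomial diagonal transition functions. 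Both arguments conclude identically, with the observation that an equivariant automorphism of $\A^p$ for the diagonal $\G_m^p$-action is $x_i\mapsto c_ix_i$, so the $\A^p$-fibration is a vector bundle. Your route is more self-contained (no appeal to Oda) at the cost of the explicit chart computation; the paper's is shorter but leans on the cited criterion.
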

\begin{proof}One has the following functoriality for toric
  varieties \cite{F}, \S 1.4. Suppose $\phi: N' \to N''$ is a
  homomorphism of lattices (finitely generated free abelian
  groups). Let $\sF', \sF''$ be fans in $N_\R', N_\R''$. Suppose for each cone
  $\sigma' \in \sF'$ there exists a cone $\sigma'' \in \sF''$ such that
  $\phi(\sigma') \subset \sigma''$. Then there is an induced map on
  toric varieties $V(\sF') \to V(\sF'')$.
Let
  $N'=N=\Z^n/\Z$ as above, and $N'' =
  N'/(\Z e(\Gamma_1)+\cdots+\Z e(\Gamma_p))$. One has the evident
  surjection  $\phi: N' \surj N''$. We take as fan $\sF'=\sF_C \subset
  \sF$. The
  closure $V$ of the orbit corresponds to the fan $\sF''$ in $N_\R''$
  given by the images of all cones $C''\ge C$ (op. cit. \S 3.1). Such
  a $C''$ is generated by $e(\Gamma_1),\dotsc,e(\Gamma_p),
  f_1,\dotsc,f_q$, and there are no linear relations among these
  elements (remark \ref{rmk1.7}). A subcone $C'\le C''$ is generated
  by a subset
  $e(\Gamma_{i_1}),\dotsc,e(\Gamma_{i_a}),f_{1},\dotsc,f_{b}$. The
  image is simply the cone in $N_\R''$ generated by the images of the
  $f$'s. If we have another cone $C_1'\le C_1''\ge C$ in $\sF'$ with
  the same image in
  $\sF''$, it will have generators say $g_{1},\dotsc,g_b$ together
  with some of the $e(\Gamma_i)$'s. Reordering the $g$'s, we find that
  there are relations
\eq{}{f_i + \sum a_{ij}e(\Gamma_j) = g_i + \sum b_{ij}e(\Gamma_j)
}
with $a_{ij}, b_{ij} \ge 0$. It follows that the cones in $\sF$ spanned by
$f_i, e(\Gamma_1),\dotsc,e(\Gamma_p)$ and
$g_i,e(\Gamma_1),\dotsc,e(\Gamma_p)$ meet in a subset strictly larger
that the cone spanned by the $e(\Gamma_j)$. By the fan axioms, the
intersection of two cones in a fan is a common face of both, so these
two cones coincide, which implies $f_i=g_i$. In particular, for each
cone in $\sF''$, there is a unique minimal cone in $\sF'$ lying over
it. This is the hypothesis for \cite{O}, p. 58, proposition 1.33. One
concludes that the map $\pi: P^0 \to V$ induced by the map $\sF' \to \sF''$
is an equivariant fibration, with fibre the toric bundle associated to
the fan generated by the $e(\Gamma_i),\ 1\le i\le p$. This toric
variety is just affine $p$-space, so we get an equivariant
$\A^p$-fibration over $V$. Any such fibration is necessarily a vector
bundle with structure group $\G_m^p$. Indeed, this amounts to saying
that any automorphism of the polynomial ring $k[x_1,\dotsc,x_p]$ which
intertwines the diagonal action of $\G_m^p$ is necessarily of the form
$x_i \mapsto c_ix_i$ with $c_i \in k^\times$.
\end{proof}
\begin{rmk}\label{rmk1.9} We will need to understand how these
  constructions are
  compatible. Let $V$ be a closed orbit corresponding to a cone $C$ as
  above, and let $V_1 \subset
  V$ be a smaller closed orbit corresponding to a larger cone
  $C_1>C$. (The correspondence between cones and orbits is
  inclusion-reversing.) As above we have a toric variety $V_1\subset P_1^0
  \subset P(\Gamma)$ and a retraction $\pi_1:P_1^0 \to V_1$. The fan $\sF_1'$
  for $P_1^0$ is given by the set of cones $C_1'$ in $\sF$ such that
\eq{}{C_1' \le C'' \ge C_1\ (>C).
}
It follows that $\sF_1' \subset \sF' = \sF_C$, so $P_1^0 \subset P^0$
is an open subvariety. Let $V^0 \subset V$ be the image of the
composition $P_1^0 \subset P^0 \xrightarrow{\pi} V$. Then $V^0$ is the
open toric subvariety of $V$ corresponding as above to the closed
orbit $V_1 \subset V$, and we have a retraction $V^0
\xrightarrow{\pi_V} V_1$. One gets commutative diagrams
\eq{}{\begin{CD}P_1^0 @= P_1^0 @>\subset >> P^0 \\
@V\pi_1 VV @V \pi VV @V \pi VV \\
V_1 @<\pi_V << V^0 @>\subset >> V
\end{CD}
}
and
\eq{1.18}{\begin{CD}P^0|V_1 @>\subset >> P_1^0 \\
@V\pi VV @V\pi_1 VV \\
V_1 @= V_1.
\end{CD}
}
\end{rmk}
\begin{rmk}\label{rmk1.10} Using the toric structure, one can realize
  these vector bundles as direct sums of line bundles corresponding to
  characters of the tori acting on the fibres.The inclusion on the top
  line of \eqref{1.18} corresponds to characters which act trivially
  on all of $V$.
\end{rmk}
\begin{rmk}\label{rmk1.11}(compare proposition \ref{prop1.6}). Given a flag of core subgraphs
\eq{1.19}{\Gamma_p \subsetneq \Gamma_{p-1}\subsetneq \cdots\subsetneq
  \Gamma_1\subsetneq \Gamma,
}
let $L_i\subset \P(\Gamma)$ be defined by the edge variables for edges
in $\Gamma_i$, so we have $L_1\subsetneq\cdots\subsetneq L_p\subsetneq
\P(\Gamma)$. For $L \subset \P(\Gamma)$ a coordinate linear space, let
$T(L)\subset L$ be the subtorus where none of the coordinates
vanish. Then the orbit associated to \eqref{1.19} is
\eq{}{ T(L_1)\times T(L_2/L_1)\times \cdots \times T(L_p/L_{p-1})\times
  T(\P^{n-1}/L_p)
}
(Here the notation $L_{i+1}/L_i$ is as in \eqref{1.2a}.)
\end{rmk}
 \section{Topological Chains on Toric Varieties}\label{sectopch}
 One can define the notion of non-negative real points $V(\R^{\ge 0})$ and positive real points $V(\R^{>0})$. For a torus $T = \Spec \Q[N^\vee]$ for some $N \cong \Z^g$ we take
 $$T(\R^{> 0}) = \{\phi: \Q[N^\vee] \to \R \ |\
 \phi(n) > 0, \forall n\in N^\vee \}.$$
 A toric variety $V$ can be stratified as a disjoint union of tori $V =
 \coprod T_\alpha$. Define
 \ga{2.1a}{V(\R^{\ge 0}) = \coprod T_\alpha(\R^{>0}); \\
 V(\R^{>0}) = T(\R^{>0}), \notag
 }
 where $T \subset V$ is the open orbit.
 Let $V\subset P(\Gamma)$ be the closure of the orbit associated to a flag
 \eqref{1.19}, and let $T(V)\subset T=\Spec \Q[N^\vee]$ be the subtorus
 acting trivially on $V$. Let $\pi_V: P_V \to V$ be the vector bundle as in
 proposition \ref{prop1.8}. We write $P_V = \sL_1\oplus \cdots \oplus
 \sL_p$ as a direct sum of line bundles, where each $\sL_i$ is
 equivariant for $T(V)$. Let $K(V) \cong (S^1)^p \subset T(V)(\C)$ be
 the maximal compact subgroup. Note that one has a canonical identification $T(V) = \G_m^p$ associated to the $1$-parameter subgroups of $T(V)$ generated by $e(\Gamma_i) \in N$. In particular, the identification $K(V) = (S^1)^p$ is canonical as well.
 For all closed orbits $V$ we may fix metrics on the $\sL_i$ which    are compatible under inclusions \eqref{1.18} and are
 (necessarily) invariant under the action of $K(V)$. We fix also a constant
 $\eta>0$. We can then define $S_V^\eta \subset P_V$ to be the product
 of the circle bundles of radius $\eta$ embedded in the
 $\sL_i$. $S_V^\eta$ becomes a principal bundle over $V$ with structure
 group $K(V)$.   Note that $S^\eta_V\cap P_V(\R^{\ge 0})$ contains a unique point in every fibre of $S_V^\eta$ over a point of $V(\R)$.
 Let $0<\ve<<\eta$ be another constant. We need to define a chain
 $\sigma_V^{\eta,\ve} \subset V(\R^{> 0})$. We consider closures $V_1
 \subset V$ of codimension $1$ orbits in $V$. For each such $V_1$ we
 have an open $P(V)_1 \subset V$ and a retraction $P(V)_1 \to V_1$
 which is a line bundle with a metric. The fibres of $P(V)_1(\R^{>0})$
 have a canonical coordinate $r>0$. If $V_1$ corresponds to an
 intersection of $V = E_1\cap\cdots\cap E_p$ with another exceptional
 divisor $E_{p+1}$, then we remove from each fibre of $P(V)_1(\R^{>0})$
 over $V_1(\R^{>0})$ the locus where $r<\eta$. If, on the other hand
 $V_1$ corresponds to an intersection of V with one of the $D_i$
 (i.e. with a strict transform of one of the coordinate divisors), then
 we remove the locus $r<\ve$. Repeating this process for each $V_1$
 (i.e. for each irreducible toric divisor in $V$), we obtain a compact
 $\sigma_V^{\eta,\ve}\subset V(\R^{>0})$ which stays away from the
 boundary components.    (Here "boundary components" are exceptional divisors together with strict transforms of coordinate divisors.)
 \begin{ex}\label{ex2.1a} Consider the case $V=P(\Gamma)$. Let $\pi: P(\Gamma) \to \P(\Gamma)$, and let
 $$\sigma = \{(A_1,\dotsc,A_n)\ |\ A_i \ge 0\} \subset \P(\Gamma)(\R)$$
 be the original integration chain. We have $\sigma_{P(\Gamma)}^{\eta,\ve} \subset \pi^{-1}(\sigma)$ defined by excising away points within a distance of $\eta$ from an $E_i$ or $\ve$ from the strict transform $D_j$ of a coordinate divisor $A_j=0$. (cf.\ fig.(\ref{figc})).  It is a manifold with corners.
 \begin{figure}[t]
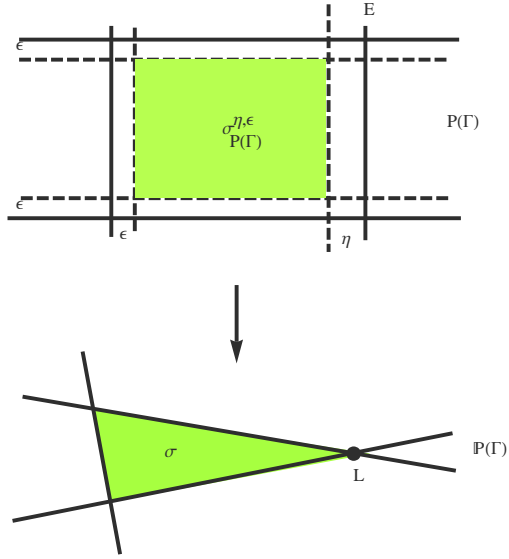
\centering
 \figc\caption{$P(\Gamma)$ and the real chain $\sigma_{P(\Gamma)}^{\eta,\ve}$.}\label{figc}
 \end{figure}
 \end{ex}
  Define $\tau_V^{\eta,\ve}$ to be the inverse
 image of $\sigma_V^{\eta,\ve}$ in $S_V^\eta$. The fibres of
 $\tau_V^{\eta,\ve}$ over $\sigma_V^{\eta,\ve}$ are products $(S^1)^p$
 with a canonical origin at the point where this fibre meets $
 P_V(\R^{\ge 0})$. For an angle $0\le \theta\le 2\pi$, we can thus
 define $\tau_V^{\eta,\ve,\theta} \subset \tau_V^{\eta,\ve}$ to be
 swept out by the origin in each fibre under the action of
 $[0,\theta]^p \subset K(V)$.  The chains $\tau_V^{\eta,\ve,\theta}$ have $\R$-dimension $n-1$ which is equal to the complex dimension of $\P(\Gamma)$ and $P(\Gamma)$.
 \begin{ex}\label{ex4.2} Here is an example which is too simple to correspond to any
   graph, but is sufficient to clarify the toric picture. Take
 \eq{}{L_1:A_1=A_2=0;\quad L_2:A_2=0
 }
 in $\P^2$ with coordinates $A_1, A_2, A_3$. Take $P\xrightarrow{\pi}
 \P^2$ to be the blowup of $L_1=(0,0,1)$. Let $E_1 \subset P$ be the
 exceptional divisor, and let $E_2\subset P$ be the strict transform of
 $L_2$. Note that $E_2$ is already a divisor so it is not necessary to
 blow up again. Take $V=E_1\cong \P^1$. The fan $\sF$ for $P$ is fig.(\ref{figd}).
   \begin{figure}[t]
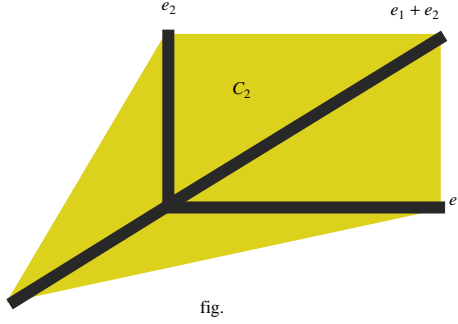
\centering
 \figd\caption{Fan for Example \ref{ex4.2}.}\label{figd}
 \end{figure}
  The cone $C=\R^{\ge 0}\cdot(e_1+e_2)$, so the fan $\sF' = \sF_C\subset
 \sF$ is the subset of cones lying in the first quadrant. The toric
 variety $P_V$ is $\A^2$ with $(0,0)$ blown up. It projects down onto
 $V$ as a line bundle. $S_V^\eta \subset P_V(\C)$ is then a circle
 bundle over $V(\C)$.
 $V$ has two suborbits $V_2 = E_1\cap E_2$ and $V_1 = E_1\cap D_1$,
 where $D_1$ is the strict transform of the divisor $A_1=0$ in
 $\P^2$. We may interpret $z:=A_1/A_2$ as a coordinate on $V$, so
 $V_1:z=0$ and $V_2:z=\infty$. We have $P(V)_1=V-\{z=\infty\}$ and
 $P(V)_2=V-\{z=0\}$. The real chain $\sigma_V^{\eta,\ve} = \{\eta\le
 z\le 1/\ve\}$, and $\tau_V^{\eta,\ve}$ is the $S^1$-bundle of radius
 $\eta$ over $\sigma_V^{\eta,\ve}$.
 On the other hand, $V_2$ corresponds to the cone labeled $C_2$ in
 fig.(\ref{figd}), and the fan $\sF_{C_2}$ is just $C_2$ itself. The toric
 variety $P_{V_2} \cong \A^2$ is a rank $2$ vector bundle over the
 point $V_2$. We have $P_{V_2} \subset P_V$. In
 this case $\sigma_{V_2}^{\eta,\ve}$ is simply the point $V_2$, and
 $\tau_{V_2}^{\eta,\ve}\cong S^1\times S^1\subset P_{V_1}(\C)$. In
 local coordinates around $V_1$ given by eigenfunctions for the torus
 action we have
 \ga{2.2a}{\tau_V^{\eta,\ve, \theta} = \{(\eta e^{i\mu},z)\ |\ \eta\le z\le
   1/\ve,\ 0 \le \mu \le \theta\}  \\
 \tau_{V_1}^{\eta,\ve, \theta} = \{(\eta e^{i\mu},\eta e^{i\nu})\ |\ 0
 \le \mu, \nu \le \theta\}  \notag \\
 \tau_V^{\eta,\ve, \theta}\cap \tau_{V_1}^{\eta,\ve, \theta} = \{(\eta
 e^{i\mu}, \eta)\ |\ 0 \le \mu \le \theta\}. \notag
 }
 \end{ex}
 We want now to establish a basic formula for the boundary of the
 chains $\tau_V^{\eta,\ve,\theta}$. Here $V$ runs through the closures
 of orbits in $P(\Gamma)$ associated to flags of core subgraphs
 \eqref{1.19}. We include the big orbit $V=P(\Gamma)$. We write $|V| :=
 \text{codim}(V/P(\Gamma))$. We may express the boundary chains
 $\partial \tau_V^{\eta,\ve,\theta}$ locally (in fact Zariski-locally) in
 coordinates which are eigenfunctions for the torus action. It is clear
 (cf. \eqref{2.2a}) that boundary terms are obtained by setting a
 suitable one of
 these coordinates to be constant: either $\eta e^{i\theta}$ or $\eta$
 or $\ve$. (The presence of $1/\ve$ in the first line of \eqref{2.2a}
 simply means that the appropriate coordinate near that point is
 $1/z$.)
 \begin{prop}For a suitable orientation, the boundary
 \eq{2.3a}{\partial \sum_V (-1)^{|V|}\tau_V^{\eta,\ve,\theta}
 }
 will contain no chains with one coordinate constant $=\eta$.
 \end{prop}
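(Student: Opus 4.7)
The plan is to work in the torus-equivariant eigencoordinates supplied by Proposition \ref{prop1.8} and to pair boundary faces of $\tau_V^{\eta,\ve,\theta}$ against boundary faces of $\tau_{V'}^{\eta,\ve,\theta}$ for orbit closures with $\bigl||V|-|V'|\bigr|=1$, so that the alternating sign $(-1)^{|V|}$ forces cancellation of every face on which one coordinate is constant $=\eta$.

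First I would fix $V\subset P(\Gamma)$ corresponding to a flag $\Gamma_p\subsetneq\cdots\subsetneq\Gamma_1\subsetneq\Gamma$ (so $|V|=p$) and use Proposition \ref{prop1.8} together with Remark \ref{rmk1.10} to split $P_V\to V$ equivariantly as $\sL_1\oplus\cdots\oplus\sL_p$. Choose eigencoordinates $z_1,\dots,z_p$ on the fibers; then $\tau_V^{\eta,\ve,\theta}$ is the locus $|z_i|=\eta$, $\arg z_i\in[0,\theta]$ ($1\le i\le p$), fibered over $\sigma_V^{\eta,\ve}\subset V(\R^{>0})$. Its boundary splits into four families of faces: (I) angular faces $\arg z_i=0$, (I$'$) angular faces $\arg z_i=\theta$, (II) radial faces where a base coordinate on $V$ in the direction of a new exceptional divisor (defining a deeper orbit $V'\subset V$ with $|V'|=p+1$) attains the value $\eta$, and (III) radial faces at value $\ve$ in the direction of a coordinate divisor. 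Only families (I) and (II) can contribute faces with a coordinate constant $=\eta$.

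The next step is a geometric pairing. By the overlap diagram \eqref{1.18} of Remark \ref{rmk1.9}, the base coordinate $r$ on the line bundle $P(V)_1\to V_1=V'$ is precisely the extra eigencoordinate $z_{p+1}$ of $P_{V'}$, so a type (II) face $\{r=\eta\}$ of $\tau_V^{\eta,\ve,\theta}$ coincides as a subset of $P(\Gamma)$ with the type (I) face $\{|z_{p+1}|=\eta,\ \arg z_{p+1}=0\}$ of $\tau_{V'}^{\eta,\ve,\theta}$. Conversely, a type (I) face of $\tau_V^{\eta,\ve,\theta}$ at $\arg z_i=0$ is realized as a type (II) face of $\tau_{V''}^{\eta,\ve,\theta}$, where $V''\supset V$ is obtained by dropping $\Gamma_i$ from the flag of $V$. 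Hence every ``$=\eta$'' face appears in exactly one pair $\bigl(\tau_V,\tau_{V'}\bigr)$ whose codimensions differ by $1$.

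It remains to compute orientations. Orient $\tau_V^{\eta,\ve,\theta}$ by the product form $d\arg z_1\wedge\cdots\wedge d\arg z_p\wedge\omega_{\sigma_V}$, with $\omega_{\sigma_V}=dr\wedge\omega_{\sigma_{V'}}$ locally near $V'$. Both the type (II) face of $\tau_V$ at $r=\eta$ and the type (I) face of $\tau_{V'}$ at $\arg z_{p+1}=0$ are lower endpoints of the corresponding intervals, and a direct application of the boundary-sign formula shows each induces the same orientation $(-1)^{p+1}\,d\arg z_1\wedge\cdots\wedge d\arg z_p\wedge\omega_{\sigma_{V'}}$ on the common chain. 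Since $(-1)^{|V'|}=-(-1)^{|V|}$, the two contributions to $\partial\sum_V(-1)^{|V|}\tau_V^{\eta,\ve,\theta}$ cancel. I expect the main obstacle to be arranging a consistent ordering of eigencoordinates across the nested overlaps $P_{V'}^0\subset P_V^0$ of \eqref{1.18}, so that the sign calculation is unambiguous across all the pairings; once that bookkeeping is in place the cancellation exhausts all $\eta$-constant faces in the alternating sum \eqref{2.3a}.
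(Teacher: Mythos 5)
Your proposal is correct and follows essentially the same route as the paper's proof: the paper likewise pairs each face where a real (radial) coordinate hits the value $\eta$ on $\tau_V^{\eta,\ve,\theta}$ with the face where the corresponding circular coordinate sits at angle $0$ on the adjacent orbit closure of codimension differing by one, so that the alternating sign $(-1)^{|V|}$ forces cancellation. Your explicit wedge-product orientation computation and the appeal to the compatibility diagram \eqref{1.18} merely make precise what the paper handles by choosing ``consistent orientations'' via the vanishing of the first Zariski cohomology with $\Z/2\Z$-coefficients.
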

 \begin{proof}(Cf.\ fig.(\ref{fige}) ). For a given boundary term, we can choose local
   eigenfunction coordinates $x_1,\dotsc,x_{n-1}$ such that be boundary
   term is given by $x_1=\eta$. We take the chains to be oriented in
   some consistant way by this ordering of coordinates.    (Note that these coordinates are defined on a Zariski open set. The obstruction to choosing consistent orientations for various open sets is a class in the first Zariski cohomology of $P(\Gamma)$ with constant $\Z/2\Z$-coefficients. Since this cohomology group vanishes, we can choose such consistent orientations.) If $\partial
   \tau_V^{\eta,\ve,\theta}$ contains a term with $x_1=\eta$, there are
   two possibilities. Either $x_1$ is a real coordinate on
   $\tau_V^{\eta,\ve,\theta}$ or it is a circular coordinate. If $x_1$ is a real coordinate, then the fact that $x_1=\eta$ appears in the boundary means that locally $x_1=0$ defines a codimension $1$ orbit closure $V_1 \inj V$. In $\partial \tau_{V_1}^{\eta,\ve,\theta}$, $x_1$ will appear as a circular coordinate. Since $|V|=|V_1|+1$, the same chain $x_1=\eta$ will appear in $\partial\tau_V^{\eta,\ve,\theta}$ and in $\partial\tau_{V_1}^{\eta,\ve,\theta}$ and will cancel in \eqref{2.3a}.  If, on the other hand, $x_1 = \eta e^{i\theta}$ is a circular coordinate, then for suitable ordering of coordinates, the chain will be an $(S^1)^p$-bundle over a chain $\sigma$ contained in the locus where certain coordinates $\ge 0$. But then \eqref{2.3a} will contain another chain which is an $(S^1)^{p-1}$-bundle over $\{x_1\ge \eta\}\times \sigma$, and the boundary components involving $x_1=\eta$ will occur with opposite signs and will cancel.
 \end{proof}
 The boundary chain \eqref{2.3a}  is an $(n-2)$-chain involving two scales $0<\ve<\eta$. We want to construct an $(n-1)$-chain $\xi^{\eta,\ve,\theta}$ which amounts to a scaling $\eta \to \ve$. To do this, we construct a vector field $v$ on $P(\Gamma)$. Let $E = \sum E_i$ be the exceptional divisor. $v$ will be $0$ outside a neighborhood $N$ of $E$. Locally, at a point on $N$ which is close to divisors $E_1,\dotsc,E_p$ we have coordinates $x_1,\dotsc,x_p$ which are eigenfunctions for the  torus action such that locally $E_i:x_i=0$. Locally we will take $v$ to be radial and inward-pointing in each $x_i$. We glue these local $v$'s using a partition of unity. "Flowing" the $(n-2)$-chain \eqref{2.3a} along this vector field yields  an $(n-1)$-chain $\xi^{\eta,\ve,\theta}$. If this is done with care, we can arrange
 \eq{}{\partial \xi^{\eta,\ve,\theta} \equiv \partial \sum_V (-1)^{|V|}\tau_V^{\eta,\ve,\theta} -
 \partial \sum_V (-1)^{|V|}\tau_V^{\ve,\ve,\theta}.
 }
 Here $\equiv$ means that the two sides differ by a chain lying in an
 $\ve$-neighborhood of the strict transform $D$ of the coordinate
 divisor $\Delta$ in $P(\Gamma)$.
 Another important property of the chain $\xi^{\eta,\ve,\theta}$ is
 \begin{lem}$\xi^{\eta,\ve,2\pi} \equiv \xi^{\eta,\ve,0}$.
 \end{lem}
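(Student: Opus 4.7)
The plan is to observe that the entire construction of $\xi^{\eta,\ve,\theta}$ depends on $\theta$ only through the complex number $\eta e^{i\theta}$, which is $2\pi$-periodic, so the two sides agree modulo chains supported in an $\ve$-neighborhood of $D$.

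More precisely, I would proceed as follows. First, unpack the construction: $\xi^{\eta,\ve,\theta}$ is obtained by flowing the $(n-2)$-chain $\partial \sum_V (-1)^{|V|}\tau_V^{\eta,\ve,\theta}$ along the vector field $v$. Crucially, $v$ is $\theta$-independent, radial and inward-pointing in eigenfunction coordinates, and equivariant under the action of each compact torus $K(V)$. Consequently, if two boundary $(n-2)$-chains at parameters $\theta_1,\theta_2$ agree as oriented chains away from an $\ve$-neighborhood of $D$, then their images under the flow agree as oriented $(n-1)$-chains in the same sense. Thus it suffices to prove $\partial \sum_V (-1)^{|V|}\tau_V^{\eta,\ve,0} \equiv \partial \sum_V (-1)^{|V|}\tau_V^{\eta,\ve,2\pi}$ away from an $\ve$-neighborhood of $D$.

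Next, I would classify the surviving boundary pieces using the previous proposition. The boundary of $\sum_V (-1)^{|V|}\tau_V^{\eta,\ve,\theta}$ admits a Zariski-local description in eigenfunction coordinates $x_1,\dotsc,x_{n-1}$. The preceding proposition tells us that all pieces at which some real coordinate equals $\eta$ cancel in the signed sum. The remaining pieces are of two kinds: (a) \emph{angular} pieces where some circular coordinate takes the value $\eta e^{i\theta}$, and (b) pieces at which some real coordinate equals $\ve$ (near the strict transform $D$). The type (b) pieces do not depend on $\theta$ at all, so they contribute identically at $\theta=0$ and $\theta=2\pi$; moreover they lie in an $\ve$-neighborhood of $D$, so we may even discard them for the purposes of the equivalence $\equiv$. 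The type (a) pieces depend on $\theta$ only through $\eta e^{i\theta}$, which returns to $\eta$ at $\theta = 2\pi$.

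The main obstacle is verifying that the \emph{orientations} and the \emph{multiplicities} of the type (a) pieces agree at $\theta=0$ and $\theta=2\pi$: the chain $\tau_V^{\eta,\ve,\theta}$ has swept out a full $(S^1)^p$ in the angular direction as $\theta$ ranges over $[0,2\pi]$, while at $\theta=0$ nothing has been swept. What is needed is that only the angular \emph{endpoint} chain at $\mu = \theta$ contributes after the cancellation of the $\mu = 0$ (real $= \eta$) piece by the preceding proposition; once this is accepted, continuity of $\theta \mapsto \eta e^{i\theta}$ and of the induced orientations at the endpoint forces the two endpoint chains to coincide (as in the explicit coordinate computation \eqref{2.2a} of Example \ref{ex4.2}, where the boundary chain at $\mu = \theta$ is simply a point on the circle $|x_1|=\eta$ which is the same at $\theta=0$ and $\theta=2\pi$). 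Combining this with the $\theta$-independence of $v$, we conclude $\xi^{\eta,\ve,2\pi} \equiv \xi^{\eta,\ve,0}$.
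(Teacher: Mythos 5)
Your argument is correct and follows essentially the same route as the paper: both reduce the claim to the $2\pi$-periodicity (modulo chains in an $\ve$-neighborhood of $D$) of the boundary data $\partial\sum_V(-1)^{|V|}\tau_V^{\eta,\ve,\theta}$, using that the flow defining $\xi^{\eta,\ve,\theta}$ is independent of $\theta$. The paper states the periodicity more economically --- at $\theta=2\pi$ each torus bundle $\tau_V^{\eta,\ve,2\pi}$ with $V\subsetneq P(\Gamma)$ closes up, so $\partial\tau_V^{\eta,\ve,2\pi}\equiv 0$, while $\tau_{P(\Gamma)}^{\eta,\ve,\theta}$ does not depend on $\theta$ --- which is the same observation as your analysis of the angular endpoint faces at $\eta e^{i\theta}$.
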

 \begin{proof}The point is that $\partial \tau_V^{\eta,\ve,2\pi} \equiv 0$
   except for the case $V=P(\Gamma)$, and
   $\tau_{P(\Gamma)}^{\eta,\ve,\theta}$ is independent of
   $\theta$. (See fig.(\ref{fige})).
  \begin{figure}[t]
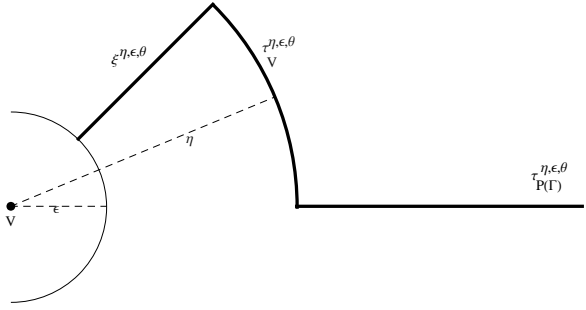
\centering
 \fige\caption{The monodromy chain, with angular variable $\theta$.}\label{fige}
 \end{figure}
 \end{proof}
 Define the chain $c^{\eta,\ve,\theta} = \sum_V (-1)^{|V|}\tau_V^{\eta,\ve,\theta} - \xi^{\eta,\ve,\theta}$. We have
 \eq{}{\partial c^{\eta,\ve,\theta} = \partial \sum_V (-1)^{|V|}\tau_V^{\ve,\ve,\theta}.
 }
 Note that $c^{\eta,\ve,0} = \sigma_{P(\Gamma)}^{\eta,\ve}$, i.e. all chains involving at least one circular variable die at $\theta=0$. We define the variation,
 \eq{2.6}{var(c^{\eta,\ve,\theta}) = c^{\eta,\ve,2\pi}-  c^{\eta,\ve,0} \equiv \sum_{V\subsetneq P(\Gamma)} (-1)^{|V|}\tau_V^{\ve,\ve}.
 }
 It is a sum of ``$(S^1)^p$-tubes'' over all $E_1\cap \cdots \cap E_p \subsetneq P(\Gamma)$.
 \section{The Graph Hypersurface}\label{secgrhyp}
 Associated to a graph $\Gamma$ with $n$ edges, one has the graph
 polynomial
 \eq{}{\psi_\Gamma(A_1,\dotsc,A_n) = \sum_T \prod_{e\not\in T}A_e
 }
 where $T$ runs through spanning trees of $\Gamma$. This polynomial has
 degree $h_1(\Gamma)$. For more detail, see \cite{BEK} and the
 references cited there. Let $X = X_\Gamma: \psi_\Gamma=0$ be the graph
 hypersurface in $\P^{n-1}$.
 For $\mu \subset \text{Edge}(\Gamma)$, let $L_\mu \subset \P(\Gamma)$
 be defined by $A_e=0,\ e\in \mu$. Let $\Gamma_\mu = \bigcup_{e\in \mu}e
 \subset \Gamma$ be the subgraph with edges in $\mu$. Note the
 dictionary $\Gamma_\mu \leftrightarrow L_\mu$ is inclusion reversing.
 \begin{lem}\label{lem2.1a} (i) $L_\mu \subset X_\Gamma \subset
   \P(\Gamma)$ if and only
   if $h_1(\Gamma_\mu)>0$. \newline\noindent
 (ii) If $h_1(\Gamma_\mu)>0$, there exists a unique $\nu \subseteq
 \mu$ such that $h_1(\Gamma_\nu) = h_1(\Gamma_\mu)$ and such that
 moreover $\Gamma_\nu$ is a core graph. \newline\noindent
 (iii) We have in (ii) that $\nu =\bigcup \xi$ where $\xi$ runs through all
 minimal subsets of $\mu$ such that $L_\xi \subset X$.
 \newline\noindent
 (iv) $L_\mu = L_\nu\cap M$, where $M$ is a coordinate linear space not
 contained in $X_\Gamma$.
 \end{lem}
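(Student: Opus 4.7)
For part (i), the plan is a direct computation. Restricting $\psi_\Gamma=\sum_T \prod_{e\notin T}A_e$ to the coordinate linear space $L_\mu$ amounts to setting $A_e=0$ for $e\in\mu$. A given monomial $\prod_{e\notin T}A_e$ survives this restriction iff $\mu\subseteq T$. Hence $\psi_\Gamma\big|_{L_\mu}\not\equiv 0$ iff $\Gamma$ admits a spanning tree containing $\Gamma_\mu$, which happens iff $\Gamma_\mu$ itself is a forest, i.e.\ $h_1(\Gamma_\mu)=0$. This gives (i).

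For parts (ii) and (iii), I would take $\nu\subseteq\mu$ to be the set of edges of $\Gamma_\mu$ that lie on some cycle of $\Gamma_\mu$, and $\Gamma_\nu$ the corresponding subgraph. I would verify the three properties: (a) $\Gamma_\nu$ is core, because any edge of $\Gamma_\nu$ lies on a cycle and so cannot raise $h_0$ upon removal, nor can it be incident to a unary vertex; (b) $h_1(\Gamma_\nu)=h_1(\Gamma_\mu)$, since the edges in $\mu\setminus\nu$ form a forest (any cycle in $\Gamma_{\mu\setminus\nu}$ would be a cycle in $\Gamma_\mu$ whose edges are then in $\nu$ by construction, contradiction), and removing forest edges from $\Gamma_\mu$ does not change $h_1$; (c) uniqueness: any core subgraph $\Gamma_{\nu'}\subseteq\Gamma_\mu$ with $h_1(\Gamma_{\nu'})=h_1(\Gamma_\mu)$ must contain every cycle of $\Gamma_\mu$ (otherwise its complement in $\Gamma_\mu$ would contain a cycle and $h_1$ of $\Gamma_{\nu'}$ would be strictly smaller), and conversely every edge of a core graph lies on a cycle (the fact quoted after the definition of \emph{core} in \S\ref{subsecgr}: a core graph is a union of its cycles). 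Hence $\nu'=\nu$. For (iii), I would observe that a minimal $\xi\subseteq\mu$ with $L_\xi\subseteq X$ is, by (i), a minimal subset with $h_1(\Gamma_\xi)>0$; such a $\Gamma_\xi$ is a simple cycle (any proper subset must be a forest, so all edges belong to the single cycle). Since the core $\Gamma_\nu$ is the union of its cycles, $\nu=\bigcup\xi$ as $\xi$ runs over such minimal subsets.

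For part (iv), I would set $M:=L_{\mu\setminus\nu}$, the coordinate linear space defined by $A_e=0$ for $e\in\mu\setminus\nu$. Then $L_\mu=L_\nu\cap M$ is immediate from the defining equations. By the argument in (ii), $\Gamma_{\mu\setminus\nu}$ is a forest, so $h_1(\Gamma_{\mu\setminus\nu})=0$, and (i) gives $M\not\subseteq X_\Gamma$.

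The only step that requires genuine input beyond bookkeeping is the fact that a core graph is the union of its cycles, which is invoked in (ii) and (iii); this is cited from \cite{BEK}, so no obstacle remains. The main care needed is simply to keep the inclusion-reversing dictionary $\Gamma_\mu\leftrightarrow L_\mu$ straight and to distinguish between ``forest'' edges (those removable without changing $h_1$) and ``core'' edges in $\Gamma_\mu$.
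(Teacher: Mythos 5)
Your proposal is correct and is essentially the argument the paper has in mind: the paper itself gives no details, simply citing the results of \cite{BEK}, section 3 (which contain the spanning-tree computation for (i) and the factorization $\psi_\Gamma = \psi_{\Gamma_\nu}\psi_{\Gamma/\!/\Gamma_\nu}+R$ along exactly the linear spaces $L_\nu$ with $\Gamma_\nu$ core). So what you have done is supply the standard self-contained proof that the paper delegates to the reference; the identification of $\nu$ with the union of the cycles of $\Gamma_\mu$, and of $M$ with $L_{\mu\setminus\nu}$, is the intended content.

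One local justification should be repaired. In the uniqueness step of (ii) you assert that a core $\Gamma_{\nu'}\subseteq\Gamma_\mu$ with $h_1(\Gamma_{\nu'})=h_1(\Gamma_\mu)$ must contain every cycle of $\Gamma_\mu$, ``otherwise its complement in $\Gamma_\mu$ would contain a cycle.'' That parenthetical is false as stated: if $\Gamma_\mu$ is two triangles meeting at a vertex and $\nu'$ omits a single edge of one triangle, the complement is one edge and contains no cycle, even though $h_1$ does drop. The correct argument is via the cycle space: the inclusion $\Gamma_{\nu'}\subseteq\Gamma_\mu$ induces an injection $H_1(\Gamma_{\nu'},\Q)\inj H_1(\Gamma_\mu,\Q)$ of subspaces of the edge space $\Q^{\mu}$, so equality of $h_1$ forces equality of cycle spaces; since every cycle of $\Gamma_{\nu'}$ is supported on $\nu'$, every edge of $\Gamma_\mu$ appearing in some cycle lies in $\nu'$, i.e.\ $\nu\subseteq\nu'$, and the reverse inclusion follows from the fact (quoted in section \ref{subsecgr}) that a core graph is a union of its cycles. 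With this substitution the proof is complete; the remaining parts (i), (iii), (iv) are fine as written.
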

 \begin{proof}These assertions are straightforward from the results in
   \cite{BEK}, section 3. Note that (iv) justifies our strategy of
   only blowing up core subgraphs.
 \end{proof}
 We have seen (remark \ref{prop1.6}) that our blowup $P(\Gamma)$ is
 stratified as a union of tori indexed by pairs
 \eq{2.2}{(F,
 \{\Gamma_p\subsetneq\cdots \subsetneq \Gamma_1\subsetneq
 \Gamma/\!/\gamma\})
 }
 where $F \subset \Gamma$ is a suitable subforest and
 the $\Gamma_i$ are core.
 \begin{prop}\label{prop2.2a} (i) As in proposition \ref{prop1.6}, the torus
   corresponding to \eqref{2.2} is
 \eq{2.3}{T(\Gamma_p/\!/F_p)\times T((\Gamma_{p-1}/\!/\Gamma_p)/\!/F_{p-1})\times \cdots
 \times T((\Gamma/\!/\Gamma_1)/\!/F).
 }
 Here $T(\Gamma):=\P(\Gamma)-\Delta$, where $\Delta:\prod_{e\in
   \text{Edge}(\Gamma)}  A_e = 0$. \newline\noindent
 (ii) The strict transform $Y$ of $X_\Gamma$ in $P(\Gamma)$ meets the
 stratum \eqref{2.3} in a union of pullbacks
 \eq{}{pr_1^{-1}(X^0_{\Gamma_p})\cup
   pr_2^{-1}(X^0_{\Gamma_{p-1}/\!/\Gamma_p})\cup\cdots \cup
   pr_p^{-1}(X^0_{(\Gamma/\!/\gamma)/\!/\Gamma_1}).
 }
 Here the $pr_i$ are the projections to the various subtori in
 \eqref{2.3}, and $X^0$ denotes the restriction of the corresponding
 graph hypersurface to the open torus in the projective space.
 \end{prop}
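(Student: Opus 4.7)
Part (i) is immediate from Proposition \ref{prop1.6}(ii) together with the definition $T(H) := \P(H) - \Delta$: the open orbit of $\P(H)$ under the standard torus action is precisely $T(H)$, and the open orbit of a product of toric varieties is the product of the open orbits of the factors. This identifies the open orbit appearing in Proposition \ref{prop1.6}(ii) with \eqref{2.3}.

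For part (ii) the plan is to induct on the length $p$ of the flag, using the factorization of the first Kirchhoff--Symanzik polynomial along a core subgraph established in \cite{BEK}, \S 3. If $\gamma \subset \Gamma$ is core and $e_1, \ldots, e_m$ are the edges of $\gamma$, then
\begin{equation*}
\psi_\Gamma(A_1, \ldots, A_n) = \psi_\gamma(A_1, \ldots, A_m) \cdot \psi_{\Gamma/\!/\gamma}(A_{m+1}, \ldots, A_n) + R,
\end{equation*}
where every monomial of $R$ has degree strictly greater than $h_1(\gamma)$ in $A_1, \ldots, A_m$. In the standard blowup coordinates $A_i = t \cdot a_i$ for $i \le m$ this gives
\begin{equation*}
\pi^* \psi_\Gamma = t^{h_1(\gamma)} \bigl( \psi_\gamma(a_1, \ldots, a_m)\, \psi_{\Gamma/\!/\gamma}(A_{m+1}, \ldots, A_n) + O(t) \bigr),
\end{equation*}
so the strict transform of $X_\Gamma$ intersects the exceptional divisor $E_\gamma \cong \P(\gamma) \times \P(\Gamma/\!/\gamma)$ of Lemma \ref{lem1.5} exactly in $pr_1^{-1}(X_\gamma) \cup pr_2^{-1}(X_{\Gamma/\!/\gamma})$.

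Iterating this along the flag $\Gamma_p \subsetneq \cdots \subsetneq \Gamma_1$ in the order of increasing dimension permitted by Proposition \ref{prop1.1} produces, on the intersection $E_1 \cap \cdots \cap E_p$, identified via Remark \ref{rmk1.11} with $\P(\Gamma_p) \times \P(\Gamma_{p-1}/\!/\Gamma_p) \times \cdots \times \P(\Gamma/\!/\Gamma_1)$, the decomposition of the strict transform of $X_\Gamma$ as a union of pullbacks of the graph hypersurfaces of the successive quotients. Restricting from this closed stratum to the actual torus orbit indexed by the subforest $F$ amounts to setting $A_e = 0$ for $e \in F_i$ within each tier $\Gamma_i/\!/\Gamma_{i+1}$; because $F_i$ maps to an acyclic subgraph, its edges extend to a spanning tree of $\Gamma_i/\!/\Gamma_{i+1}$, so the standard bridge identity for graph polynomials gives $\psi_{\Gamma_i/\!/\Gamma_{i+1}}|_{A_e = 0,\,e \in F_i} = \psi_{(\Gamma_i/\!/\Gamma_{i+1})/\!/F_i}$, whose zero locus in the open torus of $\P((\Gamma_i/\!/\Gamma_{i+1})/\!/F_i)$ is by definition $X^0_{(\Gamma_i/\!/\Gamma_{i+1})/\!/F_i}$.

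The main technical ingredient is the factorization identity for $\psi_\Gamma$; in \cite{BEK} it is verified through the matrix-tree theorem by pairing spanning trees of $\Gamma$ that contain a spanning tree of $\gamma$ with pairs (spanning tree of $\gamma$, spanning tree of $\Gamma/\!/\gamma$). Once this is available, the rest of the argument is a careful bookkeeping of the toric stratification constructed in Section \ref{seccomb}, together with the observation that deletion of a forest edge in a graph polynomial equals contraction of that edge once the edge has been arranged to lie in every spanning tree.
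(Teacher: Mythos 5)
Your proposal is correct and follows essentially the same route as the paper: the paper simply cites \cite{BEK}, Prop.~3.5 for the key identity $Y_L\cap E_L = \bigl(X_{\Gamma'}\times \P(\Gamma/\!/\Gamma')\bigr)\cup\bigl(\P(\Gamma')\times X_{\Gamma/\!/\Gamma'}\bigr)$ and concludes ``by an induction argument,'' whereas you re-derive that identity from the degree bound on the remainder $R$ (Lemma~\ref{lem4}(ii)) in blowup coordinates and spell out the induction and the contraction along the forest $F$. The extra detail on the $F$-contraction step (via $\psi_\Gamma|_{A_e=0}=\psi_{\Gamma/\!/e}$ for non-loop edges) is a welcome filling-in of what the paper leaves implicit.
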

 \begin{proof}Let $\Gamma' \subset \Gamma$ be a subgraph and let
   $L:A_e=0,\ e\in \text{Edge}(\Gamma')$. Assume $h_1(\Gamma')>0$, so $L
   \subset X_\Gamma$. Let $P_L \to \P(\Gamma)$ be the blowup of
   $L$. Let $E_L \subset P_L$ be the exceptional divisor,
   and let $Y_L\subset P_L$ be the strict transform of $X_\Gamma$.
 The basic geometric
   result (op. cit. prop. 3.5) is that $E_L = \P(\Gamma')\times
   \P(\Gamma/\!/\Gamma')$ and
 \eq{2.5}{Y_L \cap E_L = \Big(X_{\Gamma'}\times
   \P(\Gamma/\!/\Gamma')\Big)\cup \Big(\P(\Gamma')\times
   X_{\Gamma/\!/\Gamma'}\Big) .
 }
 The assertions of the proposition follow by an induction argument.
 \end{proof}
 \begin{cor}\label{cor3.3} The strict transform $Y$ of $X_{\Gamma}$ in $P(\Gamma)$ does
 not meet the non-negative points $P(\Gamma)(\R^{\ge 0})$ \eqref{2.1a}.
 \end{cor}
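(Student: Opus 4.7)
The plan is to combine the stratification of $P(\Gamma)(\R^{\ge 0})$ by positive points of torus orbits (equation \eqref{2.1a}) with the explicit description of $Y$ on each stratum provided by Proposition \ref{prop2.2a}, then reduce everything to the elementary positivity of the graph polynomial on the open positive torus.

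First I would recall that by \eqref{2.1a} we have
\eq{}{P(\Gamma)(\R^{\ge 0}) \;=\; \coprod_\alpha T_\alpha(\R^{>0}),
}
where the $T_\alpha$ are the torus orbits of the toric variety $P(\Gamma)$. So it suffices to check that $Y$ meets no $T_\alpha(\R^{>0})$. By Proposition \ref{prop1.6} the orbits of $P(\Gamma)$ are indexed by pairs $(F,\{\Gamma_p\subsetneq\cdots\subsetneq\Gamma_1\subsetneq\Gamma\})$, and Proposition \ref{prop2.2a}(ii) identifies the intersection of $Y$ with the orbit \eqref{2.3} as a union
\eq{}{\bigcup_{i=1}^{p+1} pr_i^{-1}\!\bigl(X^0_{\Gamma^{(i)}}\bigr),
}
where each $\Gamma^{(i)}$ is one of the subquotients $\Gamma_p/\!/F_p$, $(\Gamma_{i-1}/\!/\Gamma_i)/\!/F_{i-1}$, or $(\Gamma/\!/\Gamma_1)/\!/F$, and $X^0_{\Gamma^{(i)}}$ denotes the restriction of the graph hypersurface of $\Gamma^{(i)}$ to the open torus $T(\Gamma^{(i)}) \subset \P(\Gamma^{(i)})$.

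The remaining step is the simple positivity statement: for any graph $\Gamma'$ we have $X^0_{\Gamma'}\cap T(\Gamma')(\R^{>0}) = \emptyset$. Indeed, $\psi_{\Gamma'}(A)=\sum_{T}\prod_{e\notin T}A_e$ is a sum, with coefficient $1$, of monomials in the $A_e$ (over spanning trees $T$), hence strictly positive whenever all $A_e>0$. Applied to each factor $\Gamma^{(i)}$ in \eqref{2.3}, this shows $pr_i^{-1}(X^0_{\Gamma^{(i)}})\cap T_\alpha(\R^{>0}) = \emptyset$, since each projection $pr_i$ sends positive points of the product torus to positive points of $T(\Gamma^{(i)})$.

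Combining these observations gives $Y\cap T_\alpha(\R^{>0})=\emptyset$ for every orbit $T_\alpha$, and the corollary follows. The only non-routine ingredient is Proposition \ref{prop2.2a}, which was already proved by induction on blowups using the key formula \eqref{2.5}; the positivity of $\psi_{\Gamma'}$ on the open positive torus is immediate from its definition as a spanning tree sum. Thus the real content of the corollary is not in its proof but in the fact that the blowup $P(\Gamma)$ was specifically constructed (by blowing up only coordinate linear spaces $L_\nu$ associated to core subgraphs, cf.\ Lemma \ref{lem2.1a}) so as to separate $X_\Gamma$ from every stratum of $P(\Gamma)(\R^{\ge 0})$ at once.
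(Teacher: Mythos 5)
Your proposal is correct and follows essentially the same route as the paper: reduce via the stratification \eqref{2.1a} to checking each orbit, use Proposition \ref{prop2.2a} to describe $Y$ on each stratum as pullbacks of graph hypersurfaces of subquotient graphs, and conclude from the positivity of $\psi_{\Gamma'}$ on the open positive torus, which holds since it is a sum of monomials with non-negative coefficients. The paper's own proof is a three-sentence version of exactly this argument.
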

 \begin{proof}It suffices by \eqref{2.1a} to show that $Y$ doesn't meet the positive
   points in any stratum. By proposition \ref{prop2.2a}, it suffices to
   show that for any graph $\Gamma$, the graph hypersurface $X_\Gamma$
   has no $\R$-points with coordinates all $>0$. This is immediate
   because $\psi_\Gamma$ is a sum of monomials with non-negative
   coefficients.
 \end{proof}
 \begin{rmk}\label{rmk3.4}The Feynman amplitude is obtained by calculating an integral over $\sigma = \P(\Gamma)(\R^{\ge 0})$ with an integrand which has a pole along $X_\Gamma$. Again using that $\psi_\Gamma$ is a sum of monomials with non-negative coefficients, one sees from lemma \ref{lem2.1a} that
 \eq{}{\sigma\cap X_\Gamma = \bigcup_\mu L_\mu(\R^{\ge 0})
 }
 where $L_\mu \leftrightarrow \Gamma_\mu$ with $\Gamma_\mu \subset \Gamma$ a core subgraph. The iterated blowup $P(\Gamma) \to \P(\Gamma)$ is exactly what is necessary to separate the non-negative real points from the strict transform of $X_\Gamma$.
 \end{rmk}
 \begin{rmk}The points where $\psi_\Gamma \neq 0$ have some remarkable
properties. It is shown in \cite{S} that for any angular sector $S$
with angle $< \pi$, $\psi_\Gamma(a_1,\dotsc,a_n) \neq 0$ at any
complex projective point $a$ such that the $a_i \neq 0$ and all the $\arg(a_i)$
lie in $S$.
 \end{rmk}
 \section{Monodromy}\label{secmono}
 Let $p_i = (0,\dotsc,1,0,\dotsc,0) \in \C^n$ be the $i$-th coordinate
 vector. Define
 $$\sigma^{aff} = \{\sum_{i=1}^n\tau_ip_i \ |\ \tau_i\ge
 0,\ \sum \tau_i = 1\}\subset \C^n-\{(0,\dotsc,0)\}\to \P^{n-1}.
 $$
 Fix a positive constant $\ve <<1$ and choose $q_k=(q_{k1},\dotsc,q_{kn}) \in
 \R^n,\ 1\le k\le n$ with $1-\ve<q_{kj} \le 1$ and $|q_{jk}-q_{\ell,m}|
 \le \ve^2$. We assume the $q_k$ are algebraically generic. Write $r_k(t) =
 p_k+tq_k \in \C^n$. Define (cf.\ fig.(\ref{figf}))
 \eq{6.1a}{\sigma^{aff}_t = \{\sum_{i=1}^n\tau_kr_k(t) \ |\ \tau_k\ge
 0,\ \sum \tau_k = 1\}
 }
  \begin{figure}[t]
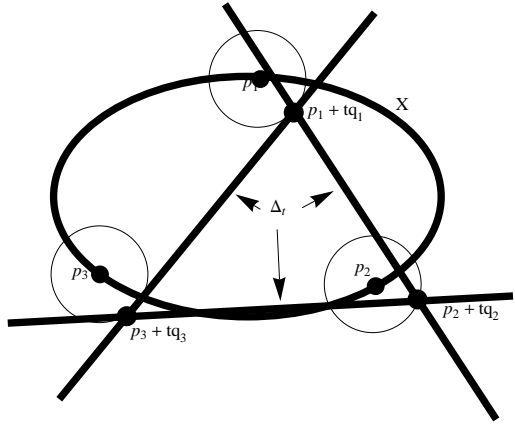
\centering
\figf\caption{Moving $\Delta_t$.}\label{figf}
 \end{figure}
 We write $\sigma$ and $\widetilde\sigma_t$ for the images of these chains
 in $\P^{n-1}$. Of course, $\sigma = \sigma_{\P^{n-1}}$ as above, and
 we know that $\sigma\cap X_\Gamma = \bigcup_{L\subset \sL}\sigma_L$. Here $\sL$ is as in \eqref{1.4}.
 \begin{lem}\label{lem3.4} Let $\sL \subset N_\sL$ be a neighborhood of $\sL$ in
   $\P^{n-1}$ and let $\sigma\subset N_\sigma$ be a neighborhood of $\sigma$. Then there exists $\ve_0>0$ such that $\ve\le \ve_0$
   implies that for all $0\le \theta\le 2\pi$, we have $\widetilde\sigma_{\ve e^{i\theta}} \subset N_\sigma$ and
   $\widetilde\sigma_{\ve e^{i\theta}}\cap X_\Gamma \subset N_\sL$.
 \end{lem}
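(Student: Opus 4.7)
The plan is to treat $\widetilde\sigma_t$ as the image under a single continuous map $\Phi\colon \mathcal{S}\times D \to \P^{n-1}$, where $\mathcal{S}:=\{\tau\in\R^n \mid \tau_k\ge 0,\ \sum_k\tau_k=1\}$ is the standard closed simplex, $D\subset\C$ is a small closed disk about $0$, and $\Phi(\tau,t) := [\sum_k \tau_k r_k(t)]$. The very first thing to verify is that $\Phi$ is well defined and continuous for $D$ of sufficiently small radius. The affine representative $\sum_k \tau_k r_k(t) = (\tau_1,\dotsc,\tau_n) + t\sum_k \tau_k q_k$ has norm bounded below on $\mathcal{S}$ by $1/\sqrt{n} - |t|\sqrt{n}\,\max_k|q_k|$, which is uniformly bounded away from $0$ for $|t|$ small. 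Once this is in hand, both conclusions follow from compactness of $\mathcal{S}\times\overline{D}_{\ve_0}$, and the rest is essentially point-set topology; I expect no serious obstruction.

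For the first inclusion, I may assume $N_\sigma$ is open. The map $\Phi$ restricts at $t=0$ to a parameterisation of $\sigma$, hence $\mathcal{S}\times\{0\}\subset \Phi^{-1}(N_\sigma)$. By the tube lemma applied to the compact factor $\mathcal{S}$, there is an $\ve_0 > 0$ with $\Phi(\mathcal{S}\times\overline{D}_{\ve_0})\subset N_\sigma$, which is equivalent to $\widetilde\sigma_t \subset N_\sigma$ for all $|t|\le \ve_0$.

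For the second inclusion, argue by contradiction. If no $\ve_0$ works, one extracts sequences $\ve_n\downarrow 0$, $\theta_n\in[0,2\pi]$, and $\tau^{(n)}\in\mathcal{S}$ such that $x_n := \Phi(\tau^{(n)},\ve_n e^{i\theta_n})\in X_\Gamma$ but $x_n\notin N_\sL$. Compactness of $\mathcal{S}\times[0,2\pi]$ and continuity of $\Phi$ give, along a subsequence, $\tau^{(n)}\to \tau^{(\infty)}$, $\theta_n\to\theta_\infty$, and $x_n\to x_\infty := \Phi(\tau^{(\infty)},0)\in\sigma$. Closedness of $X_\Gamma$ forces $x_\infty\in X_\Gamma$, so by the description \eqref{0.1} one has $x_\infty\in\sigma\cap X_\Gamma \subset \sL$. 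Taking $N_\sL$ open (without loss of generality), this puts $x_\infty\in N_\sL$, contradicting the fact that every $x_n$ lies in the closed complement $\P^{n-1}\setminus N_\sL$.

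The only step that requires any genuine care is the continuity setup of $\Phi$; the rest is a double application of the compactness principle that continuous maps carry small perturbations into prescribed neighbourhoods, together with the fact that the limiting intersection $\sigma\cap X_\Gamma$ is already contained in $\sL$ by \eqref{0.1}. No algebro-geometric input is needed beyond the closedness of $X_\Gamma$ and $\sL$ in $\P^{n-1}$.
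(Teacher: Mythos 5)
Your proof is correct and follows essentially the same route as the paper's: both conclusions are obtained by compactness of the (deformed) simplex, using that $\sigma\cap X_\Gamma\subset\sL$ and that $X_\Gamma$ is closed. The paper phrases the second step as ``shrink $N_\sigma$ until $N_\sigma\cap X_\Gamma\subset N_\sL$'' rather than via your sequential compactness argument, but the content is identical.
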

 \begin{proof}We have $\sigma\cap X_\Gamma\subset \sL$. By compacity, $\widetilde\sigma_{\ve e^{i\theta}} \subset N_\sigma$ for $\ve <<1$. Again by compacity, if we shrink $N_\sigma$ we will have $N_\sigma \cap X_\Gamma \subset N_\sL$.
 \end{proof}
 \begin{rmk}\label{rmk3.5} Write $H_{k,t}$ for the projective span of the points
 $$r_1(t),\dotsc,\widehat{r_k(t)},\dotsc,r_n(t),$$
 and let $\Delta_t = \bigcup_{k=1}^n H_{k, t}$. Thus, $\Delta = \Delta_0$ and we may consider the {\it monodromy} for $\Delta_{\ve e^{i\theta}},\ 0 \le \theta\le 2\pi$. More precisely, renormalization in physics involves an integral over the chain $\sigma$. The integrand has poles along $X_\Gamma$. Since $\sigma\cap X_\Gamma \neq \emptyset$, the integral is possibly divergent. On the other hand, by corollary \ref{cor3.3}, the chain $\sigma_\ve$ does not meet $X_\Gamma$ and so represents a singular homology class
 \eq{3.6a}{[\sigma_\ve] \in H_{n-1}(\P^{n-1}-X_\Gamma, \Delta_\ve-\Delta_\ve\cap X_\Gamma,\Z).
 }
 (Since all $q_{kj}>0$, it follows that $\sigma_\ve \subset \sigma$, and
 points in $\sigma_\ve$ have all coordinates $>0$.)
 We consider the topological pairs $(\P^{n-1}-X_\Gamma, \Delta_{\ve e^{i\theta}}-\Delta_{\ve e^{i\theta}}\cap X_\Gamma)$ as a family over the circle and we continuously deform our chain $\sigma_\ve$ to a family of chains $\sigma_{\ve e^{i\theta}}$ on $\P^{n-1}-X_\Gamma$ with boundary on $ \Delta_{\ve e^{i\theta}}-\Delta_{\ve e^{i\theta}}\cap X_\Gamma$. (We will not be able to take $\sigma_{\ve e^{i\theta}} = \widetilde\sigma_{\ve e^{i\theta}}$ because this chain can meet $X_\Gamma$.) The monodromy map $m$ is an automorphism of \eqref{3.6a} obtained by winding around the circle: $m(\sigma_{\ve }) = \sigma_{\ve e^{2\pi i}}$. We will calculate $m(\sigma_{\ve}) $ and see that it determines in a natural way the renormalization expansion we want.
 \end{rmk}
 Recall we have $\pi: P(\Gamma) \to \P(\Gamma)$, and $\pi^{-1}(X_\Gamma) = Y_\Gamma\cup E$, where $Y=Y_\Gamma$ is the strict transform of $X_\Gamma$ and $E = \bigcup E_i$ is the exceptional divisor. (The $E_i$ are closures of orbits associated to core subgraphs of $\Gamma$.) We may transfer our monodromy problem to $P(\Gamma)$. $\Delta_{\ve e^{i\theta}}$ is in general position with respect to the blowups, so we obtain a family of divisors $\Delta_{\ve e^{i\theta}}' = \pi^*\Delta_{\ve e^{i\theta}}$ on $P(\Gamma)$. Since $\pi: P(\Gamma)-E-Y_\Gamma \cong \P(\Gamma)-X_\Gamma$, we have an isomorphism of topological pairs
 \eq{}{\Big(P(\Gamma)-E-Y_\Gamma, \Delta_{\ve e^{i\theta}}'-\Delta_{\ve e^{i\theta}}'\cap (E\cup Y_\Gamma)\Big) \cong
 \Big(\P(\Gamma) - X_\Gamma,\Delta_{\ve e^{i\theta}}-\Delta_{\ve e^{i\theta}}\cap X_\Gamma\Big).
 }
 In section \ref{sectopch} we have defined chains $\tau_V^{\eta,\ve,\theta}, \xi^{\eta,\ve,\theta}, c^{\eta,\ve,\theta}$ on $P(\Gamma)$. These chains sit on (or, in the case of $\xi$, within) various $(S^1)^p$-bundles over $P(\Gamma)(\R^{\ge 0})$ where the $S^1$ have radius $\eta$ with respect to a chosen metric. From corollary \ref{cor3.3} it follows that for $0< \eta<<1$, none of these chains meets $Y_\Gamma$. By construction, these chains do not meet $E$, so they may be identified with chains on $\P(\Gamma) - X_\Gamma$. We claim that a small modification of the chains $c^{\eta,\ve,\theta}$ will represent the monodromy chains $\sigma_{\ve e^{i\theta}}$.
 The monodromy chains  $\sigma_{\ve e^{i\theta}}$ should have boundary on $\Delta_{\ve e^{i\theta}}$. On the other hand, the chains $c^{\eta,\ve,\theta}$ were cut off so they had boundaries on tubes a distance $\ve$ from the toric divisors $D_j$ given by the strict transforms of the $A_j=0$ (see fig.(\ref{figg})).
  \begin{figure}[t]\centering
 \figg\caption{The chain $\tau_E^{\eta,\ve}$.}\label{figg}
 \end{figure}
  We must ``massage'' these brutal cutoffs to get them into $\Delta_{\ve e^{i\theta}}$.  Our chains $\tau$ sit on tubes or products of tubes or products of tubes of radius $\eta$ which we can think of as lying on $\P^{n-1}-\sL$. Since $\ve <<\eta$, when we deform $\Delta \to \Delta_{\ve e^{i\theta}}$ the homotopy type of the circles, or product of circles where these divisors intersect the tubes doesn't change. This may seem strange because $\sL \subset \Delta$ while $\Delta_{\ve e^{i\theta}}$ is in general position with respect to $\sL$, but the intersections with a hollow tubular neighborhood of $\sL$ are canonically homotopic. Indeed, we may take $\Delta_{\ve e^{i\theta}}$ to correspond to a point in a small contractible disk in the moduli space for coordinate simplices around the point corresponding to $\Delta$. The canonical path up to homotopy between the two points in moduli will induce the desired homotopy on the intersections. (See fig.(\ref{figh}). The two sets of four dots on the circles are canonically homotopic.).
 \begin{figure}[t]
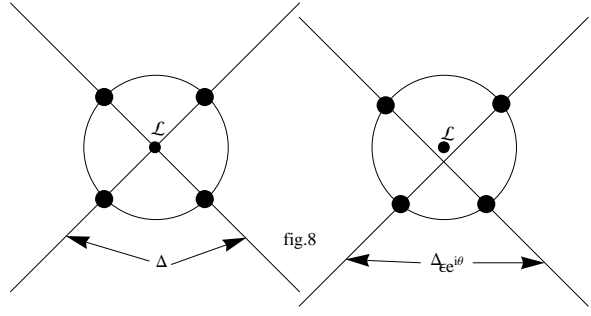
\centering
\figh\caption{Homotopy invariance of $\Delta_t\cap \text{tube over }\sL$.}\label{figh}
 \end{figure}
 In more detail,  by corollary \ref{cor3.3}, the chains $\tau^{\eta,\ve,\theta}$ are bounded away from $X_\Gamma$ by a bound which is independent of $\ve$ as $\ve \to 0$. Outside of some tubular neighborhood $N$ of $X_\Gamma$ we may
 find a space $M$ disjoint from $X_\Gamma$ such that $M$ contains open
 neighborhoods of both $\Delta-N\cap \Delta$ and
 $\Delta_{\ve e^{i\theta}}-N\cap \Delta_{\ve e^{i\theta}}$ and such that we
 have deformation retractions $M \to \Delta-N\cap \Delta$ and $M \to
 \Delta_{\ve e^{i\theta}}-N\cap \Delta_{\ve e^{i\theta}}$. Shrinking $\ve$, we
 may assume our $\ve$-cutoffs lie in $M$. We may then use the
 deformation retract to extend the chain slightly to a chain
 $\tilde\tau_V^{\eta,\ve,\theta}$ which bounds on $\Delta_{\ve e^{i\theta}}$.
 It remains to consider the chains $\xi^{\eta,\ve,\theta}$. Recall these
 were obtained by flowing the chain $\partial \sum_V (-1)^{|V|}
 \tau_V^{\eta,\ve,\theta}$ inward toward the exceptional divisor $E$,
 so $\eta \to \ve$ (cf.\ fig.(\ref{fige})). We are in a small neighborhood of
 $E(\R^{\ge 0})$ hence by corollary \ref{cor3.3} we are away from
 $X_\Gamma$. The point to be checked is that the term $\partial \sum_V
 (-1)^{|V|} \tau_V^{\ve,\ve,\theta}$ is very close to
 $\Delta_{\ve e^{i\theta}}$ so by the same deformation retraction argument
   as above we can extend the chain to bound on
   $\Delta_{\ve e^{i\theta}}$. The subtlety is that we are $\ve$-close to
   $E$ as well, so we need the distance from $\Delta_{\ve e^{i\theta}}$ to
   be $o(\ve)$. Recall \eqref{6.1a} we have the vertices $r_k(\ve e^{i\theta}) = [q_{k1}\ve e^{i\theta},\dotsc,1+q_{kk}\ve e^{i\theta},\dotsc,q_{kn}\ve e^{i\theta}] \in \P^{n-1}$. The coordinate divisor $\Delta_{\ve e^{i\theta}}$ is determined by these projective points. The projective point does not change if we scale the coordinates by $e^{i\theta}$, so the image in $\P^{n-1}$ of the affine simplex below, parametrized by $\tau_1,\dotsc,\tau_n \ge 0,\ \sum \tau_j = 1$, will have boundary in $\Delta_{\ve e^{i\theta}}$:
 \ml{4.4a}{ e^{i\theta}\tau_1(1+ \ve e^{i\theta}q_{11},\dotsc, \ve e^{i\theta}q_{1n})+\cdots \\
 +e^{i\theta}\tau_p( \ve e^{i\theta}q_{p1},\dotsc,1+ \ve e^{i\theta}q_{pp},\dotsc, \ve e^{i\theta}q_{pn}) \\
 + \tau_{p+1}( \ve e^{i\theta}q_{p+1,1},\dotsc, 1+\ve e^{i\theta}q_{p+1,p+1},\dotsc, \ve e^{i\theta}q_{p+1,n})+\cdots  \\
 +  \tau_n( \ve e^{i\theta}q_{n1},\dotsc, \ve e^{i\theta}q_{nn}+1).
 }
 Consider for example $\partial \tau_V^{\ve,\ve,\theta}$ where $V$ is
 the orbit closure corresponding to the blowup of $A_1=\cdots
 =A_p=0$. Take in \eqref{4.4a} $\tau_1,\dotsc,\tau_p\le \ve$ so terms
 in $\tau_j\ve$ may be neglected for $j\le p$. Take $u_j :=
 \tau_j/\tau_k$ where $k>p$ is chosen so that (say) $\tau_k \ge 1/n$. As
 a consequence, $u_1,\dotsc, u_p \le n\ve$. The corresponding
 projective point can then be written
 \ml{4.5a}{\Big[e^{i\theta}(u_1+\ve)+O(\ve^2),\dotsc, e^{i\theta}(u_p+\ve)+O(\ve^2),\\
 u_{p+1}+e^{i\theta}\ve+O(\ve^2),\dotsc,u_n+e^{i\theta}\ve+O(\ve^2)\Big].
 }
 The boundary is given by setting one or more of the $u_j=0$. Points in
 $\partial \tau_V^{\ve,\ve,\theta}$ can be approximated by points
 \eqref{4.5a} which then deform into $\Delta_{\ve e^{i\theta}}$. To see
 this, note that since $V$ is a codimension $1$ orbit closure, there
 will locally be one coordinate on $P(\Gamma)$ near $V$ which takes the
 constant value $\ve e^{i\theta}$ on $\partial \tau_V^{\ve,\ve,\theta}$
 (cf.\ fig.(\ref{fige})). On the other hand, \eqref{4.5a} is in homogeneous
 coordinates on $\P(\Gamma)$. To transform to $P(\Gamma)$ near a
 general point of $V$, one fixes $\ell\le p$ and looks at ratios
 \eq{}{\frac{e^{i\theta}(u_j+\ve)+O(\ve^2)}{e^{i\theta}(u_\ell+\ve)+O(\ve^2)}
 }
 for $1\le j\neq \ell\le p$. Clearly, at the boundary $u_\ell=0$ we will get
 $p-1$ coordinates $u_j/\ve + O(\ve)$ which are close to $\R^{\ge 0}$,
 and one coordinate (corresponding
 to the local defining equation for $V$) of the form $\ve
 e^{i\theta}+O(\ve^2)$. The remaining coordinates on $V$ are ratios of
 the $u_j+\ve e^{i\theta}+O(\ve^2),\ j\ge p+1$. Since $u_k=1$, these
 ratios are again close to $\R^{\ge 0}$. The calculation for orbit
 closures $V$ of codimension $\ge 2$ in $P(\Gamma)$ is similar and is
 left for the reader. We have proven
 \begin{prop}\label{prop6.3} With notation as above, the monodromy of the chain
   $\sigma_\ve \in H_{n-1}(\P^{n-1}-X_\Gamma,\Delta_\ve-X_\Gamma\cap
   \Delta_\ve)$ is represented by the chains $\tilde
   c^{\eta,\ve,\theta}$ given by modifying the chains
   $c^{\eta,\ve,\theta}$ to have boundary in $\Delta_{\ve e^{i\theta}}$. In
     particular, the monodromy $m(\sigma_\ve) =\tilde
   c^{\eta,\ve,2\pi}$ is given by
 \eq{}{m(\sigma_\ve) = \sum_V (-1)^{|V|}\tilde\tau_V^{\ve,\ve}
 }
 where $\tilde\tau_V^{\ve,\ve}$ is the chain $\tau_V^{\ve,\ve}$ defined
 in section \ref{sectopch} with boundary extended to $\Delta_\ve$ as above.
 \end{prop}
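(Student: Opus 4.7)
My plan is to reduce the monodromy problem on $(\P^{n-1}-X_\Gamma, \Delta_{\ve e^{i\theta}}-X_\Gamma\cap \Delta_{\ve e^{i\theta}})$ to the combinatorial chain calculus already set up in Section \ref{sectopch}, by transporting everything to $P(\Gamma)$ via $\pi: P(\Gamma)\to \P(\Gamma)$. Since $\pi$ is an isomorphism away from $E\cup Y_\Gamma$, the homology group in question is unchanged by this transport, and by Corollary \ref{cor3.3} the strict transform $Y_\Gamma$ does not meet $P(\Gamma)(\R^{\ge 0})$. All of the chains $\tau_V^{\eta,\ve,\theta}$, $\xi^{\eta,\ve,\theta}$, and $c^{\eta,\ve,\theta}$ sit on $(S^1)^p$-tubes of radius $\eta$ about $P(\Gamma)(\R^{\ge 0})$, so for $\eta$ sufficiently small they are uniformly bounded away from $Y_\Gamma$, hence from $\pi^{-1}(X_\Gamma)-E$.

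Next I would identify the starting chain. At $\theta=0$ only the chain with no circular variables survives, giving $c^{\eta,\ve,0}=\sigma_{P(\Gamma)}^{\eta,\ve}$ from Example \ref{ex2.1a}, whose image under $\pi$ is $\sigma_\ve$. The family $\theta\mapsto c^{\eta,\ve,\theta}$ is then a continuous family of chains in $P(\Gamma)-E-Y_\Gamma$, so once I arrange the boundary to sit in $\Delta_{\ve e^{i\theta}}'$ it will literally represent the parallel transport of $[\sigma_\ve]$ around the loop. The variation formula (\ref{2.6}) already computed in Section \ref{sectopch} then says
\[
\tilde c^{\eta,\ve,2\pi}-\tilde c^{\eta,\ve,0}\equiv \sum_{V\subsetneq P(\Gamma)}(-1)^{|V|}\tilde\tau_V^{\ve,\ve},
\]
which is exactly the asserted formula once one checks that the correction $\tilde{}$ (extending cutoff boundaries into $\Delta_{\ve e^{i\theta}}$) preserves the relative homology class.

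The remaining and main technical step is to justify this extension. Away from the exceptional divisor, the cutoff boundaries of $c^{\eta,\ve,\theta}$ live in a hollow tubular neighborhood of the strict transform $D$ of the coordinate divisor $\Delta$, and both $\Delta$ and $\Delta_{\ve e^{i\theta}}$ deformation retract onto a common spine $M$ in the complement of a fixed tubular neighborhood of $X_\Gamma$; the deformation retract supplies the canonical extension, as in the picture of Fig.\ \ref{figh}. The delicate point, which I expect to be the hardest, is near the exceptional divisor $E$: here the tubes $\partial\tau_V^{\ve,\ve,\theta}$ are themselves at distance $\ve$ from $E$, and one needs the distance from these boundary pieces to $\Delta_{\ve e^{i\theta}}$ to be $o(\ve)$ in order for the retraction argument to stay inside the complement of $X_\Gamma\cup E$. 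I would verify this by the local toric computation in affine coordinates: parametrize the affine simplex with vertices $r_k(\ve e^{i\theta})$ as in (\ref{6.1a}), rescale by $e^{i\theta}$ to put it in the form (\ref{4.4a}), pass to coordinates on $V$ via ratios as in (\ref{4.5a}), and read off that the local defining equation for $V$ takes the value $\ve e^{i\theta}+O(\ve^2)$ on the boundary piece, while the remaining coordinates lie $O(\ve)$-close to $\R^{\ge 0}$. This matches precisely the local description of $\partial\tau_V^{\ve,\ve,\theta}$ and produces the desired $o(\ve)$ deformation into $\Delta_{\ve e^{i\theta}}'$; higher codimension orbits $V$ are handled by iterating the same computation on each of the factors in (\ref{1.2a}).
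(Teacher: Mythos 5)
Your proposal follows essentially the same route as the paper: transfer to $P(\Gamma)$ via $\pi$, use Corollary \ref{cor3.3} to keep the chains of section \ref{sectopch} away from $Y_\Gamma$, invoke the variation formula \eqref{2.6}, and justify the boundary extension into $\Delta_{\ve e^{i\theta}}$ by the deformation-retract argument away from $E$ together with the explicit $o(\ve)$ coordinate computation \eqref{4.4a}--\eqref{4.5a} near $E$. This matches the paper's proof step for step, including the identification of the delicate point near the exceptional divisor.
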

 It will be convenient to simplify the notation and write
 \eq{}{\tau_V^\ve :=\tilde\tau_V^{\ve,\ve}.
 }

\section{Parametric representations}\label{parrep}
In this section we list well-known representations of the Feynman rules and then prepare for a subsequent analysis of short-distance singularities in terms of
mixed Hodge structures.
\subsection{Kirchhoff--Symanzik  polynomials}
Let
\bea
\psi(\Gamma) & = & \sum_{T}\prod_{e\not\in T}A_e,\\
\phi(\Gamma) & = & \sum_{T_1\cup T_2=T} Q(T_1)\cdot Q(T_2)\prod_{e\not\in T_1\cup T_2}A_e,
\eea
be the two homogenous Kirchhoff--Symanzik polynomials \cite{ItzZ,Tod}. Here, $T$ is a spanning tree of the 1PI graph $\Gamma$ and $T_1,T_2$ are disjoint trees which together cover all vertices of $\Gamma$. Also,
$Q(T_i)$ is the sum of all external momenta attached to vertices covered by $T_i$. Note that $\phi(\Gamma)$ can be written as
\be \sum_\textrm{kinetic invariants $(q_i\cdot q_j)$} R_{q_i\cdot q_j}.\ee
Here, $q_i$ are external momenta attached to $T_1$ and $q_j$ to $T_2$, and $R_{q_i\cdot q_j}$ are rational functions of the edge variables only, and the sum is over independent such kinematical invariants where momentum conservation has been taken into account.
We extend the definition to the empty graph $\One$ by $\psi(\One)=1$, $\phi(\One)=0$.

Let $|\cdot |_\gamma$ denote the degree of a polynomial with regard to variables of the graph $\gamma$.
\begin{lem}\label{lem4} i) $\deg \phi = \deg \psi + 1$. \\
ii) \be \psi(\Gamma)=\psi(\Gamma/\!/\gamma)\psi(\gamma)+\psi_{\Gamma,\gamma} \label{psigg}\ee
with $|\psi_{\Gamma,\gamma}|_\gamma>|\psi(\gamma)|_\gamma$ for all core graphs $\Gamma$ and subgraphs $\gamma$.\\
iii)\be \phi(\Gamma)=\phi(\Gamma/\!/\gamma)\psi(\gamma)+\phi_{\Gamma,\gamma} \label{phigg}\ee
with $|\phi_{\Gamma,\gamma}|_\gamma>|\psi(\gamma)|_\gamma$ for all core graphs $\Gamma$ and subgraphs $\gamma$.
\end{lem}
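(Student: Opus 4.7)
The plan is to handle (i) by a dimension count, and to prove (ii) and (iii) by the same combinatorial device: partition spanning (resp.\ spanning-2-)forests of $\Gamma$ by how they restrict to $\gamma$, and recognize the main stratum as the asserted product. For (i), every spanning tree of $\Gamma$ has $|V(\Gamma)|-h_0(\Gamma)$ edges, giving $\deg\psi=h_1(\Gamma)$; a spanning 2-forest has one fewer edge, so $\deg\phi=h_1(\Gamma)+1$.

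For (ii): for a spanning tree $T$ of $\Gamma$, the restriction $T\cap\gamma$ is a forest on $V(\gamma)$ with at most $|V(\gamma)|-h_0(\gamma)$ edges. Call $T$ $\gamma$-maximal when equality holds. For such $T$, the contraction $T/\gamma$ is a spanning tree of $\Gamma/\!/\gamma$, and
\[ T\longmapsto \bigl(T\cap\gamma,\,T/\gamma\bigr) \]
is a bijection onto (maximal spanning forest of $\gamma$)$\times$(spanning tree of $\Gamma/\!/\gamma$). Since $E(\Gamma)\setminus T$ decomposes disjointly as $\bigl(E(\gamma)\setminus(T\cap\gamma)\bigr)\sqcup\bigl(E(\Gamma/\!/\gamma)\setminus(T/\gamma)\bigr)$, summing over the $\gamma$-maximal stratum yields exactly $\psi(\gamma)\cdot\psi(\Gamma/\!/\gamma)$. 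For any non-$\gamma$-maximal $T$, the forest $T\cap\gamma$ is strictly smaller, so the $\gamma$-degree of the monomial strictly exceeds $h_1(\gamma)=|\psi(\gamma)|_\gamma$, and these terms assemble into $\psi_{\Gamma,\gamma}$.

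For (iii), the same strategy applies to spanning 2-forests $(T_1,T_2)$. The restriction $(T_1\cup T_2)\cap\gamma$ is a forest on $V(\gamma)$ with at most $|V(\gamma)|-h_0(\gamma)$ edges; equality defines the main stratum. Because $T_1$ and $T_2$ have disjoint vertex sets, equality forces $V(\gamma)$ to lie inside exactly one of them, say $V(\gamma)\subset V(T_1)$, with $T_1\cap\gamma$ a maximal spanning forest of $\gamma$. Contracting $\gamma$ yields a 2-forest $(T_1/\gamma,T_2/\gamma)$ of $\Gamma/\!/\gamma$ in which the contracted vertex $v_\gamma$ lies in $T_1/\gamma$; since every external half-edge of $\Gamma$ incident to $V(\gamma)$ becomes incident to $v_\gamma$, one has $Q(T_j)=Q(T_j/\gamma)$ for $j=1,2$. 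Summing over this stratum gives $\psi(\gamma)\cdot\phi(\Gamma/\!/\gamma)$, while the remaining 2-forests contribute monomials of $\gamma$-degree strictly greater than $h_1(\gamma)$, producing $\phi_{\Gamma,\gamma}$ with the required bound.

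The main point needing care is the momentum bookkeeping in (iii): one has to verify that the bijection between main-stratum 2-forests of $\Gamma$ and pairs (2-forest of $\Gamma/\!/\gamma$, maximal spanning forest of $\gamma$) respects the labels $Q(T_i)$ on both sides. This reduces to tracking how external half-edges at $V(\gamma)$ migrate to $v_\gamma$ under contraction, and checking that they are then correctly assigned to the unique component of $(T_1/\gamma,T_2/\gamma)$ containing $v_\gamma$. Once this is in hand, the degree bounds in (ii) and (iii) are immediate counting statements.
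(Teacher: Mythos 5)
Your proof is correct in substance but takes a different route from the paper's. The paper disposes of (ii) by citing \cite{BEK} and then deduces (iii) from (ii) with one observation: the monomials of $\phi$ are obtained from those of $\psi$ by deleting one further edge from a spanning tree, and depending on whether that edge lies in $\Gamma/\!/\gamma$ or in $\gamma$ one lands either in $\phi(\Gamma/\!/\gamma)\psi(\gamma)$ or in a term of strictly larger $\gamma$-degree. You instead prove both (ii) and (iii) directly and in parallel, by stratifying spanning trees (resp.\ spanning $2$-forests) of $\Gamma$ according to their restriction to $\gamma$ and identifying the maximal stratum with the product term via the contraction bijection. Your route is longer but self-contained, and it makes explicit the one point the paper's one-line derivation of (iii) silently elides, namely that the momentum labels $Q(T_i)$ are preserved under the bijection; that verification is genuinely needed and you supply it. One small imprecision: when $\gamma$ is disconnected (which the lemma allows, since arbitrary subgraphs are admitted), maximality of $(T_1\cup T_2)\cap\gamma$ does \emph{not} force $V(\gamma)\subset V(T_1)$; it only forces each connected component of the maximal spanning forest of $\gamma$ to lie entirely in one of $T_1,T_2$, and different components may land in different trees. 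The bijection and the identity $Q(T_j)=Q(T_j/\gamma)$ survive verbatim once stated component-by-component, because $\Gamma/\!/\gamma$ contracts each component of $\gamma$ to its own vertex and the external momenta at that component migrate to that vertex, which sits in the correct tree. With that rephrasing your argument is complete.
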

\noindent Proof: i) by definition, ii) has been proved in \cite{BEK}, iii) follows similarly by noting that the two-trees of $\phi$ are obtained from the spanning trees of $\psi$ by removing an edge. If that edge belongs to $\Gamma/\!/\gamma$, we get $\phi(\Gamma/\!/\gamma)\psi(\gamma)$. If it belongs to $\gamma$, we get a monomial
$m$ with $|m|_\gamma>|\psi(\gamma)|_\gamma$.\hfill $\Box$\\
Note that it might happen that $\phi(\Gamma/\!/\gamma)=0$, if the external momenta flows through subgraphs $\gamma$ only. In such a case
(which can lead to infrared divergences) one easily shows $\phi_{\Gamma,\gamma}=\psi(\Gamma/\!/\gamma)\phi(\gamma)$.
\subsection{Feynman rules}
>From these polynomials one constructs the Feynman rules of a given theory. For example
we have in $\phi^4$ theory for a vertex graph $\Gamma$, $\textrm{sdd}(\Gamma)=0$,
\be \Phi(\Gamma)=\int_{\mathbb{R}_{>\epsilon}^k}\frac{e^{-\sum_{\textrm{edges $e$}} A_e m_e^2-\frac{\phi(\Gamma)}{\psi(\Gamma)}}}{\psi^2(\Gamma)}d\!A_1\cdots d\!A_{|\Gamma^{[1]}|}.\label{7.6a}\ee
We will write $\int_{>\epsilon} d\!A_{\Gamma}$ to abbreviate the affine chain of integration.

The integral is over the $k$-dimensional hypercube of positive real coordinates in $\mathbb{R}_{>\epsilon}$ with a small strip of width $1\gg\epsilon>0$ removed at each axis.
We regard the integrand
\be
\mathbb{C}\ni \iota(\Gamma):=\frac{e^{-\sum_e A_e m_e^2-\frac{\phi(\Gamma)}{\psi(\Gamma)}}}{\psi^2(\Gamma)},
\ee
$\iota(\Gamma)=\iota\left(\Gamma)(\{m^2\},\{q_i\cdot q_j\},\{A\}\right)$ as a function of the set of internal masses $\{m^2\}$, the set of external momenta $\{q_i\cdot q_j\}$ (which can be considered as labels on external half-edges) and the set of graph coordinates $\{A\}$, and $\iota$ takes values in $ \mathbb{C}$.
We often omit the $A$ dependence and abbreviate $P=\{m\},\{q_i\cdot q_j\}$ for all these external parameters of the integrand: $\iota=\iota(P)$.
The renormalization schemes we consider are determined by the condition that the Green function shall vanish at a particular renormalization point $R$,
so that renormalization becomes an iterated sequence of subtractions \be \iota_-(P,R):=\iota(P)-\iota(R).\ee
We let ${\textrm{sdd}}(\Gamma)$ be the superficial degree of divergence of a graph $\Gamma$ given as (see also Eq.(\ref{sdd}) for a refined version)\be {\textrm{sdd}}(\Gamma):=D|\Gamma|-\sum_{\textrm{edges $e$}}w_e-\sum_{\textrm{vertices $v$}}w_v,\ee
where $|\Gamma|$ is the rank of the first Betti homology, $D$ the dimension of spacetime which we keep as an integer, $w_e$ the weights of the propagator for edge $e$ as prescribed by free field theory and $w_v$ the weight of the vertex as prescribed by the interaction Lagrangian.
Note that we can set the width $\epsilon$ to zero, $\int_{>\epsilon} d\!A_\Gamma\to \int_{>0}d\!A_\Gamma$ if the integrand $\iota_-(\Gamma)$
is evaluated on a graph $\Gamma$ which has no divergent subgraphs.

Throughout, we assume that all all masses and external momenta are in general position so that there are no zeroes in the $\phi$-polynomial off the origin
for positive values of the $A$ variables.  In particular, we assume that the point $P$ is chosen appropriately away from all mass-shell and kinematical singularities. We remind the reader of the notation $(\Gamma,\sigma)$ (section (\ref{extleg})) where $\sigma$ stores all the necessary detail on how to evaluate the graph $\Gamma$.

A special role is played by the evaluations $(\Gamma,\sigma_{P=0})$. They set all internal masses and momenta to zero. Note that this leads immediately to infrared divergences: the Feynman integrands $\iota(\cdot)(P=0)$ are missing the exponential in the numerator, which provides a regulator at large values of the $A$ variables, and hence an infrared regulator.
The ultraviolet singularities at small values of the $A$ variables are taken into account by the renormalization procedure itself, and hence by our limiting
mixed Hodge structure. We will eliminate the case $P=0$ below using that $\iota_-$ evaluates to zero if there is no dependence on masses or external momenta.

\subsection{General remarks on renormalization and QFT}
We now consider the renormalization Hopf algebra $H_\sR$ of 1PI Feynman graphs in section (\ref{renhopf}).
We use the notation
\be \Delta(\Gamma)=\sum_\gamma \gamma\otimes \Gamma/\!/\gamma,\label{cop}\ee
for its coproduct. Also, $\Delta(\One)=\One\otimes \One$. Projection $P$ into the augmentation ideal on the rhs is written as
\be ({\textrm{id}}\otimes P)\Delta(\Gamma)=\sum_{\emptyset\not=\Gamma/\!/\gamma} \gamma\otimes \Gamma/\!/\gamma,\label{copP}\ee
so that for example the antipode $S$ is
\be S(\Gamma)=-\sum_{\emptyset\not=\Gamma/\!/\gamma}S(\gamma)\Gamma/\!/\gamma=:-\bar{\Gamma}.\ee
Furthermore, we introduce a forest notation for the antipode:
\be S(\Gamma)=\sum_{\textrm{[for]}}(-1)^{|\textrm{[for]}|}\Gamma/\!/\textrm{[for]}\prod_{j=1}^{|\textrm{[for]}|}\gamma_{\textrm{[for]},j},\ee
where the sum is over all forests $\textrm{[for]}$ and the product is over all subgraphs which make up the forest.
Here, a forest [for] is a possibly empty collection of proper superficially divergent 1PI subgraphs $\gamma_{\textrm{[for]},j}$ of $\Gamma$
which are mutually disjoint or nested. We call a forest [for] maximal if $\Gamma/\!/\textrm{[for]}$ is a primitive element of the Hopf algebra.
As edge sets \be \Gamma=\left(\Gamma/\!/\mathrm{[for]}\right)\cup \left(\cup_j\gamma_j\right).\label{fornot}\ee
This is in one-to-one correspondence with the representation of the antipode as a sum over all cuts on rooted trees $\rho_\sT(\Gamma)$  as detailed in section (\ref{ssecrth}) above.
The integer $|\textrm{[for]}|$ is the number of edges removed in this representation.

Let us first assume that the graph $\Gamma$ and all its core subgraphs have a non-positive superficial degree of divergence, so they are convergent or provide log-pole: ${\textrm{sdd}}\leq 0$ for all elements in (the complement of) the forests.

As the integrand $\iota(\Gamma)(P)$ depends on $P=\{m\},\{q_i\cdot q_j\}$ only through the argument of the exponential, we redefine the second Kirchhoff--Symanzik polynomial as follows:
\be \phi(\Gamma)(\{q_i\cdot q_j\})\to \varphi(\Gamma)(P):=\phi(\Gamma)(\{q_i\cdot q_j\})+\psi(\Gamma)\sum_e A_em_e^2.\ee
Then, the unrenormalized integrand is
\be \iota(\Gamma)(P)=\frac{\exp^{-\frac{\varphi(\Gamma)(P)}{\psi(\Gamma)}}}{\psi^2(\Gamma)}.\ee
With this notation,
the renormalized integrand is (in all sums and products over $j$ here and in the following, $j$ runs from $1$ to $|\textrm{[for]}|$)
\bea\label{7.16a} \iota_R(\Gamma)(P,R) & = & \sum_{\textrm{[for]}}(-1)^{\textrm{[for]}}
\frac{\exp{-\left(\frac{\varphi(\Gamma/\!/\textrm{[for]})(P)}{
\psi(\Gamma/\!/\textrm{[for]})}+\sum_j\frac{\varphi(\gamma_j)(R)}{
\psi(\gamma_j)}\right)}}{\psi^2(\gamma/\!/\textrm{[for]})\prod_j\psi^2(\gamma_j)}\nonumber\\
& & -\sum_{\textrm{[for]}}(-1)^{\textrm{[for]}}
\frac{\exp{-\left(\frac{\varphi(\Gamma/\!/\textrm{[for]})(R)}{
\psi(\Gamma/\!/\textrm{[for]})}+\sum_j\frac{\varphi(\gamma_j)(R)}{
\psi(\gamma_j)}\right)}}{\psi^2(\gamma/\!/\textrm{[for]})\prod_j\psi^2(\gamma_j)}\label{reno}\\
 & =: & \bar{\iota}(\Gamma)(P,R)+S^\iota(\Gamma)(R),\nonumber
\eea
where $+S^\iota(\Gamma)(R)=-\bar{\iota}(\Gamma)(R,R)$ is the integrand for the counterterm, and $\bar{\iota}(\Gamma)(P,R)$, the integrand in the first line,
delivers upon integrating Bogoliubov's $\Bar{R}$ operation.
Note that this formula (\ref{reno}) is just the evaluation \be m(S_R^\iota\otimes \iota)\Delta(\Gamma),\ee which guarantees that the corresponding Feynman integral
exists in the limit $\epsilon\to 0$ \cite{Kreimer},\cite{Tor}.

This Feynman integral is obtained by integrating from $\epsilon$ to $\infty$ each edge variable. For the renormalized Feynman integral $\Phi_R(\Gamma)(P)$ we can take the limit $\epsilon\to 0$, while for the $\bar{R}$-operation  \be \bar{\Phi}(\Gamma)(P,R;\epsilon)=\int_\epsilon \bar{\iota}(\Gamma)(P,R),\ee and the counterterm \be S_{R;\epsilon}^\Phi(\Gamma)=-\bar{\Phi}(\Gamma)(R,R;\epsilon),\ee the lower boundary remains as a dimension-full parameter in the integral.
Note that the result (\ref{reno}) above can also be written in the $P-R$ form, typical for renormalization schemes which  subtract by constraints on physical parameters:
\be \iota_R(\Gamma)(P,R)=\sum_{\emptyset\not=\Gamma/\!/\gamma}\left[\iota(\Gamma/\!/\gamma)(P)-\iota(\Gamma/\!/\gamma)(R)\right]S_{R;\epsilon}^\iota(\gamma),\label{diff}\ee
and as
\be \bar{\iota}(\Gamma)(P,R)=\sum_{\emptyset\not= \Gamma/\!/\gamma} S_{R;\epsilon}^\iota(\gamma)\iota(\Gamma/\!/\gamma)(P)\Rightarrow
\iota_R(\Gamma)(P,R)=\sum_{\gamma} S_{R;\epsilon}^\iota(\gamma)\iota(\Gamma/\!/\gamma)(P),\ee
using the notation (\ref{copP},\ref{cop}).
Similarly, for Feynman integrals,
\be \bar{\Phi}(\Gamma)(P,R;\epsilon)=\sum_{\emptyset\not= \Gamma/\!/\gamma} S_{R;\epsilon}^\Phi(\gamma)\Phi(\Gamma/\!/\gamma)(P),\;
\Phi_R(\Gamma)(P)=\lim_{\epsilon\to 0}\sum_{\gamma} S_{R;\epsilon}^\Phi(\gamma)\Phi(\Gamma/\!/\gamma)(P).\ee
When it comes to actually calculating the integral \eqref{7.6a} (or, in its renormalized form \eqref{7.16a}), something rather remarkable happens.  By lemma \ref{lem4}(i), the term in the exponential in these integrals is homogeneous of degree $1$ in the edge variables $A_i$. The assumption $\text{sdd}(\Gamma)=0$ means $dA/\psi^2$ is homogeneous of degree $0$. Making the change of variable $A_i=ta_i$, we find
\eq{}{dA/\psi(A)^2 = dt/t\wedge(\sum (-1)^{j-1}a_jda_1\wedge\cdots\wedge\widehat{da_j}\wedge\cdots)/\psi(a)^2 = dt/t\wedge\Omega/\psi^2.
}
Note that $\Omega/\psi^2$ is naturally a meromorphic form on the projective space $\P(\Gamma)$ with homogeneous coordinates the $a_i$. Writing $\sigma = \{a_i\ge 0\} \subset \P(\Gamma)(\R)$, we see that the renormalized integral can be rewritten up to a term which is $O(\ve)$ as a sum of terms of the form
\eq{7.25a}{\int_\sigma \Omega/\psi_j^2\int_\ve^\infty \Big(e^{(-tf_j(a))}-e^{(-tg_j(a))}\Big)dt/t = \int_\sigma \Omega/\psi_j^2\Big(E_1(\ve f_j(a)) - E_1(\ve g_j(a))\Big),
}
where
\be E_1(z):=\int_1^\infty e^{-tz}\frac{d\!t}{t}=-\gamma_E-\ln z+O(z);\quad z\to 0\ee
is the exponential integral. (Here $f_j(a), g_j(a)$ are defined by taking the locus $a_i\ge 0, \sum a_i=1$.) As long as $f_j(a), g_j(a)>0$, we may allow $\ve \to 0$ for fixed $a$. The Euler constant and $\log \ve$ terms cancel. When the dust settles, we are left with the projective representation for the renormalized Feynman integral
\be \Phi_R(\Gamma)(P)= \int_\sigma \Omega_\Gamma \sum_{\textrm{[for]}}(-1)^{\textrm{[for]}}
\frac{\ln{\left(
\frac{\varphi(\Gamma/\!/\textrm{[for]})(P)\prod_j\psi(\gamma_j)+\sum_j\varphi(\gamma_j)(R)\psi(\Gamma/\!/\textrm{[for]})
\prod_{h\not= j}\psi(\gamma_h)}{
\varphi(\Gamma/\!/\textrm{[for]})(R)\prod_j\psi(\gamma_j)+\sum_j\varphi(\gamma_j)(R)\psi(\Gamma/\!/\textrm{[for]})
\prod_{h\not= j}\psi(\gamma_h)}\right)}}{\psi^2(\Gamma/\!/\textrm{[for]})\prod_j\psi^2(\gamma_j)}.\ee
Note that the use of $\sigma$ is justified as long as the integrand has all subdivergences subtracted, so is in the $\bar{\iota}$ form, so that lower boundaries in the $a_i$ variables can be set to zero indeed.

By (\ref{diff}), this can be equivalently written as
\be \Phi_R(\Gamma)(P)=\lim_{\epsilon\to 0}\sum_{\gamma}S_{R;\epsilon}^\Phi(\gamma)
\int_{>\epsilon}d\!A_{\Gamma/\!/\gamma}\iota_-(\Gamma/\!/\gamma)(P,R),\ee
in any renormalization scheme which is described by kinematical subtractions $P\to R$.
\begin{rem}
It will be our goal to replace the affine
$\int d\! A$
by the projective
$\int d\!\Omega$
in the above. The presence of lower boundaries, which can not be ignored as the integrand has divergent subgraphs, allows this only upon introducing
suitable chains $\tau_\gamma^\epsilon$ as discussed in previous sections.
\end{rem}
Next, we relax the case of log-divergence.
\subsection{Reduction of graphs with ${\rm{ssd}}(\Gamma)>0$}\label{logpolereduct}
We start with an example. To keep things simple but not too simple, we consider the one-loop self-energy graph in $\phi^3_6$ theory,
a scalar field theory with a cubic interaction in six dimensions of space-time.
We have
\be \Phi(\Gamma)(P)=\int_{>\epsilon} d\!A_\Gamma\iota(\Gamma)(P)=\int_{>\epsilon} d\!A_\Gamma\frac{e^{-\frac{\varphi(\Gamma)}{\psi(\Gamma)}}}{\psi(\Gamma)^3}\equiv \int_\epsilon^\infty d\!A_1 d\!A_2  \frac{e^{-\frac{m^2(A_1+A_2)^2+q^2A_1A_2}{(A_1+A_2)}}}{(A_1+A_2)^3}.\ee
We will renormalize by suitable subtractions at chosen values of masses and momenta in the $\varphi$-polynomial.
We hence (with subdivergences taken care of by suitable bar-operations $\iota\to\bar{\iota}$ in the general case) replace $\iota(\Gamma)(P)$ by $\iota(\Gamma)(P)-\iota(\Gamma)(0)$, as this leaves $\iota_-(\Gamma)(P,R)$ invariant.

Then the above can  be written, with this subtraction, and by the familiar change of variables $A_i=t a_i$, and by one partial integration in $t$,
\bea \Phi(\Gamma)(P) & = & \int_\sigma d\!\Omega \int_\epsilon^\infty \frac{d\!t}{t} \frac{[m^2(a_1+a_2)^2+q^2a_1a_2]e^{-t\frac{m^2(a_1+a_2)^2+q^2a_1a_2}{(a_1+a_2)}}}{(a_1+a_2)^4}\nonumber\\
 & & - \int_\sigma d\!\Omega  \frac{[m^2(a_1+a_2)^2+q^2a_1a_2]}{(a_1+a_2)^4},\label{exquad}\eea
 where we expanded the boundary term up to terms constant in $\epsilon$, which gave the term in the second line. We discarded already the pure pole term $\sim 1/\epsilon$ from $\Phi(\Gamma)(P=0)=\int_{>\epsilon}\frac{d\!A}{(A_1+A_2)^3}$ $=$ $\int_\epsilon^\infty d\!t/t^2 \int_0^\infty db_2 1/(1+b_2)^3$.

 Note that graphs $\Gamma$ with $\textrm{sdd}>0$ have $\textrm{res}(\Gamma)=2$. They hence depend on a single kinematical invariant $q^2$ say,
 $\phi(\Gamma)=\phi(\Gamma)(q^2)$, for which we write $\phi(\Gamma)_{q^2}$.

 The result in (\ref{exquad}) leads us to define two top-degree forms. (Here $\Omega=a_1da_2-a_2da_1$ and we still write $\phi,\psi$ for the Kirchhoff--Symanzik polynomials regarded as dependent on either $a_i$ or $A_i$ variables below).
 \be \omega_\Box=\omega_\Box(\Gamma)=\Omega\frac{\phi_1(\Gamma)}{\psi(\Gamma)^4}=\Omega\frac{a_1a_2}{(a_1+a_2)^4},\ee
and
\be \omega_{m^2}=\omega_{m^2}(\Gamma)=\Omega\frac{(a_1+a_2)^2}{\psi(\Gamma)^4}=\Omega\frac{1}{(a_1+a_2)^2},\ee
so that
\bea \Phi(\Gamma)(P) & = & -m^2\int_\sigma [\omega_\Box+\omega_{m^2}]-(q^2-m^2)\int_\sigma \omega_\Box\\
 & & +m^2 \int_\sigma [\omega_\Box+\omega_{m^2}] \int_\epsilon^\infty \frac{d\!t}{t}e^{-t\frac{\varphi(\Gamma)(P)}{\psi(\Gamma)}}\nonumber\\
 & & +(q^2-m^2)\int_\sigma \omega_\Box \int_\epsilon^\infty \frac{d\!t}{t}e^{-t\frac{\varphi(\Gamma)(P)}{\psi(\Gamma)}}.\nonumber
\eea
There are corresponding affine integrands
\bea \iota_\Box(\Gamma) & = & \frac{\phi_1(\Gamma)}{\psi(\Gamma)^4}e^{-\frac{\varphi(\Gamma)(P)}{\psi(\Gamma)}},\\
\iota_{m^2}(\Gamma) & = & \frac{(a_1+a_2)^2}{\psi(\Gamma)^4}e^{-\frac{\varphi(\Gamma)(P)}{\psi(\Gamma)}}.
\eea

The graph $\Gamma$ is renormalized by a choice of a renormalization condition $R_\Box$ for the coefficient of $q^2-m^2$ (wave function renormalization),
and by the choice of a condition $R_{m^2}$ for the mass renormalization. $R$ is often still used to denote the pair of those.
\be \Phi(\Gamma)(P)+m^2\delta_{m^2}+q^2z_\Box=\Phi_{R_\Box,R_{m^2}}(\Gamma)(P).\ee

The mass counterterm is then
\be m^2\delta_{m^2}=-m^2\int_\sigma  [\omega_\Box+\omega_{m^2}]\left(1-\int_\epsilon^\infty \frac{d\!t}{t}e^{-t\frac{\varphi(\Gamma)(R_{m^2})}{\psi(\Gamma)}}\right),\label{massct}\ee
and the wave-function renormalization $q^2z_\Box$ is
\be q^2z_\Box=-q^2\int_\sigma  \omega_\Box\left(1-\int_\epsilon^\infty \frac{d\!t}{t}e^{-t\frac{\varphi(\Gamma)(R_{\Box})}{\psi(\Gamma)}}\right).\label{kinct}\ee
Note the term 1 in the $()$ brackets does not involve exponentials.

The corresponding renormalized contribution is
\be \Phi_{R}(\Gamma)(P)=(q^2-m^2)\int_\sigma \omega_\Box \ln{\frac{\varphi(P)}{\varphi(R_\Box)}}+m^2\int_\sigma[\omega_\Box+\omega_{m^2}]\ln{\frac{\varphi(\Gamma)(P)}{\varphi(\Gamma)(R_{m^2})}}.\ee

The transition from the unrenormalized contribution to the renormalized one is particularly simple upon defining Feynman rules in accordance with external leg structures:
\bea
\Phi((\Gamma,\sigma_\Box)) & = & (q^2-m^2)\int_{>\epsilon}d\!A \frac{\phi_1(\Gamma)e^{-\frac{\varphi(\Gamma)(P)}{\psi(\Gamma)}}}{\psi(\Gamma)^{D/2+1}},\\
\Phi((\Gamma,\sigma_{m^2})) & = & m^2\int_{>\epsilon}d\!A \frac{[\phi_1(\Gamma)+\psi(\Gamma)\sum_e  A_e] e^{-\frac{\varphi(\Gamma)(P)}{\psi(\Gamma)}}}{\psi(\Gamma)^{D/2+1}},\\
\eea
so that renormalization proceeds as before on log-divergent integrands.

This example extends straightforwardly to the case of $\Gamma$ having divergent subgraphs. Let us return to $\phi_4^4$ theory and define for a core
graph $\Gamma$ with $\textrm{sdd}(\Gamma)=2$, (so that it is a self-energy graph and hence has only two external legs, and thus a single kinematical invariant $q^2$), and graph-polynomials $\psi(\Gamma)$, $\phi(\Gamma)=\phi_{q^2}(\Gamma)$, $\varphi(\Gamma)=\varphi(\Gamma)(P)=\phi_{q^2}(\Gamma)+\psi(\Gamma)\sum_eA_em_e^2,$
the forms
\be \omega_\Box(\Gamma)=\Omega_\Gamma \frac{\phi_1(\Gamma)}{\psi^3(\Gamma)},\ee
\be \omega_{m^2}(\Gamma)=\Omega_\Gamma  \frac{\phi_1(\Gamma)+\psi(\Gamma)\sum_e A_e}{\psi^3(\Gamma)}.\ee
The corresponding complete affine integrands $\iota_\Box,\iota_{m^2}$ are immediate replacing $a_i$ by $A_i$ variables, and multiplying by exponentials $\exp{-\varphi(\Gamma)(P)/\psi(\Gamma)}$, with $P\to R$
for counterterms.

One finds by a straightforward computation
\bea \Phi_{R_\Box}((\Gamma,\sigma_\Box))(P) & = & \sum_\gamma S^\Phi_{R;\epsilon}(\gamma)\int_{{\epsilon}} \omega_\Box(\Gamma/\!/\gamma)\ln{\frac{\varphi(\Gamma/\!/\gamma)(P)}{\varphi(\Gamma/\!/\gamma)(R_\Box)}}\\
 & = & \int\Omega_\Gamma  \sum_{\textrm{[for]}}(-1)^{\textrm{[for]}}\omega_\Box(\Gamma/\!/\textrm{[for]})\ln{\frac{\varphi(\Gamma/\!/\textrm{[for]})(P)}{\varphi(\Gamma/\!/
 \textrm{[for]})(R_\Box)}},
\eea
and
\bea \Phi_{R_{m^2}}((\Gamma,\sigma_{m^2}))(P) & = & \sum_\gamma S^\Phi_{R;\epsilon}(\gamma)\int_{{\epsilon}} \omega_{m^2}(\Gamma/\!/\gamma)\ln{\frac{\varphi(\Gamma/\!/\gamma)(P)}{\varphi(\Gamma/\!/\gamma)(R_{m^2})}}\\
 & = & \int\Omega_\Gamma   \sum_{\textrm{[for]}}(-1)^{\textrm{[for]}}\omega_{m^2}(\Gamma/\!/\textrm{[for]})\ln{\frac{\varphi(\Gamma/\!/\textrm{[for]})(P)}{\varphi(\Gamma/\!/\textrm{[for]})(R_{m^2})}}.
\eea
We set \be\Phi_R(\Gamma)(P)\equiv\Phi_R((\Gamma,\One))(P)=\phi_{R_{\Box}}((\Gamma,\sigma_\Box))(P)+\Phi_{R_{m^2}}((\Gamma,\sigma_{m^2}))(P),\ee
in the external leg structure notation of section (\ref{extleg}).
    We can combine the results for graphs $\Gamma$ for all degrees of divergence  $\textrm{sdd}(\Gamma)\geq 0$ by defining $\omega(\Gamma)=\Omega/\psi^2(\Gamma)$ for a log divergent graph with the results above.
And that's that.
Well, we have to hasten and say a word about the Feynman rules when the subgraphs $\gamma$ have $\textrm{sdd}(\gamma)>0$, and hence also about $S^\Phi_{R;\epsilon}(\gamma)$ in that case.

We use, with $P$ the projection  into the augmentation ideal, the notation
\be \bar{\Gamma}=\Gamma+m(S\circ P\otimes P)\Delta=:\Gamma+(\Gamma^\prime)^{-1}\Gamma^{\prime\prime}.\ee
Let us consider the quotient Hopf algebra given by quadratically divergent graphs: $\Delta_2(\Gamma)=\sum_{\gamma,\textrm{sdd}(\gamma)=2}\gamma\otimes
\Gamma/\!/\gamma$. We write
\be \Delta_2(\Gamma)=:\Gamma\otimes \One+\One\otimes\Gamma+\Gamma^\prime_2\otimes \Gamma^{\prime\prime}.\ee

We add $0=+{\Gamma_2^\prime}^{-1}\Gamma^{\prime\prime}-{\Gamma_2^\prime}^{-1}\Gamma^{\prime\prime}$, so
\bea \bar{\Gamma} & = & \Gamma+{\Gamma_2^\prime}^{-1}\Gamma^{\prime\prime}-{\Gamma_2^\prime}^{-1}\Gamma^{\prime\prime}+{\Gamma^\prime}^{-1}\Gamma^{\prime\prime}\\
 & = & \left(\Gamma+{\Gamma_2^\prime}^{-1}\Gamma^{\prime\prime}\right)+\left({\Gamma^\prime}^{-1}-{\Gamma_2^\prime}^{-1}\right)\Gamma^{\prime\prime}.
\eea
Here the sum is over all terms of the coproduct with the $\Gamma_2^\prime$ terms being present whenever $\Gamma^\prime$ is quadratically divergent.

Evaluating the terms $\Gamma_2^\prime$ by $1/\psi^2(\gamma_2^\prime)=\iota(\gamma_2^\prime)(P=0)$ decomposes the bar-operation on the level of integrands as follows.
\be \bar{\iota}(\Gamma)(P)=\overbrace{\left(\iota(\Gamma)(P)+{\iota(\Gamma_2^\prime}^{-1})(P=0)\iota(\Gamma^{\prime\prime})(P)\right)}^{I}
+\iota({\Gamma^\prime}^{-1})(R)\iota(\Gamma^{\prime\prime})(P),\ee
where $\iota({\Gamma^\prime}^{-1})(R)\equiv S^{\iota}_{R;\epsilon}(\Gamma^\prime)$ appears because a subtraction of a $P=0$ term,
from a quadratically divergent term,
precisely delivers those counterterms by our previous analysis. Note that they contain terms which do not have an exponential, as in the example
(\ref{kinct},\ref{massct}). Often, as a two-point vertex of mass type improves the powercounting of the co-graph, we might keep self-energy subgraphs massless, in which case only terms involving $R_\Box$ contribute.

We are left to decompose the terms denoted $I$.
We find by direct computation
\bea I & = & \overbrace{\left[ \omega(\Gamma)+\omega({\Gamma_2^{\prime}}^{-1})\omega(\Gamma^{\prime\prime})\right]e^{-\frac{\varphi(\Gamma)(P)}{\psi(\Gamma)}}}^{II}\\
 & & -\underbrace{\omega({\Gamma_2^\prime}^{-1})\omega(\Gamma^{\prime\prime})\left[e^{-\frac{\varphi(\Gamma)(P)}{\psi(\Gamma)}}
 -e^{-\frac{\varphi(\Gamma/\!/\Gamma^\prime_2)(P)}{\psi(\Gamma/\!/\Gamma^\prime_2)}}\right]}_{III}.
\eea
The terms denoted $II$ gives us the final integrand $\iota(\Gamma)(P)$ with a corresponding form $\omega_{II}(\Gamma)$. $\omega_{II}(\Gamma)=\omega(\Gamma)$ if there are no subgraphs with $\textrm{sdd}=2$.     Note that $II$  has the full $\Gamma$ as an argument in the common exponential, \be II=\omega_{II}(\Gamma)\exp(-\varphi(\Gamma)/\psi(\Gamma)),\ee which defines $\omega_{II}$.
The rational coefficient $\omega_{II}$ has log-poles only for all subgraphs including the ones with $\textrm{sdd}=2$.

The terms $III$ is considered in $t,a_i$ variables. We can integrate $t$ as before. As the rational part of the integrand factorizes in
$\Gamma^\prime_2$ and $\Gamma^{\prime\prime}$ variables, we similarly decompose the former into $s,b_i$, $i\in {\Gamma_2^\prime}^{[1]}$, variables.
We note $s$ only appears in the
log (after the $t$ integration) as a coefficient of $\phi_{\Gamma,\Gamma^\prime_2}$, using Lemma (\ref{lem4}). Partial integration in $s$ eliminates the log
and delivers a top-degree form for the $b_i$ integration. These terms precisely compensate against the constant terms mentioned above,
as $\phi_{\Gamma,\Gamma_2^\prime}=\phi_1(\Gamma_2^\prime)\phi_1(\Gamma-\Gamma_2^\prime)$, using that $\textrm{res}(\Gamma_2^\prime)=2$.

We hence summarize

\begin{thm}\label{cructhm}
\be \Phi_R(\Gamma)(P)=\lim_{\epsilon\to 0}\sum_\gamma S^\Phi_{R;\epsilon}(\gamma)\int_{\epsilon} \omega_{II}(\Gamma/\!/\gamma)\ln{\frac{\varphi(\Gamma/\!/\gamma)(P)}{\varphi(\Gamma/\!/\gamma)(R)}}.\ee
It is understood that each counterterm is computed with a subtraction $R$ as befits its argument $\gamma$, and forms $\Gamma$ are chosen in accordance with the previous derivations. Here, $\omega_{II}$ is constructed to have log-poles only.
As a projective integral this reads
\bea \Phi_R(\Gamma)(P) & = &  \int \Omega_\Gamma \sum_{\rm{[for]}}(-1)^{\rm{[for]}}\times\nonumber\\ & &
\times\ln{\left(
\frac{\varphi(\Gamma/\!/\rm{[for]})(P)\prod_j\psi(\gamma_j)+\sum_j\varphi(\gamma_j)(R)\psi(\Gamma/\!/\rm{[for]})
\prod_{h\not= j}\psi(\gamma_h)}{
\varphi(\Gamma/\!/\rm{[for]})(R)\prod_j\psi(\gamma_j)+\sum_j\varphi(\gamma_j)(R)\psi(\Gamma/\!/\rm{[for]})
\prod_{h\not= j}\psi(\gamma_h)}\right)}\nonumber\\ & & \times\omega(\Gamma/\!/\rm{[for]})\prod_j\omega(\gamma_j).\eea
\end{thm}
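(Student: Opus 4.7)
The plan is to assemble Theorem \ref{cructhm} from three ingredients already in place: the Hopf-algebraic expression $\Phi_R(\Gamma)(P)=\lim_{\epsilon\to0} m(S^\Phi_{R;\epsilon}\otimes\Phi)\Delta(\Gamma)$ for the renormalized amplitude; the scaling $A_i=ta_i$ which, in the log-divergent case, converts an exponential affine integrand into the product of a projective form with $dt/t$ and yields a logarithm after subtraction (as in \eqref{7.25a}); and the reduction procedure of section \ref{logpolereduct} that replaces any integrand with $\textrm{sdd}(\Gamma)\geq 0$ by an equivalent one whose rational part has only log-poles. I would prove the affine formula first, then derive the projective one by a global change of variables.

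First, I would show that if every graph and subgraph has been brought to log-pole form $\omega_{II}$ (so that its affine integrand is $\omega_{II}(\Gamma)\exp(-\varphi(\Gamma)/\psi(\Gamma))$ with $\omega_{II}$ of degree zero in the edge coordinates), then the $P\mapsto R$ subtraction combined with Bogoliubov's $\bar R$-operation gives exactly the affine form claimed. Setting $A_i=ta_i$ in each factor of $\iota(\Gamma/\!/\gamma)(P)-\iota(\Gamma/\!/\gamma)(R)$ produces a difference of $E_1$-functions in $t\varphi(\Gamma/\!/\gamma)/\psi(\Gamma/\!/\gamma)$; the Euler constant and $\ln\epsilon$ cancel, leaving $\int_\sigma \omega_{II}(\Gamma/\!/\gamma)\ln(\varphi(P)/\varphi(R))$. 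The factor $S^\Phi_{R;\epsilon}(\gamma)$ remains as an affine integral over the subgraph variables, giving the first displayed equation.

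The main obstacle is justifying the reduction to $\omega_{II}$ when $\Gamma$ has subgraphs $\gamma$ of higher divergence. Here I would follow the decomposition $\bar\iota(\Gamma)(P)=II-III$ from section \ref{logpolereduct}: the piece $II$ already has the log-pole form and feeds the argument above, while $III$ must be shown to vanish against the constant, non-exponential wave-function and mass terms \eqref{kinct}--\eqref{massct}. For this I would integrate the $t$-variable and then perform a partial integration in the scale variable $s$ associated with $\Gamma'_2$, using Lemma \ref{lem4}(iii) to control the degree of $\phi_{\Gamma,\Gamma'_2}$ and the factorization $\phi_{\Gamma,\Gamma'_2}=\phi_1(\Gamma'_2)\phi_1(\Gamma-\Gamma'_2)$ that results from $|\textrm{res}(\Gamma'_2)|=2$. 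The key combinatorial check is that these cancellations commute with the antipode, so that nested self-energy subdivergences can be reduced simultaneously without introducing spurious boundary terms; this is the most delicate bookkeeping in the proof.

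Finally, to obtain the projective formula, I would rescale each $\gamma_j$ in a forest $[\textrm{for}]$ by its own parameter $t_j$ and rescale $\Gamma/\!/[\textrm{for}]$ by a global $t$. Lemma \ref{lem4}(ii,iii) implies that after these substitutions the arguments of the exponentials combine into a single $\varphi/\psi$ expression whose numerator is the combination $\varphi(\Gamma/\!/[\textrm{for}])(P)\prod_j\psi(\gamma_j)+\sum_j\varphi(\gamma_j)(R)\psi(\Gamma/\!/[\textrm{for}])\prod_{h\neq j}\psi(\gamma_h)$ appearing in the theorem, with the analogous denominator at $P\to R$. Performing the joint $t,t_j$-integration as in \eqref{7.25a} replaces each such difference by the stated logarithm, and the remaining measure is precisely $\Omega_\Gamma\,\omega(\Gamma/\!/[\textrm{for}])\prod_j\omega(\gamma_j)$. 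The forest-sum with alternating sign is inherited from the expansion of $m(S^\iota_R\otimes\iota)\Delta(\Gamma)$ via \eqref{fornot}, finishing the proof.
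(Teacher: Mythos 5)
Your first three paragraphs track the paper's own derivation quite closely: the forest/Hopf formula \eqref{7.16a} together with the $P-R$ form \eqref{diff}, the exponential-integral computation \eqref{7.25a} with cancellation of the Euler constant and $\ln\epsilon$, and the reduction to $\omega_{II}$ via the decomposition $\bar\iota=II-III$, the $t$- and $s$-integrations, and the factorization $\phi_{\Gamma,\Gamma_2'}=\phi_1(\Gamma_2')\phi_1(\Gamma-\Gamma_2')$. That part is sound and is essentially the paper's argument; your added remark that the cancellations must be checked to commute with the antipode for nested self-energy subdivergences is a reasonable (and in the paper largely implicit) piece of bookkeeping.

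The derivation of the projective formula in your last paragraph has a genuine gap. You rescale each $\gamma_j$ by its own parameter $t_j$ and $\Gamma/\!/[\textrm{for}]$ by a global $t$, and claim the exponents then combine into a single $\varphi/\psi$ expression with the stated numerator. They do not: with independent scales the exponent is $tF_X+\sum_j t_jG_j$, where $F_X=\varphi(\Gamma/\!/[\textrm{for}])(X)/\psi(\Gamma/\!/[\textrm{for}])$ and $G_j=\varphi(\gamma_j)(R)/\psi(\gamma_j)$, and the joint integration does not produce a single logarithm. Worse, since the $G_j$-exponentials are identical in the two terms of the $P-R$ difference, each factor $\int_\epsilon^\infty e^{-t_jG_j}\,dt_j/t_j=E_1(\epsilon G_j)$ still diverges like $\ln\epsilon$ for a fixed forest, so the individual forest terms are not finite and the formula cannot be read off term by term. (The separate-scale route is the one the paper pursues later, in the worked examples and in section \ref{seclmhsvsren}; it leads to iterated integrals weighted by tree factorials, not directly to Theorem \ref{cructhm}.) The step the paper actually takes is a \emph{single} global rescaling $A_i=ta_i$ of all edge variables: by Lemma \ref{lem4}(i) every summand $\varphi/\psi$ in the exponent of \eqref{7.16a} is homogeneous of degree one, so the whole exponent scales as $t\cdot(F_X+\sum_jG_j)$, one integration $\int_\epsilon^\infty dt/t$ of the $P-R$ difference yields the logarithm of the ratio of the two exponents, and clearing the common denominator $\psi(\Gamma/\!/[\textrm{for}])\prod_j\psi(\gamma_j)$ --- which cancels between numerator and denominator of the ratio --- gives exactly the argument of the logarithm in the theorem. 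Lemma \ref{lem4}(ii,iii) plays no role at this point; only the degree count of part (i) enters.
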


\begin{rem}Similar formulas can be obtained for the bar-operations and counterterms, with the same rational functions in the integrands, and exponentials
$\exp(-\varphi(\Gamma/\!/\gamma)(X)/\psi(\Gamma/\!/\gamma))$, with $X=P$ or $X=R$ as needed.
\end{rem}

\begin{rem}
We have worked with choices of renormalizations for mass and wave functions, $R\to R_\Box, R_{m^2}$. One can actually also define $P\to P_\Box,P_{m^2}$, and for example set masses to zero in all exponentials ($\varphi(\cdot)(P)\to\phi_{q^2}(\cdot)$), that's essentially the Weinberg scheme if one then subtracts at $q^2=\mu^2$.
\end{rem}
\begin{rem}
This all is nicely reflected in properties of analytic regulators. For example in dimensional regularization the identity $\int d^Dk [k^2]^\rho=0$, $\forall \rho$, leads to $\Phi(\Gamma)(P=0)=0$ immediately, where $\Phi$ now indicates unrenormalized Feynman rules using that regulator.
\end{rem}
\begin{rem}
We are working so far with constant lower boundaries. The chains introduced in previous sections have moving lower boundaries which respect the hierarchy in each flag. We will study that difference in section (\ref{seclmhsvsren}).
\end{rem}

\subsection{Specifics of the MOM-scheme}
We define the MOM-scheme by setting all masses to zero in radiative corrections and keeping a single kinematical invariant $q^2$ in the $\phi$-polynomial,
$P=\{0\},\{q_i\cdot q_j\sim q^2\}$,
\be \phi(\Gamma)=q^2R_{q^2}(\Gamma).\ee
Such a situation arises if we set masses to zero (possibly after factorization of a polynomial part from the amplitude as in the Weinberg scheme), and for vertices if we consider
the case of zero momentum transfers, or evaluate at a symmetric point $q_i^2=q^2$,
     where $i$ denotes the external half-edges of $\Gamma$.
If we want to emphasize the $q^2$ dependence we write $\phi_{q^2}$. Trivially, $\phi_{q^2}=q^2\phi_1$.
In the MOM-scheme, subtractions are done at $q^2=\mu^2$, which defines $R$ for all graphs. Counter-terms in the MOM-scheme become very simple when expressed in parametric integrals thanks to the homogeneity of the $\phi$-polynomial. Note that we hence have $\varphi(\Gamma)=\phi(\Gamma)$ as we have set all masses to zero.

In a MOM-scheme, renormalized diagrams are polynomials in $\ln q^2/\mu^2$:
\begin{thm}
For all $\Gamma$, \be\Phi_{\rm{MOM}}(\Gamma)(q^2/\mu^2)=\sum_{j=1}^{\rm{aug}(\Gamma)}c_j(\Gamma)\ln^j{q^2/\mu^2}.\ee
\end{thm}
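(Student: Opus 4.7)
The plan is to specialize Theorem~\ref{cructhm} to MOM data and then induct on the augmentation index $\mathrm{aug}(\Gamma)$. In MOM one has $P=q^2$, $R=\mu^2$, and all masses set to zero, so $\varphi(\gamma)(P) = q^2\phi_1(\gamma)$ and $\varphi(\gamma)(R) = \mu^2\phi_1(\gamma)$, while $\psi(\gamma)$ is independent of both scales. Substituting these into the projective formula of Theorem~\ref{cructhm}, each log argument assumes the form
\[
\frac{q^2 A + \mu^2 B}{\mu^2(A+B)} \;=\; u\,e^{L} + (1-u), \qquad u = \frac{A}{A+B}\in(0,1),
\]
where $A,B$ are positive rational functions of the edge variables depending on the forest $[\mathrm{for}]$ but not on $q^2$ or $\mu^2$. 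Hence $\Phi_{\mathrm{MOM}}(\Gamma)$ is an analytic function $F(L)$ of the single scaling variable $L=\ln(q^2/\mu^2)$, and $F(0)=0$ holds because each log argument equals $1$ at $q^2=\mu^2$: this already gives the absence of a constant term.

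For polynomiality and the degree bound I would induct on $\mathrm{aug}(\Gamma)$. The base case $\mathrm{aug}(\Gamma)=1$ means $\Gamma$ is primitive in $H_\sR$; the only forest is empty and Theorem~\ref{cructhm} collapses to
\[
\Phi_R(\Gamma) \;=\; \int_\sigma \omega(\Gamma)\,\ln\!\frac{q^2\phi_1(\Gamma)}{\mu^2\phi_1(\Gamma)} \;=\; c_1(\Gamma)\,L,
\]
as required. For the inductive step I apply $\partial/\partial L = q^2\partial/\partial q^2$ to the right-hand side of Theorem~\ref{cructhm}. Differentiation kills every logarithm and produces, for each forest, a rational integrand $q^2 A/(q^2 A+\mu^2 B)$ times the forms $\omega(\Gamma/\!/[\mathrm{for}])\prod_j\omega(\gamma_j)$. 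The key recognition is that this forest sum reassembles, via Hopf-algebraic convolution, into
\[
\partial_L \Phi_R(\Gamma) \;=\; \sum_{\emptyset \neq \gamma \subsetneq \Gamma} \Phi_R(\Gamma/\!/\gamma)(L)\cdot b(\gamma) \;+\; b(\Gamma),
\]
with $b(\cdot)$ the $L$-independent primitive residues and $\mathrm{aug}(\Gamma/\!/\gamma) \leq \mathrm{aug}(\Gamma)-1$. By the inductive hypothesis each $\Phi_R(\Gamma/\!/\gamma)$ is a polynomial in $L$ of degree $\leq \mathrm{aug}(\Gamma)-1$, so $\partial_L \Phi_R(\Gamma)$ is too; integrating and using $F(0)=0$ yields the stated polynomial $\sum_{j=1}^{\mathrm{aug}(\Gamma)} c_j(\Gamma) L^j$.

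The main technical obstacle is justifying the rearrangement in the inductive step: one has to verify that the derivative of the forest sum in Theorem~\ref{cructhm} really assembles into the Callan--Symanzik style convolution above. The matching uses the identity $\partial_L\ln\!\big((q^2A+\mu^2 B)/(\mu^2(A+B))\big) = q^2A/(q^2A+\mu^2 B)$, together with a partial integration in the affine parameter $t$ that reverses the passage from affine to projective integrals effected in Section~\ref{logpolereduct}; this identifies the rational integrand with that of $\bar\Phi(\Gamma/\!/\gamma)(P,R)\cdot S^\Phi_R(\gamma)$ summed over proper divergent $\gamma$. Extra care is required in the self-energy sector due to the distinction between mass- and kinetic-type two-point vertices introduced in Section~\ref{subsecgr}, but the bookkeeping in the $\bar\iota$-decomposition of Section~\ref{logpolereduct} is designed precisely to handle that case.
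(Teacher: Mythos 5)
Your strategy --- specialize Theorem \ref{cructhm} to MOM kinematics, note the vanishing at $L=0$, and induct on $\mathrm{aug}(\Gamma)$ via the renormalization group --- is a recognized alternative route, but as written it has a genuine gap at its central step. The identity
\[
\partial_L\Phi_R(\Gamma)=\sum_{\emptyset\neq\gamma\subsetneq\Gamma}\Phi_R(\Gamma/\!/\gamma)(L)\,b(\gamma)+b(\Gamma)
\]
with $L$-independent constants $b(\cdot)$ cannot simply be read off from differentiating the forest sum. Termwise differentiation does kill the logarithms, but it leaves rational integrands $q^2A/(q^2A+\mu^2B)$ in which $A$ and $B$ mix the edge variables of $\Gamma/\!/[\mathrm{for}]$ with those of the $\gamma_j$; these do \emph{not} factor globally into (a function of cograph variables and $L$) times (a function of subgraph variables). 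Extracting the product structure $\Phi_R(\Gamma/\!/\gamma)\cdot p_1(\gamma)$ requires the asymptotic factorization $\psi(\Gamma)=\psi(\gamma)\psi(\Gamma/\!/\gamma)+\psi_{\Gamma,\gamma}$ of Lemma \ref{lem4} near each exceptional divisor, the interchange of $q^2\partial_{q^2}$ with the $\epsilon\to0$ limit --- which Theorem \ref{cruc} shows is legitimate only when the edges carrying external momentum lie in the complement of all divergent subgraphs --- and, in the general (overlapping) case, the cancellation or congruence-to-one of the residual numerator logarithms worked out in the examples following Theorem \ref{cruc}. In other words, the "main technical obstacle" you flag is essentially the content of Theorem \ref{cruc} together with the scattering-type formula of \cite{RHII}, and deferring it leaves the induction unsupported. (Your base case and the observation that every log argument in Theorem \ref{cructhm} equals $1$ at $q^2=\mu^2$, hence $c_0=0$, are both correct.)

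By contrast the paper's proof is direct and never differentiates: for each maximal flag $\gamma_1\subsetneq\cdots\subsetneq\Gamma$, equivalently each decorated rooted tree in $\rho_\sR(\Gamma)$, one de-homogenizes with respect to one chosen edge per decoration, turning the affine integral into a nested sequence of scale integrations with lower boundaries $\epsilon,\ \epsilon/A_2,\ \epsilon/A_2/A_3,\dots$; each such integration against a logarithmic pole contributes at most one power of a logarithm, homogeneity of the Kirchhoff--Symanzik polynomials converts the cutoff dependence into $q^2/\mu^2$-dependence after subtraction, and the nesting depth is bounded by $\mathrm{aug}(\Gamma)$. If you wish to keep your inductive scheme, either import the renormalization group equation from \cite{RHII} as a citable input (stating the hypotheses under which it applies to the parametric representation) or carry out the factorization-and-cancellation analysis supplied by Lemma \ref{lem4} and Theorem \ref{cruc}.
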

Here, $\textrm{aug}(\Gamma)=\max_{\textrm{[for]}}|\textrm{[for]}|$.\\
Proof: Consider a sequence $\gamma_1\subsetneq\gamma_2\cdots\gamma_{\textrm{aug}(\Gamma)}\subsetneq\Gamma$.
This is in one-to-one correspondence with some decorated rooted tree appearing in $\rho_\sR(\Gamma)$ \eqref{2.35b}. Choose one edge $e_j\in \gamma_j/\gamma_{j-1}$ in each decoration and de-homogenize with respect to that edge. We get a sequence of lower boundaries $\epsilon,\epsilon/A_2,\epsilon/A_2/A_3,\cdots$.
Use the affine representation and integrate to obtain the result.\hfill$\Box$\\
\subsubsection{MOM scheme results from residues}
In such a scheme, it is particularly useful to take a derivative with respect to $\ln q^2$.
We consider
\be p_1(\Gamma):=q^2\partial_{q^2}\Phi_{\textrm{MOM}}(\Gamma)(q^2/\mu^2)_{|_{q^2=\mu^2}},\label{pone}\ee where we evaluate at $q^2=\mu^2$
after taking the derivative. This number, which for a primitive element of the renormalization Hopf algebra is the residue of that graph in the sense
of \cite{BEK}, is our main concern for a general graph. It will be obtained in the limit of the limiting mixed Hodge structure we construct.

\begin{rem}
It is not that this limit would not exist for general schemes. But the limit would be a complicated function of ratios of masses and kinematical invariants, which has a constant term given by the number $p_1(\Gamma)$ and beyond that a dependence on these    ratios which demands a much finer Hodge theoretic study than we can offer here.
\end{rem}

But first we need
to remind ourselves how coefficients of higher powers of logarithms of complicated graphs related to coefficients of lower powers of sub- and co-graphs thanks to the
renormalization group.
\subsubsection{The counterterm $S_{\rm MOM}^\Phi$}
For $S_{\textrm{MOM}}^\Phi(\Gamma)=:\sum_{j=1}^{\textrm{aug}(\Gamma)}s_j(\Gamma)\ln^j\mu^2$, we simply use the renormalization group or the scattering type formula.
In particular, we have
\be S_{\textrm MOM}^\Phi(\Gamma)=\sum_{j=1}^{\textrm{aug}(\Gamma)}\frac{1}{j!}(-1)^j \underbrace{[p_1\otimes\cdots\otimes p_1]}_{{j \textrm{factors}}}\Delta^{j-1}(\Gamma).\ee
This is easily derived \cite{RHII,KY1} upon noting that $p_1(\Gamma) =\Phi(S\star Y(\Gamma))$.

Note that this determines counter-terms by iteration: for a $k$-loop graph, knowledge of all the lower order counterterms suffices to determine
all contributions to the $k$-loop counterterm but the lowest order coefficient of $\ln \mu^2$. But then, that coefficient is given by the formula
\be s_1(\Gamma)=p_1(\Gamma)\ln \mu^2,\ee
which itself only involves counter-terms of less than $k$ loops,     by the structure of the bar operation.
\subsubsection{$p_1(\Gamma)$ from co-graphs}

We can now summarize the consequences of the renormalization group and our projective representations for parametric representations of Feynman integrals.
The interesting question is about the logs which we had in numerators. Thm.(\ref{cructhm}) becomes
\begin{thm}\label{cruc}
\be
p_1(\Gamma)=\lim_{\epsilon\to 0}\sum_\gamma S^\Phi_{\rm{MOM};\epsilon}(\gamma)q^2\partial_{q^2}\int_{\epsilon} \omega_{II}(\Gamma/\!/\gamma)\ln{\phi_{q^2/\mu^2}(\Gamma/\!/\gamma)}.\label{p1g}\ee
This limit is
\bea p_1(\Gamma) & = &  \int \Omega_\Gamma \sum_{\rm{[for]}}(-1)^{\rm{[for]}}\times\nonumber\\ & &
\times q^2\partial_{q^2} \ln{\left(\phi_{q^2/\mu^2}(\Gamma/\!/\rm{[for]})\prod_j\psi(\gamma_j)+\sum_j\phi_1(\gamma_j)\psi(\Gamma/\!/\rm{[for]})
\prod_{h\not= j}\psi(\gamma_h)\right)}\nonumber\\ & & \times\omega(\Gamma/\!/\rm{[for]})\prod_j\omega(\gamma_j).\eea
The derivative with respect to $\ln q^2$ can be taken inside the integral in (\ref{p1g}) if and only if all edges carrying external momentum are in the complement $C(\Gamma)$
of all edges belonging to divergent subgraphs. In that case, $q^2\partial_{q^2}\ln{\phi_{q^2}(\Gamma/\!/\gamma)}=1$ and no logs in the numerator appear.
\end{thm}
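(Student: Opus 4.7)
The plan is to specialize Theorem \ref{cructhm} to the MOM scheme and then differentiate carefully in $\ln q^2$. First I would observe that in the MOM scheme all masses vanish, so $\varphi(\Gamma/\!/\gamma)(P) = q^2\phi_1(\Gamma/\!/\gamma)$ and $\varphi(\gamma_j)(R) = \mu^2\phi_1(\gamma_j)$. Consequently the log factor $\ln[\varphi(\Gamma/\!/\gamma)(P)/\varphi(\Gamma/\!/\gamma)(R)]$ appearing in Theorem \ref{cructhm} becomes simply $\ln(q^2/\mu^2)$ for each cograph term, which makes the polynomial form $\Phi_{\textrm{MOM}}(\Gamma) = \sum c_j\ln^j(q^2/\mu^2)$ manifest and confirms consistency with the preceding theorem on polynomial structure.

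Second, to derive (\ref{p1g}), I differentiate the identity of Theorem \ref{cructhm} in $q^2$ and evaluate at $q^2=\mu^2$. Since the counterterm $S^\Phi_{\textrm{MOM};\epsilon}(\gamma)$ depends only on $R$ (that is, on $\mu^2$) and carries no residual $q^2$-dependence, the operator $q^2\partial_{q^2}$ commutes with it, yielding (\ref{p1g}) directly. For the projective version I substitute the MOM values of $\varphi$ into the logarithm from Theorem \ref{cructhm}, then apply $q^2\partial_{q^2}|_{q^2=\mu^2}$ by the chain rule; the resulting rational function is precisely the integrand displayed in the theorem, with the counterterm structure recoded as a sum over forests via the bar-operation identity from the proof of Theorem \ref{cructhm}.

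Third, for the if-and-only-if statement I compute $q^2\partial_{q^2}\ln\phi_{q^2/\mu^2}(\Gamma/\!/\gamma)$. Using linearity of $\phi$ in $q^2$, this derivative equals $1$ whenever $\phi_1(\Gamma/\!/\gamma)\not\equiv 0$. In the projective form, the analogous rational function $[\phi_1(\Gamma/\!/\textrm{[for]})\prod_j\psi(\gamma_j)]/[\phi_1(\Gamma/\!/\textrm{[for]})\prod_j\psi(\gamma_j) + \sum_j\phi_1(\gamma_j)\psi(\Gamma/\!/\textrm{[for]})\prod_{h\neq j}\psi(\gamma_h)]$ collapses to $1$ exactly when the second term in the denominator vanishes. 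Since all coefficients are non-negative, this occurs if and only if $\phi_1(\gamma_j) = 0$ for every subgraph $\gamma_j$ in every forest. By the spanning two-tree definition of $\phi$, the vanishing $\phi_1(\gamma) = 0$ is equivalent to no external momentum being routed through $\gamma$, which over all divergent subgraphs is precisely the condition that every external-momentum-carrying edge lies in the complement $C(\Gamma)$; in that case the log-numerator disappears after differentiation, leaving only $\omega(\Gamma/\!/\textrm{[for]})\prod_j\omega(\gamma_j)$.

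The principal obstacle will be justifying the interchange of $q^2\partial_{q^2}$ with both $\int_\epsilon$ and $\lim_{\epsilon\to 0}$. After differentiation the log factor is replaced by a rational function bounded by $1$ on $\sigma$, so dominated convergence handles the interchange with the integral. For the $\epsilon\to 0$ limit, the combination of $S^\Phi_{\textrm{MOM};\epsilon}(\gamma)$ with the boundary behaviour of $\int_\epsilon\omega_{II}$ produces a finite limit exactly as in Theorem \ref{cructhm}, and this cancellation commutes with $q^2\partial_{q^2}$ because the $\mu^2$-dependence of the counterterm is engineered precisely to offset the $q^2$-dependence of the unrenormalized integral; the delicate point, and where I expect the bulk of the technical work, is to verify that the differentiated subtraction still matches the divergent boundary behaviour when some $\phi_1(\gamma_j)$ do not vanish, so that the forest sum remains convergent before the interchange is attempted.
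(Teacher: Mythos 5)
Your reduction of \eqref{p1g} to Theorem \ref{cructhm} is sound and is what the paper intends: in the MOM scheme $\varphi(\cdot)(P)=q^2\phi_1(\cdot)$, the counterterms carry no $q^2$-dependence, and \eqref{p1g} is Theorem \ref{cructhm} with $q^2\partial_{q^2}$ applied outside the integral. The problem is your third paragraph, where the if-and-only-if criterion is located in the wrong place. You claim the interchange of $q^2\partial_{q^2}$ with $\int_\epsilon$ and $\lim_{\epsilon\to 0}$ is governed by the vanishing of $\phi_1(\gamma_j)$ for the divergent subgraphs, arguing that $\phi_1(\gamma_j)=0$ iff no external momentum is routed through $\gamma_j$. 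But the $\phi_1(\gamma_j)$ entering the counterterms are evaluated at the subtraction point $R$, i.e.\ with momentum $\mu^2$ injected at the external legs of $\gamma_j$ viewed as a graph in its own right; they are nonzero polynomials for essentially every subgraph. The paper's own worked example, the dunce's cap, has $\phi_1(\gamma)=A_3A_4\neq 0$, and yet the interchange \emph{is} valid there --- the logarithms cancel by a partial integration precisely because the momentum path $C(\Gamma)=\{e_1\}$ is disjoint from $\gamma^{[1]}=\{e_3,e_4\}$. Your criterion would declare the interchange invalid for the dunce's cap, contradicting the theorem's statement and the explicit computation in the paper.

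The mechanism the paper actually uses is different: one writes $\phi_{q^2}(\Gamma/\!/\gamma)$ as a linear combination $\sum_{e} A_e\,\psi_e(\Gamma/\!/\gamma)$ with $\psi_e(\Gamma/\!/\gamma)=\psi(\Gamma/\!/\gamma/e)$ and $e$ running over the momentum-carrying edges, and then applies the Chen--Wu theorem with respect to the edges of $C(\Gamma)$. When every momentum-carrying edge lies in $C(\Gamma)$, this disentangles the $q^2$-dependence (which after the rescalings $A_i=ta_i$ sits in the lower integration boundaries, e.g.\ $q^2\epsilon$, and in the cross-ratio logarithms) from the subgraph variables in which the $\epsilon\to 0$ limit is taken, so the derivative and the limit commute. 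When a momentum-carrying edge belongs to a divergent subgraph, an extra term (the term $X$ in the paper's six-edge example) survives in $\ln\varphi(\Gamma)/\psi(\Gamma)$ near the corresponding exceptional divisor and obstructs the interchange; it is then controlled separately by Lemma \ref{lem4}, which shows the surviving logarithms are congruent to $1$ along the exceptional divisors. Your proposal never engages with where the $q^2$-dependence actually resides after rescaling, which is the crux of the convergence/interchange question, so the ``delicate point'' you flag at the end is exactly the gap that remains open in your argument.
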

\begin{rem}
Note that overlapping divergent graphs can force all edges to belong to divergent subgraphs, cf.\ Fig.(\ref{overl}).
\end{rem}
\noindent Proof: If all edges carrying external momentum are in the complement to divergent subgraphs, we bring the counter-terms under the integrand using the bar-operation.
We can take the limit $\epsilon\to 0$ in the integrand for all edge variables belonging to subgraphs, and this limit commutes with the derivative with respect to $\ln q^2$ by assumption: each $\phi_q^2(\Gamma/\!/\gamma)$ is a linear combination of terms $A_e\psi_e(\Gamma/\!/\gamma)$, where $e$ is in that
complement $C(\Gamma)$ of subgraph edges, and $\psi_e(\Gamma/\!/\gamma)=\psi(\Gamma/\!/\gamma/e)$.  Applying then the Chen-Wu theorem \cite{Smirnov} with respect to the
elements of $C(\Gamma)$ disentangles the $q^2$ dependence from the limit in $\epsilon$.\hfill$\Box$\\
\begin{rem}
Note that the discussion below with respect to the limiting Hodge structure assumes that we have this situation of disentanglement of divergent
subgraphs and edges carrying external momentum.
We hence have no logarithms in the numerator. But note that the general case does no harm to the ensuing discussion: by Lemma (\ref{lem4}), any logarithms in the numerator are congruent to one along any exceptional divisor of $X_{\Gamma/\!/\rm{[for]}}$.
Furthermore, when external momentum interferes with subgraphs, all logs can be turned to rational functions by a partial integration.
The fact that the second Kirchhoff--Symanzik polynomial is a linear combination of  $\psi$-polynomials, applied to graphs with an extra shrunken edge,
in the MOM-case establishes these rational functions to have poles coming from our analysis of this $\psi(\Gamma)$ polynomial.
A full mathematical
discussion of this "$\int\omega\ln f$" situation should be subject to future work.
\end{rem}

\subsubsection{Examples}
>From now on we measure $q^2$ in units of $\mu^2$ so that subtractions are done at $1$. This simplifies notation.
Let us first consider the Dunce's cap in detail, (\ref{duncedetail}).
\begin{figure}[t]
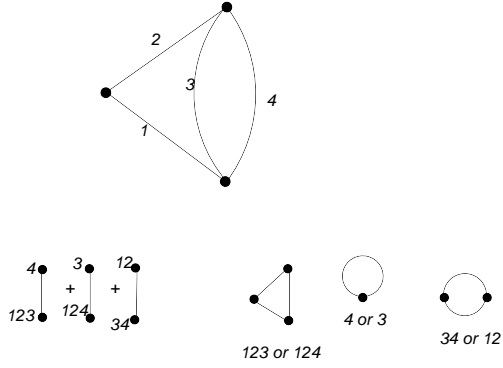
\dunce\caption{The Dunce's cap, again. We label the edges $1,2,3,4$. Resolved in trees, we find three trees in the core Hopf algebra.
We label the vertices by edge labels of the graph. The sets $123$ and $124$ correspond to a triangle graph as indicated, the sets $12$ and $34$ are one-loop vertex graphs, and tadpoles appear in the coproduct on the rhs for edges $3$ or $4$. The coproduct in the core Hopf algebra  is, expressed in edge labels,
$\Delta^\prime(1234)=123\otimes 4+124 \otimes 3 + 34 \otimes 12$.  Only the last term appears in the renormalization Hopf algebra.}\label{duncedetail}\end{figure}
We have the following data ($\textrm{path}_q(\Gamma)$ refers to the momentum path through the graph):
\bea
\psi(\Gamma) & = & (A_1+A_2)(A_3+A_4)+A_3A_4,\\
\psi(\gamma) & = & A_3+A_4, \psi(\Gamma/\!/\gamma)=A_1+A_2,\\
\textrm{path}_q(\Gamma) & = & e_1,\\
\phi_1(\Gamma) & = & A_1(A_2A_3+A_3A_4+A_4A_2)=A_1\psi(\Gamma/\!/e_1)=A_1\psi^1(\Gamma),\\
\phi_1(\gamma) & = & A_3A_4, \phi_1(\Gamma/\!/\gamma)=A_1A_2, \\
\{\textrm{[for]}\} & = & \{\emptyset, (34)\}.
\eea
\be \Phi(\Gamma)_\epsilon(q^2)=\int_\epsilon^\infty \prod_{i=1}^4d\!A_i\frac{\exp{-q^2\frac{\phi_1(\Gamma)}{\psi(\Gamma)}}}{\psi^2(\Gamma)}.
\ee
Hence we choose a function $\tau(\epsilon)$ which goes to zero rapidly enough so that $\lim_{\epsilon\to 0}\tau(\epsilon)/\epsilon=0$ and compute
\bea
 \bar{\Phi}_\epsilon(\Gamma)(q^2,\mu^2) & = &
 \int_\epsilon^\infty d\!A_1d\!A_2\int_{\tau(\epsilon)}^\infty d\!A_3d\!A_4
 \left\{
 \frac{\exp{-q^2\frac{\phi_1(\Gamma)}{\psi(\Gamma)}}}{\psi^2(\Gamma)}\right.\nonumber\\ & &
- \left.\frac{\exp{\left[-q^2\frac{\phi_1(\Gamma/\!/\gamma)}{\psi(\Gamma/\!/\gamma)}\right]}}{\psi^2(\Gamma/\!/\gamma)}
\frac{\exp{\left[-\frac{\phi_1(\gamma)}{\psi(\gamma)}\right]}}{\psi^2(\gamma)}
\right\}\\
  & = &
 \int_{q^2\epsilon}^\infty d\!A_1d\!A_2\int_{\tau(\epsilon)q^2}^\infty d\!A_3d\!A_4
  \left\{
  \frac{\exp{-\frac{\phi_1(\Gamma)}{\psi(\Gamma)}}}{\psi^2(\Gamma)}\right\}\nonumber\\ & &
-\left\{  \int_{q^2\epsilon}^\infty d\!A_1d\!A_2\int_{\tau(\epsilon)}^\infty d\!A_3d\!A_4
\frac{\exp{\left[-\frac{\phi_1(\Gamma/\!/\gamma)}{\psi(\Gamma/\!/\gamma)}\right]}}{\psi^2(\Gamma/\!/\gamma)}
\frac{\exp{\left[-\frac{\phi_1(\gamma)}{\psi(\gamma)}\right]}}{\psi^2(\gamma)}
\right\}.
\eea
Let us now re-scale to variables $A_i\to A_1B_i$ for all variables $i\in 2,3,4$.
We get
\bea
 \bar{\Phi}_\epsilon(\Gamma)(q^2,\mu^2) & = &
 \int_{q^2\epsilon}^\infty \frac{d\!A_1}{A_1} \int_{q^2\epsilon}^\infty d\!B_2\int_{q^2\tau(\epsilon)/A_1}^\infty d\!B_3d\!B_4
  \left\{
  \frac{\exp{-A_1\frac{(B_2B_3+B_3B_4+B_4B_2)}{(1+B_2)(B_3+B_4)+B_3B_4}}}{[(1+B_2)(B_3+B_4)+B_3B_4]^2}\right\}\nonumber\\ & &
-\left\{   \int_{q^2\epsilon}^\infty \frac{d\!A_1}{A_1} \int_{q^2\epsilon}^\infty d\!B_2\int_{\tau(\epsilon)/A_1}^\infty d\!B_3d\!B_4
 \frac{\exp{-A_1\frac{B_2}{1+B_2}}}{(1+B_2)^2}
\frac{\exp{-A_1\frac{B_3B_4}{B_3+B_4}}}{(B_3+B_4)^2}
\right\}.
\eea
We re-scale once more $B_4=B_3C_4$. Also, we set the lower boundaries in the $B_2$ and $C_4$ integrations to zero. This is justified as $A_1$ and $B_3$ remain positive.
\bea
 \bar{\Phi}_\epsilon(\Gamma)(q^2,\mu^2) & = &
 \int_{q^2\epsilon}^\infty \frac{d\!A_1}{A_1} \int_{0}^\infty d\!B_2\int_{q^2\tau(\epsilon)/A_1}^\infty \frac{d\!B_3}{B_3}\int_0^\infty d\!C_4
  \left\{
  \frac{\exp{-A_1\frac{(B_2+B_3C_4+C_4B_2)}{(1+B_2)(1+C_4)+B_3C_4}}}{[(1+B_2)(1+C_4)+B_3C_4]^2}\right\}\nonumber\\ & &
-
\int_{q^2\epsilon}^\infty \frac{d\!A_1}{A_1} \int_{0}^\infty d\!B_2\int_{\tau(\epsilon)/A_1}^\infty \frac{d\!B_3}{B_3}\int_0^\infty d\!C_4
\left\{ \frac{\exp{-A_1\frac{B_2}{1+B_2}}}{(1+B_2)^2}
\frac{\exp{-A_1B_3\frac{C_4}{1+C_4}}}{(1+C_4)^2}
\right\}.
\eea
Taking a derivative wrt $\ln q^2$ and using that $\lim_{\epsilon\to 0}\tau(\epsilon)/\epsilon=0$, delivers three remaining terms
\bea
 \partial_{\ln q^2}\bar{\Phi}_\epsilon(\Gamma)_{q^2=1} & = &
 \int_{0}^\infty d\!B_2\int_{\tau(\epsilon)/\epsilon}^\infty \frac{d\!B_3}{B_3}\int_0^\infty d\!C_4
  \left\{
  \frac{1}{[(1+B_2)(1+C_4)+B_3C_4]^2}\right\}\nonumber\\ & &
- \int_{0}^\infty d\!B_2\int_{\tau(\epsilon)/(q^2\epsilon)}^\infty \frac{d\!B_3}{B_3}\int_0^\infty d\!C_4
\left\{ \frac{1}{(1+B_2)^2}
\frac{e^{-\epsilon q^2\frac{B_3C_4}{1+C_4}}}{(1+C_4)^2}
\right\}\nonumber\\ & &
+ \int_{q^2\epsilon}^\infty \frac{d\!A_1}{A_1}\int_0^\infty d\!B_2\int_0^\infty d\!C_4
\left\{ \frac{e^{-A_1\frac{B_2}{(1+B_2)}}}{[(1+B_2)(1+C_4)]^2}
\right\}
.\eea
Integrating $B_3$ in the second line and $A_1$ in the third,  we find
\bea
 \partial_{\ln q^2}\bar{\Phi}_\epsilon(\Gamma)_{q^2=1} & = &
 \int_{0}^\infty d\!B_2\int_{\tau(\epsilon)/\epsilon}^\infty \frac{d\!B_3}{B_3}\int_0^\infty d\!C_4
    \frac{1}{[(1+B_2)(1+C_4)+B_3C_4]^2}\nonumber\\ & &
+ \ln{\tau(\epsilon)/\epsilon}\int\frac{\Omega_\gamma}{\psi^2(\gamma)}
\int\frac{\Omega_{\Gamma/\!/\gamma}}{\psi^2(\Gamma/\!/\gamma)}.\eea
Using the exponential integral, those $B_3$ and $A_1$ integrations also deliver finite contributions
\be
- \int_{0}^\infty d\!B_2\int_0^\infty d\!C_4
\left\{ \frac{1}{(1+B_2)^2}
\frac{\ln{\frac{C_4}{1+C_4}}}{(1+C_4)^2}
\right\}
+ \int_0^\infty d\!B_2\int_0^\infty d\!C_4
\left\{ \frac{\ln{\frac{B_2}{1+B_2}}}{[(1+B_2)(1+C_4)]^2}
\right\}=0.
\ee
This cancellation of logs is no accident: while in this simple example it looks as if it originates from the fact that the co-graph
and subgraph are identical, actually the cross-ratio
\be \ln \frac{\phi(\Gamma/\!/\gamma)\psi(\gamma)}{\psi(\Gamma/\!/\gamma)\phi(\gamma)}\ee
vanishes identically when integrated against the de-homogenized product measure
\be \int_0 dA_{\Gamma/\!/\gamma}dA_\gamma \frac{1}{\psi^2(\Gamma/\!/\gamma)\psi^2(\gamma)}.\ee

This is precisely because $C(\Gamma)=e_1$ has an empty intersection with $\gamma^{[1]}=e_3,e_4 $.

But then, this cancelation of logs will break down if $\phi(\Gamma)$ is not as nicely disentangled from $\phi(\gamma)$ for all log-poles as it is here,
and will be replaced by logs congruent to 1 along subdivergences in general, in accordance with Thm.(\ref{cruc}).

Let us study this in some detail.
Consider the graph on the upper left in Fig.(\ref{reconstr}), and consider the finite $\ln \phi/\psi$-type contributions of the exponential integral
to in the vicinity of the exceptional divisor for the subspace $A_3=A_4=0$.

Routing an external momentum through edges 1,6, we have the following graph polynomials:
\bea
\phi_1(\Gamma) & = & A_1[A_3A_4(A_5+A_6)+A_5A_6(A_3+A_4)+A_2(A_3+A_4)(A_5+A_6)]\\ & & +A_6A_5[(A_1+A_2)(A_3+A_4)+A_3A_4]\nonumber\\
\phi_1(\Gamma/34) & = & A_1[A_5A_6+A_2(A_5+A_6)]+A_5A_6[(A_1+A_2)]\\
\phi_1(34) & = & A_3A_4\\
\psi(\Gamma/34) & = & (A_1+A_2)(A_5+A_6)+A_5A_6\\
\psi(34) & = & A_3+A_4.
\eea
    We have $C(\Gamma)=e_1,e_6$, and $\cup_{{\gamma\subsetneq\Gamma,\textrm{res}(\gamma)\geq 0}}\gamma^{[1]}=e_3,e_4,e_5,e_6$.
The intersection is $e_6$.
We hence find, with suitable de-homogenization,
\be \frac{\ln{\frac{\overbrace{B_5B_6(1+B_2)}^X+\overbrace{B_5B_6+B_2(B_5+B_6)}^Y}{(1+B_2)(B_5+B_6)+B_5B_6}} -\ln{\frac{C_4}{1+C_4}}}{[(1+B_2)(B_5+B_6)+B_5B_6]^2[1+C_4]^2}d\!B_2d\!C_4d\!B_5d\!B_6.\ee
Here, the term $X$ denotes a term which would be absent if the momenta would only go through edge 1 and hence the above intersection would be empty, while $Y$ indicates the terms from the momentum flow through edge 1.

This is of the form $\ln(f_{\Gamma/\!/\gamma}/f_\gamma)[\omega_{\Gamma/\!/\gamma}\wedge \omega_\gamma]$.
If the term $X$ would be absent, a partial integration
\be \int_\epsilon^\infty \frac{\ln{\frac{xu+v}{xu+w}}}{(xu+w)^2}\sim \int_\epsilon^\infty\frac{1}{(xu+w)^2}\ee
would show the vanishing of this expression as above. The presence of $X$ leaves us with a contribution which can be written,
replacing $\ln{C_4/(1+C_4)}$ by $ \ln Y/\psi(\Gamma/34)$,
\be \frac{\ln{\frac{\overbrace{B_5B_6(1+B_2)}^X+\overbrace{B_5B_6+B_2(B_5+B_6)}^Y}{\underbrace{B_5B_6+B_2(B_5+B_6)}_Y}}}{[(1+B_2)(B_5+B_6)+B_5B_6]^2[1+C_4]^2}.\ee
As promised, it is congruent to one along the remaining log-pole at $A_5=A_6=0$. It has to be: the forest where the subgraph $56$ shrinks to a point looses the momentum flow through edge 6 and could not contribute any counterterm for a pole remaining in the terms discussed above.

Note that in general higher powers of logarithms can appear in the numerator as subgraphs can have substructure. Lacking a handle to notate all the log-poles which do not cancel due to partial integration identities known beyond mankind we consider it understood that all terms from the asymptotic expansion of the exponential integral up to constant terms (higher order terms in $\epsilon$ are not needed as all poles are logarithmic only) are kept without being shown explicitly in further notation. We emphasize though that all those logarithm terms in the numerator are congruent to one along log-poles -and deserve study in their own right elsewhere-, and hence thanks to Lemma (\ref{lem4}) which guarantees indeed all necessary cancelations, we have in all cases:
\be
p_1(\Gamma)=\lim_{\epsilon\to 0} \partial_{\ln q^2}\bar{\Phi}_\epsilon(\Gamma)_{q^2=1}.\ee
\begin{rem}There is freedom in the choice of $\tau$, a natural choice comes from the rooted tree representation $\rho(\Gamma)$ of the forest.
Each forest is part of a legal tree $t$ and any subgraph $\gamma$ corresponds to a vertex $v$ in that tree. If $d_v$ is the distance of $v$ to the root of $t$, $\tau(\epsilon)=\epsilon^{d_v+1}$ is a natural choice.
\end{rem}

\section{$N$}\label{secn}
\subsection{ for physicists: The antipode as monodromy}
Let us now come back to the core Hopf algebra and prepare for an analysis in terms of limiting mixed Hodge structures. This will be achieved in two steps: an analysis of the structure of the antipode of the renormalization Hopf algebra, which will then allow to define a matrix $N$ for the monodromy in question such that $S(\Gamma)$ can be expressed in a particularly nice way.      In fact, because of orientations, the $N$ which arises in the monodromy calculation is the negative of the $N$ computed in this section. We omit the minus sign to simplify the notation.

Let us consider the antipode first.
Thanks to the above lemma we can  write for the antipode $S(\Gamma)$
\be S(\Gamma)=-\sum_{j=0}^{|\Gamma|}(-1)^j\sum_{|C|=j}\sum_t P^C(t)R^C(t).\ee
    Here, we abuse notation in an obvious manner identifying $\Gamma$ and $\rho_T(\Gamma)$, the latter being the indicated sum over trees,
in accordance with Eq.(\ref{homomorph}).

We also define $\overline{R}(\Gamma)=-S(\Gamma)$.
Let us now label the edges of each $t(\Gamma)$ once and for all by $1,2,\cdots,|\Gamma|-1$.
Then, we have $|\Gamma|-1$ cuts $C$ with $|C|=1$, and \be \left({\genfrac{}{}{0pt}{}{|\Gamma|-1}{j}}\right)\ee
cuts of cardinality $|C|=j$.
We hence can define a vector $v(\Gamma)$ with $2^{|\Gamma|-1}$ entries in $H$, ordered
 according to a never decreasing cardinality of cuts: \be v(\Gamma)=(\Gamma,\underbrace{\sum_t P^C(t)R^C(t)}_{\textrm{$\left({\genfrac{}{}{0pt}{}{|\Gamma|-1}{1}}\right)$ entries of cardinality $1$}},\cdots,\underbrace{\sum_t P^C(t)R^C(t)}_{\textrm{$\left({\genfrac{}{}{0pt}{}{|\Gamma|-1}{j}}\right)$ entries of cardinality $j$}} ,\cdots)^T.\ee
 Example: Dunce's cap with edges $1,2,3,4$ and divergent subgraph $3,4$, comare Fig.(\ref{duncedetail}).
The core coproduct is
\be \Delta_c^\prime=123\otimes4+124\otimes3+34\otimes12.\ee
The vector $v$ is then
\be v=\left( {\genfrac{}{}{0pt}{}{1234}{(123)(4)+(124)(3)+(12)(34)}}\right).\ee
Let $N^{(2)}$ be the to-by-two matrix \be N^{(2)}=\left(
                                                    \begin{array}{cc}
                                                      0 & 1 \\
                                                      0 & 0 \\
                                                    \end{array}
                                                  \right)
.\ee
Note that
\be \left[\left(
      \begin{array}{cc}
        1 & 0 \\
        0 & 1 \\
      \end{array}
    \right)-N^{(2)}\right]\left(
                            \begin{array}{c}
                              1234 \\
                              (123)(4)+(124)(3)+(12)(34) \\
                            \end{array}
                          \right)
=\left(
   \begin{array}{c}
     \overline{R}(1234) \\
      (123)(4)+(124)(3)+(12)(34)\\
   \end{array}
 \right)
,\ee
with
\be \overline{R}(1234)=1234-(123)(4)-(124)(3)-(34)(12).\ee
In fact, it is our first task to find  a nilpotent matrix $N$, $N^{|\Gamma|}=0$, such that
\be \sum_{j=0}^{|\Gamma|-1}(-1)^jN^j/j!=(\overline{R}(\Gamma),\underbrace{\sum_t \overline{R}(P^C(t))\overline{R}(R^C(t))}_{\textrm{$\left({\genfrac{}{}{0pt}{}{|\Gamma|-1}{1}}\right)$ entries of cardinality $1$}},\cdots,\underbrace{\sum_t \overline{R}(P^C(t))\overline{R}(R^C(t))}_{\textrm{$\left({\genfrac{}{}{0pt}{}{|\Gamma|-1}{j}}\right)$ entries of cardinality $j$}} ,\cdots),)^T.\ee
For $P^C(t)=\prod_i t_i$ we here have abbreviated $\overline{R}(P^C(t))$ for $\prod_i \overline{R}(t_i)$.
\subsection{The matrix $N$}
Let $M(0,1)$ be the space of matrices with entries in the two point
set $\{0,1\}$.

Let now $m+1$ be the number of loops $m=|\Gamma|-1$  in the graph and let us construct a
nilpotent $2^{m}\times 2^{m}$ square matrix $N\equiv N^{(m)}$, $N^{m+1}=0$, in $M(0,1)$ as follows.

Consider first the $m+1$-th row of the Pascal triangle, for example for $m=3$ it reads $1,3,3,1$.
For this example, we will then construct blocks of sizes $1\times 1$, $1\times 3$, $3\times 3$, $3 \times 1$ and $1 \times 1$, all with entries either $0$ or $1$.

So this gives us in general  $m+2$ blocks $M^{(m)}_j$, $0\leq j\leq m+1$, of matrices of size $M^{(m)}_0:1\times 1$,
$M^{(m)}_1:1\times m$, $M^{(m)}_2:m\times m(m-1)/2!$, $\cdots$, $M^{(m)}_m:m\times 1$, $M^{(m)}_{m+1}:1\times 1$.

In the block $M^{(m)}_j$, $0\leq j<(m+2)/2$,
fill the columns, from left to right, by never increasing sequences of binary numbers (read from top to bottom)
where each such number contains $j$ entries $1$ for the block $M^{(m)}_j$. Put $M^{(m)}_0=(0)$ in the left upper corner and $M^{(m)}_1$ to the left of it. For $j\geq 2$, put the block $M^{(m)}_j$ below and to the right of the block $M^{(m)}_{j-1}$,
in $N$. All entries in $N$ outside these blocks are zero. Determine the entries of the blocks $M^{(m)}_j$, $m+1\geq j\geq (m+2)/2$, by the requirement that
$N^\bot=N$, where $N^\bot$ is obtained from $N$ by reflection along the diagonal which goes from the lower left to the upper right.
We write ${M^{(m)}_i}^\bot=M^{(m)}_{m+1-i}$. For odd integer $m$, we have ${M^{(m)}_{(m+1)/2}}^\bot=M^{(m)}_{(m+1)/2+1}$, by construction.
Here are $M^{(3)}_j$ and  $N,N^2,N^3$ for $m=3$:
\be M^{(3)}_0 =(0), M^{(3)}_1=(1,1,1), M^{(3)}_2=\left(
                                        \begin{array}{ccc}
                                          1 & 1 & 0 \\
                                          1 & 0 & 1 \\
                                          0 & 1 & 1 \\
                                        \end{array}
                                      \right), M^{(3)}_3=\left(
                                                       \begin{array}{c}
                                                         1 \\
                                                         1 \\
                                                         1 \\
                                                       \end{array}
                                                     \right), M^{(3)}_4=(0).
\ee
\bea \label{8.10} N^{(3)}& = & \left(
  \begin{array}{cccccccc}
    0 & |\underline{{\textbf 1}} & {\textbf 1} & \underline{{\textbf 1}}| & 0 & 0 & 0 & 0 \\
    0 & 0 & 0 & 0 & |\overline{{\textbf 1}} & {\textbf 1} & \overline{{\textbf 0}}| & 0 \\
    0 & 0 & 0 & 0 & |{\textbf 1} & {\textbf 0} & {\textbf 1}| & 0 \\
    0 & 0 & 0 & 0 & |\underline{{\textbf 0}} & {\textbf 1} & \underline{{\textbf 1}}| & 0 \\
    0 & 0 & 0 & 0 & 0 & 0 & 0 & |\overline{{\textbf 1}} \\
    0 & 0 & 0 & 0 & 0 & 0 & 0 & |{\textbf 1} \\
    0 & 0 & 0 & 0 & 0 & 0 & 0 & |\underline{{\textbf 1}} \\
    0 & 0 & 0 & 0 & 0 & 0 & 0 & 0 \\
  \end{array}
\right),\\  {N^{(3)}}^2 & = & \left(
                       \begin{array}{cccccccc}
                         0 & 0 & 0 & 0 & 2 & 2 & 2 & 0 \\
                         0 & 0 & 0 & 0 & 0 & 0 & 0 & 2 \\
                         0 & 0 & 0 & 0 & 0 & 0 & 0 & 2 \\
                         0 & 0 & 0 & 0 & 0 & 0 & 0 & 2 \\
                         0 & 0 & 0 & 0 & 0 & 0 & 0 & 0 \\
                         0 & 0 & 0 & 0 & 0 & 0 & 0 & 0 \\
                         0 & 0 & 0 & 0 & 0 & 0 & 0 & 0 \\
                         0 & 0 & 0 & 0 & 0 & 0 & 0 & 0 \\
                       \end{array}
                     \right),\\
                     {N^{(3)}}^3 & = & \left(
                       \begin{array}{cccccccc}
                         0 & 0 & 0 & 0 & 0 & 0 & 0 & 6 \\
                         0 & 0 & 0 & 0 & 0 & 0 & 0 & 0 \\
                         0 & 0 & 0 & 0 & 0 & 0 & 0 & 0 \\
                         0 & 0 & 0 & 0 & 0 & 0 & 0 & 0 \\
                         0 & 0 & 0 & 0 & 0 & 0 & 0 & 0 \\
                         0 & 0 & 0 & 0 & 0 & 0 & 0 & 0 \\
                         0 & 0 & 0 & 0 & 0 & 0 & 0 & 0 \\
                         0 & 0 & 0 & 0 & 0 & 0 & 0 & 0 \\
                       \end{array}
                     \right).
\eea
We can now write, for $1\leq j\leq m$,
\be N^j=j! n^{(m)}_j,\ee
where the matrix $n^{(m)}_j\in M(0,1)$, by construction. Hence
\be \exp{\left\{-LN^{(m)}\right\}}=\sum_{j=0}^{m} \frac{(-L)^j}{j!}{N^{(m)}}^j=\sum_{j=0}^{m}(-L)^jn^{(m)}_j.\ee
This is obvious from the set-up above.
Furthermore, directly from construction, $n^{(m)}_j$, $j\geq 1$, has a block structure into blocks of size \be
(1\times m),\cdots,\underbrace{\cdots}_{\textrm{ $j-1$ middle blocks missing}},\cdots,(m\times 1),\ee
located in the uppermost right corner of size $2^{m-j+1}\times 2^{m-j+1}$ as in the above example.

\subsection{Math:The Matrix $N$}
In this section we compute the matrix $N$ which gives the log of the monodromy. Because of orientations, the answer we get is the negative of the physical $N$ computed in the previous section.

Our basic result gives the monodromy
\eq{6.1}{m(\sigma_1) = \sum_I (-1)^{p}\tau_I = \sigma_1 + \sum_{I,\ p\ge 1} (-1)^{p}\tau_I .
}
Here we have changed notation. $I=\{i_1,\dotsc,i_p\}$ refers to a flag $\Gamma_{i_1}\subsetneq\cdots\subsetneq \Gamma_{i_p}\subsetneq \Gamma$ of core subgraphs. More generally
\eq{6.2}{m(\tau_I) = \sum_{J\supset I}(-1)^{q-p}\tau_J.
}
Here $J=\{j_1,\dotsc,j_q\}\supset I$. to verify \eqref{6.2}, consider e.g. the case corresponding to $\Gamma_1 \subsetneq \Gamma$. We have seen (lemma \ref{lem1.5}) that the blowup of $\P(\Gamma)$ along the linear space defined by the edge variables associated to edges of $\Gamma_1$ yields as exceptional divisor $E_1 \cong \P(\Gamma_1) \times \P(\Gamma/\!/\Gamma_1)$. In fact, the strict transform of $E_1$ in the full blowup $P(\Gamma)$ can be identified with $P(\Gamma_1)\times P(\Gamma/\!/\Gamma_1)$. To see this, note that by proposition \ref{prop1.6}, the intersection in $P(\Gamma)$ of distinct exceptional divisors $E_1\cap\cdots\cap E_p$ is non-empty if and only if after reordering, the corresponding core subgraphs of $\Gamma$ form a flag. This means, for example, that $E_1\cap E_I \neq \emptyset$ if and only if the flag corresponding to $I$ has a subflag of core subgraphs contained in $\Gamma_1$, and the remaining core subgraphs form a flag containing $\Gamma_1$. In this way, we blow up appropriate linear spaces in $\P(\Gamma_1)$ and in $\P(\Gamma\!/\Gamma_1)$. the result is $P(\Gamma_1)\times P(\Gamma/\!/\Gamma_1) \subset P(\Gamma)$. The chain $\tau_1$ is an $S^1$-bundle over the chain $\sigma_{\P(\Gamma_1)}\times \sigma_{\P(\Gamma/\!/\Gamma_1)}$ (slightly modified along the boundaries as above), and the monodromy map is the product of the    monodromies on each factor. (The monodromy takes place on $P(\Gamma_1)\times P(\Gamma/\!/\Gamma_1)$. In the end, one takes the $S^1$-bundle over $m(\sigma_{\P(\Gamma_1)}\times \sigma_{\P(\Gamma/\!/\Gamma_1)})$.)
But this yields exactly \eqref{6.2}. The result for a general $m(\tau_I)$ is precisely analogous.
To compute $N$, suppose $\Gamma$ has exactly $k$ core subgraphs $\Gamma'\subsetneq \Gamma$. (This means that $P(\Gamma)$ will have $k$ exceptional divisors $E_i$.)  Consider the commutative ring
\eq{6.3}{R:= \Q[x_1,\dotsc,x_k]/(x_1^2,\dotsc,x_k^2,M_1,\dotsc,M_r),
}
where we think of the $x_i$ as corresponding to exceptional divisors $E_i$ on $P(\Gamma)$, and the $M_j$ are monomials corresponding to empty intersections of the $E_i$. The notation means that we factor the polynomial ring in the $x_i$ by the ideal generated by the indicated elements.  We may if we like drop the $M_j$ from the ideal. This will simply mean the column vector on which $N$ acts will have many entries equal to $0$.  As a vector space, we can identify $R$ with the free vector space on $\sigma_1$ and the $\tau_I$ by mapping $\sigma_1 \mapsto 1$ and $\tau_I \mapsto \prod_{i\in I} x_i$. With this identification, the monodromy map $m$ is given    (compare \eqref{6.2}) by multiplication by $(1-x_1)(1-x_2)\cdots (1-x_k)$. But the map $R \to \text{End}_{\text{vec. sp.}}(R)$ given by multiplication is a homomorphism of rings, so $\log(m)$ is given by (note $x_i^2=0$)
\eq{}{\log\Big((1-x_1)\cdots (1-x_k)\Big) = -\sum x_i.
}
Thus $N$ is the matrix for the map given by multiplication by $-\sum
x_i$. If we ignore the relations $M_j$ and just write the matrix for
the action on $\Q[x_1,\dotsc,x_k]/(x_1^2,\dotsc,x_k^2)$, it has size
$2^k\times 2^k$ and is strictly upper triangular. For $k=3$, the
matrix is $-N^{(3)}$ \eqref{8.10}.

\section{Renormalization: the removal of log-poles}\label{secld}
Recall we have defined $\textrm{sdd}(\Gamma)$, the degree of
superficial divergence of a graph with respect to a given physical
theory, \eqref{sdd}. The choice of the theory determines a differential form
$\omega_\Gamma$ associated to $\Gamma$. We will be interested in the
{\it logarithmic divergent} case, when $\textrm{sdd}(\Gamma)\geq 0$, but $\omega_\Gamma$ has been chosen such that it only has log-poles,
see in particular section \ref{logpolereduct}.  The affine integral in this case will be overall
logarithmically divergent, but this overall divergence can be
eliminated by passing to the associated projective integral. If, for
all core subgraphs $\Gamma' \subset \Gamma$, we have
$\textrm{sdd}(\Gamma')<0$, then the projective integral actually converges
and we are done. If $\Gamma'>0$ for some subgraph, then one is obliged
to manipulate the differential form as described in section \ref{parrep} above.
To simplify notation, from now on we assume that all graphs and subgraphs have $\textrm{sdd}\leq 0$,
while all following lemmas hold similarly for higher degrees of divergence with the appropriate choice of $\omega_{II}$.
Below, we spell all results out for the case $\omega_{II}=\Omega_{2n-1}/\psi_\Gamma^2$, and we set $\psi_\Gamma\equiv \psi(\Gamma)$.
\begin{lem}\label{lem5.1} Let $\Gamma'\subsetneq \Gamma$ be core graphs and assume
  $\textrm{sdd}(\Gamma)=0$. Let $L \subset X_\Gamma \subset
  \P(\Gamma)$ be the coordinate linear space defined by the edges
  occurring in $\Gamma'$. Let $\pi: P_L \to \P(\Gamma)$ be the blowup of
  $L$. Then $\pi^*\omega_\Gamma$ has a logarithmic pole on $E$ if and only if
  $\textrm{sdd}(\Gamma')=0$. Similarly, the pullback of
  $\omega_\Gamma$ to the full core blowup $P(\Gamma)$ (cf. formula \eqref{3.3b}) has a
  log pole of order along the exceptional divisor
  $E_{\Gamma'}$ associated to $\Gamma'$ if and only if  $\textrm{sdd}(\Gamma')=0$.
\end{lem}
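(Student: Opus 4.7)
The statement is local along the exceptional divisor, so my plan is to compute the order of pole of $\pi^{*}\omega_{\Gamma}$ at a generic point of $E$ in explicit blowup coordinates. Since we are in the log-divergent case $\textrm{sdd}(\Gamma)=0$, we have $\omega_{\Gamma}=\Omega_{n-1}/\psi(\Gamma)^{2}$ with $n=\#E(\Gamma)=2|\Gamma|$ (this is precisely what makes the form well-defined of total degree $0$ on $\P(\Gamma)$). The whole question therefore reduces to balancing the Jacobian contribution from $\pi^{*}\Omega_{n-1}$ against the vanishing order of $\pi^{*}\psi(\Gamma)^{2}$ along $E$.

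First I would set up an affine chart $A_{1}=1$ of $\P(\Gamma)$ in which, after relabeling, $L$ is the linear subspace $y_{1}=\cdots=y_{p}=0$ with $y_{i}:=A_{i+1}$ and $p:=\#E(\Gamma')$. A standard chart of $P_{L}$ is then parametrized by $(t,u_{1},\dotsc,u_{p-1},y_{p+1},\dotsc,y_{n-1})$ with $y_{1}=t$ and $y_{i}=tu_{i-1}$ for $2\le i\le p$, and there $E=\{t=0\}$. A direct Jacobian computation yields
\[
\pi^{*}(dy_{1}\wedge\cdots\wedge dy_{n-1})=t^{p-1}\,dt\wedge du_{1}\wedge\cdots\wedge du_{p-1}\wedge dy_{p+1}\wedge\cdots\wedge dy_{n-1},
\]
so $\pi^{*}\Omega_{n-1}$ picks up a factor $t^{p-1}$.

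Next I would compute the vanishing order of $\pi^{*}\psi(\Gamma)$ along $E$ using Lemma~\ref{lem4}(ii), $\psi(\Gamma)=\psi(\Gamma/\!/\Gamma')\psi(\Gamma')+\psi_{\Gamma,\Gamma'}$, whose remainder is strictly higher-degree in the $\Gamma'$-edge variables than $\psi(\Gamma')$ (which has degree $|\Gamma'|$ in those variables). Homogeneity under the substitution then gives
\[
\pi^{*}\psi(\Gamma)=t^{|\Gamma'|}\bigl(\psi(\Gamma')(1,u_{1},\dotsc,u_{p-1})\,\psi(\Gamma/\!/\Gamma')(1,y_{p+1},\dotsc,y_{n-1})+O(t)\bigr),
\]
and the bracketed expression is generically nonzero along $E$: its zero locus inside $E$ is precisely the strict transform of $X_{\Gamma}$ described by Proposition~\ref{prop2.2a}(ii). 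Hence $\pi^{*}\psi(\Gamma)^{2}$ vanishes to order exactly $2|\Gamma'|$ on a dense open of $E$, and the order of pole of $\pi^{*}\omega_{\Gamma}$ there is $2|\Gamma'|-(p-1)$.

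Finally I would rewrite this pole order using $\textrm{sdd}(\Gamma')=4|\Gamma'|-2\#E(\Gamma')$ (the $\phi^{4}_{4}$ case; the refined weights of \eqref{sdd} are handled by the same bookkeeping) to obtain $\tfrac{1}{2}\textrm{sdd}(\Gamma')+1$. This equals $1$ exactly when $\textrm{sdd}(\Gamma')=0$, which is the asserted log pole; it exceeds $1$ when $\textrm{sdd}(\Gamma')>0$, and is $\le 0$ when $\textrm{sdd}(\Gamma')<0$, in which last case $\pi^{*}\omega_{\Gamma}$ is actually regular along $E$. The extension to the iterated blowup $P(\Gamma)\to\P(\Gamma)$ is then routine: by Proposition~\ref{prop1.6}(iii), a generic point of $E_{\Gamma'}\subset P(\Gamma)$ lies in no other exceptional divisor, so near such a point the iterated blowup factors through $P_{L_{\Gamma'}}$ followed by an isomorphism, and the single-blowup calculation applies verbatim. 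The main obstacle throughout is really the algebraic control of $\pi^{*}\psi(\Gamma)$ via Lemma~\ref{lem4}(ii) -- it is exactly what guarantees that no stray terms can contribute to the leading order $t^{|\Gamma'|}$ and spoil the clean count.
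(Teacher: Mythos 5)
Your proposal is correct and follows essentially the same route as the paper's proof: explicit blowup coordinates on a single chart, the factorization $\psi_\Gamma=\psi_{\Gamma'}\psi_{\Gamma/\!/\Gamma'}+R$ with $R$ of higher degree in the $\Gamma'$-variables to pin down the vanishing order of $\pi^*\psi_\Gamma$ along $E$, and the count $2|\Gamma'|-(p-1)$ for the pole order, reducing the full blowup $P(\Gamma)$ to the single blowup $P_L$ at a generic point of $E_{\Gamma'}$. The only differences are cosmetic (which of the $L$-defining variables serves as the radial parameter, and packaging the exponent as $\tfrac12\textrm{sdd}(\Gamma')+1$).
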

\begin{proof}We give the proof for $\phi^4$-theory. Let the loop
  number $|\Gamma|=m$ so the graph has $2m$ edges \eqref{sdd}. Let
\eq{5.1}{\Omega_{2m-1}
  = \sum (-1)^i
  A_idA_1\wedge\cdots\wedge\widehat{dA_i}\wedge\cdots\wedge
 dA_{2m} =
A_{2m}^{2m}d(A_1/A_{2m})\wedge\cdots\wedge
 d(A_{2m-1}/A_{2m}).
}
Then
\eq{5.2}{\omega_\Gamma = \frac{\Omega_{2m-1}}{\psi_\Gamma^2}.
}
Suppose $L:A_1=\cdots=A_p=0$. We can write the graph polynomial
(\cite{BEK}, prop. 3.5)
\eq{}{\psi_\Gamma =
  \psi_{\Gamma'}(A_1,\dotsc,A_p)\psi_{\Gamma/\!/\Gamma'}(A_{p+1},\dotsc,A_{2m})+R
}
where the degree of $R$ in $A_1,\dotsc,A_p$ is strictly greater than
$\deg \psi_{\Gamma'} = |\Gamma'|$. Let $a_i = A_i/A_{2m}$, and let $b_i=a_i/a_p,\
i<p$. Locally on $P$ we can take
$b_1,\dotsc,b_{p-1},a_p,a_{p+1},\dotsc, A_{2m}$ as local coordinates
and write
\eq{5.4}{\omega_\Gamma = \pm
  a_p^{p-2|\Gamma'|}\frac{da_p}{a_p}\wedge\frac{db_1\wedge
    \cdots\wedge da_{2m-1}}{F^2} .
}
Here $F$ is some polynomial in the $a_i$'s and the $b_j$'s which is not divisible by $a_p$. The assertion for the blowup of $L$ follows immediately. The assertion for $P(\Gamma)$ is also clear because we can find a non-empty open set on $\P(\Gamma)$ meeting $L$ such that the inverse images in $P(\Gamma)$ and in $P_L$ are isomorphic.
\end{proof} We want to state the basic renormalization result coming out of our monodromy method. For this, we restrict to the case
\eq{5.5}{\textrm{sdd}(\Gamma') \le 2,\ \forall \Gamma' \subseteq \Gamma,
}
with an understanding that appropriate forms $\omega_{II}(\Gamma')$ have been chosen so that the differential forms has log-poles only.
The following lemma applies then to $\phi^4$-theory. A physicist wishing to apply our results to another theory    needs only check the lemma holds with $\omega_\Gamma$ replaced by the integrand given by Feynman rules.
\begin{lem}\label{lem5.2} Let $\tau^\ve_V$ be the chains on $\P(\Gamma)$ constructed above (section \ref{sectopch}) (including the case $\tau^\ve_{P(\Gamma)}=\sigma_\ve$). Then, assuming \eqref{5.5}, we will have
\eq{}{\Big|\int_{\tau^\ve_V} \omega_\Gamma\Big| = O(|\log |\ve||^k), \ |\ve|\to 0 }
for some $k\ge 0$.
\end{lem}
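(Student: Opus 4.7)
The plan is to reduce the estimate to an elementary iterated-logarithm bound by exploiting the toric/normal-bundle structure of $\tau^\ve_V$ together with the log-pole nature of $\omega_\Gamma$ (respectively $\omega_{II}$) established in Lemma~\ref{lem5.1}. Recall that $V \subset P(\Gamma)$ is the closure of the orbit attached to a flag of core subgraphs $\Gamma_p \subsetneq \cdots \subsetneq \Gamma_1 \subsetneq \Gamma$ with $p=|V|$, and $\tau^\ve_V$ is a product of $S^1$-circles of radius $\ve$ in each of the $p$ directions normal to $V$, fibered over the base chain $\sigma^{\ve,\ve}_V\subset V(\R^{>0})$ (modified near the boundary to terminate on $\Delta_\ve$).

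First, I would cover $V$ by finitely many torus-invariant affine open sets as in Proposition~\ref{prop1.8} and work in local eigencoordinates $x_1,\dotsc,x_p,y_1,\dotsc,y_{n-1-p}$, where $x_i=0$ is a local equation for $E_{\Gamma_i}$ and the $y_j$ are coordinates on $V$. Using Lemma~\ref{lem5.1} (and, when some $\Gamma_i$ has $\textrm{sdd}>0$, the replacement $\omega_\Gamma\rightsquigarrow\omega_{II}$ constructed in section~\ref{logpolereduct}) one has
\eq{}{\omega_\Gamma = \sum_{I\subset\{1,\dotsc,p\}} \Big(\bigwedge_{i\in I}\frac{dx_i}{x_i}\Big)\wedge\Big(\bigwedge_{i\notin I} dx_i\Big)\wedge \alpha_I(x,y),
}
with each $\alpha_I$ holomorphic in the $x_i$'s and meromorphic in $y$, whose singularities in $y$ lie along the coordinate/exceptional divisors of $V$ and, by Proposition~\ref{prop2.2a} combined with Corollary~\ref{cor3.3}, are disjoint from the real base $\sigma^{\ve,\ve}_V$.

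Second, I would parametrize the circles by $x_i = \ve e^{i\theta_i}$ and carry out the $(S^1)^p$-integration. On $|x_i|=\ve$ one has $\tfrac{dx_i}{x_i}=i\,d\theta_i$ whereas $dx_i = i\ve e^{i\theta_i}d\theta_i$, so the summand with index set $I$ contributes
\eq{}{(2\pi i)^{|I|}\ve^{p-|I|}\!\!\int_{\sigma^{\ve,\ve}_V}\!\!\beta_I(y)\,dy,
}
where $\beta_I$ is obtained from $\alpha_I$ by setting $x_i=0$ for $i\in I$ and averaging over the remaining circles. All terms with $I\subsetneq\{1,\dotsc,p\}$ carry an explicit positive power of $\ve$ and are uniformly bounded in $y$, so they are $O(\ve)$; the only term to control is the top residue $I=\{1,\dotsc,p\}$, which is a meromorphic form on $V$ with at worst logarithmic poles along the toric boundary of $V$ (the strict transforms of $\Delta$ and of the remaining exceptional divisors).

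Third, I would estimate $\int_{\sigma^{\ve,\ve}_V}\beta\,dy$. Since $\sigma^{\ve,\ve}_V$ keeps a distance $\ve$ from every coordinate/exceptional divisor of $V$, and $\beta$ has log poles only along these divisors, an iterated one-variable bound
\eq{}{\int_\ve^{1} x^{-1}(\log x)^j\,dx \;=\; O(|\log\ve|^{j+1})
}
together with a partition of unity subordinate to the toric cover yields $\int_{\sigma^{\ve,\ve}_V}\beta\,dy = O(|\log\ve|^N)$ for some $N$, and combining with the residue extraction proves the claim. The special case $V=P(\Gamma)$ (where there are no circle factors and $\tau^\ve_{P(\Gamma)}=\sigma_\ve$) is just the third step applied directly to $\omega_\Gamma$.

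\textbf{Main obstacle.} The serious point is the claim, used in step two, that the residue form $\beta_I$ is itself at worst logarithmic along the remaining boundary divisors of $V$. This requires knowing that the restriction of $\omega_\Gamma$ to $V$ (or more precisely its iterated Poincar\'e residues along the $E_{\Gamma_i}$) is built from the graph forms of the successive quotients $\Gamma_p,\,\Gamma_{p-1}/\!/\Gamma_p,\,\dotsc,\,\Gamma/\!/\Gamma_1$, and that each of these subquotients inherits $\textrm{sdd}\le 0$ (or has been regularized by $\omega_{II}$) so that Lemma~\ref{lem5.1} applies inductively. Proposition~\ref{prop2.2a} identifies the geometric picture, and one then runs an induction on $p$ to propagate the log-pole property from $\omega_\Gamma$ to every iterated residue, after which the estimate above closes the argument.
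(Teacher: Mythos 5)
Your proposal is correct and follows essentially the same route as the paper's proof: extract the iterated residue over the $(S^1)^p$ factors (with non-top terms contributing $O(\ve)$ and the residue vanishing along any $E_{\Gamma_i}$ with $\textrm{sdd}(\Gamma_i)<0$ by Lemma \ref{lem5.1}), identify the top residue with the product of graph forms $\omega_{\Gamma_p}\wedge\cdots\wedge\omega_{\Gamma/\!/\Gamma_1}$ of the log-divergent subquotients, and then bound the resulting log-pole integral over the real chain kept a distance $\ve$ from the boundary divisors by $C|\log\ve|^k$. The extra detail you supply (explicit circle parametrization, partition of unity, induction on the flag length) is a legitimate filling-in of steps the paper leaves terse rather than a different argument.
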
 \begin{proof}We first consider the integral for the chain $\sigma_\ve = \tau_{P(\Gamma)}^\ve$. Locally on the blowup $P(\Gamma)$ the integrand will look like \eqref{5.4} but there may be more than one log form; i.e.  $\widetilde \omega da_{p_1}/a_{p_1}\wedge\cdots\wedge da_{p_k}/a_{p_k}$. An easy estimate for such an integral over a compact chain satisfying $a_j \ge \ve$ gives $C(|\log\ve|)^k$.
The integrals over $\tau_V^\ve,\ V\subsetneq P(\Gamma)$ involve first integrating over one or more circles. Locally the chain is an $(S^1)^p$-bundle over an intersection $x_1=\cdots=x_p=0$ in local coordinates. We may compute the integral by first taking residues. $V$ will be the closure of a torus orbit in $P(\Gamma)$ associated to a flag $\Gamma_p \subsetneq\cdots\subsetneq\Gamma_1\subsetneq \Gamma$ (proposition \ref{prop1.6}). We may assume $x_i$ is a local equation for the exceptional divisor in $P(\Gamma)$ associated to $\Gamma_i\subset \Gamma$. By lemma \ref{lem5.1}, our integrand will have a pole on $x_i=0$ if and only if $\textrm{sdd}(\Gamma_i)=0$. (Note that the integrand has no singularities on $\tau_V^\ve$, so we may integrate in any order.) The situation is confusing because $\textrm{sdd}(\Gamma_i)<0\Rightarrow \textrm{sdd}(\Gamma/\!/\Gamma_i)>0$ so one might expect non-log growth in this case. The problem does not arise, because the residue will vanish.
Assuming $\textrm{sdd}(\Gamma_i)=0,\ \forall i$, the residue integral is
\eq{5.7c}{\int_{\prod_j \tau_{P(\Gamma_j/\!/\Gamma_{j+1})}^\ve}\omega_{\Gamma_p}\wedge\cdots \wedge \omega_{\Gamma/\!/\Gamma_1}. } Since $\textrm{sdd}(\Gamma_i/\!/\Gamma_{i+1})=0$, we may simply write \eqref{5.7c} as a product of integrals and argue as above.
\end{proof}
We want now to apply the argument sketched in the introduction to our situation. There is one mathematical point which must be dealt with first. We want to consider $\int_{\sigma_t}\omega_\Gamma$ as a function of $t$. Here we must be a bit careful. For $t=\ve e^{i\theta}$ and $|\theta|<<1$ we are ok, but as $\theta$ grows, our chain may meet $X_\Gamma$. Topologically, we have (proposition \ref{prop6.3}) the chains $\tilde c^{\eta,\ve,\theta}$ which miss $X_\Gamma$ and which represent the correct homology class in $H_*(\P(\Gamma)-X_\Gamma,\Delta_t-X_\Gamma\cap \Delta_t)$, but one must show our integral depends only on the class in homology relative to $\Delta_t$, i.e. $\omega_\Gamma$ integrates to zero over any chain on $\Delta_t-X_\Gamma\cap \Delta_t$. Intuitively, this is because $\omega_\Gamma|\Delta_t=0$, but, because $\Delta_t$ has singularities it is best to be more precise. Quite generally, assume $U$ is a smooth variety of dimension $r$, and $D\subset U$ is a normal crossings divisor (i.e. for any point $u\in U$ there exist local coordinates $x_1,\dotsc,x_r$ near $u$, and $p\le r$ such that $D:x_1x_2\cdots x_p=0$ near $u$). One has sheaves \eq{10.8}{\Omega^q_U(\log D)(-D) \subset \Omega^q_U \subset \Omega^q_U(\log D)
}
where $\Omega^q_U$ is the sheaf of algebraic (or complex analytic; in fact, either will work here) $q$-forms on $X$, and  $\Omega^q_U(\log D)$ is obtained by adjoining locally wedges of differential forms $dx_i/x_i,\ 1\le i\le p$. Locally, $\Omega^q_U(\log D)(-D) := x_1x_2\cdots x_p\Omega^q_U(\log D)$. All three sheaves are easily seen to be stable under exterior differential (for varying $q$). The resulting complexes calculate the de Rham cohomology for $(U,D), U, (U-D)$ respectively, \cite{D}. Note that in the top degree $r=\dim U$ we have
\eq{10.9}{\Omega^r_U(\log D)(-D) =
\sO_U\cdot x_1x_2\cdots x_p\frac{dx_1\wedge\cdots\wedge dx_p}{x_1x_2\cdots x_p}dx_{p+1}\wedge\cdots \wedge dx_r = \Omega^r_U.
}
It follows that we get a maps \eq{10.10}{\Omega^r_U[-r] \to \Omega^*_U(\log D)(-D);\quad \Gamma(U,\Omega^r_U) \to H^r_{DR}(U,D).
}
In particular, taking $U = \P(\Gamma)-X_\Gamma$, we see that integrals $\int_{\text{ch.rel.} \Delta_t}\omega_\Gamma$ are well-defined.
\begin{thm}We suppose given a graph $\Gamma$ such that all core subgraphs $\Gamma'\subseteq \Gamma$ have superficial divergence $\textrm{sdd}(\Gamma')\le 0$ for a given physical theory. Let $\omega_\Gamma$ be the form associated to the given theory. Let $N$ be the upper-triangular matrix of size $K\times K$ described in the previous section, where $K$ is the number of chains of core subgraphs $$\Gamma_p \subsetneq \cdots \subsetneq \Gamma. $$ Then the lefthand side of the expression below is single-valued and analytic for $t$ in a disk about $0$  so the limit \eq{9.11}{\lim_{|t|\to 0} \exp(-N\frac{\log t}{2\pi i})\begin{pmatrix} \int_{\tau^t_{P(\Gamma)}}\omega_\Gamma \\
\vdots \\
 \int_{\tau^t_{V}}\omega_\Gamma \\ \vdots \end{pmatrix} = \begin{pmatrix} a_1 \\ \vdots \\
 a_k \end{pmatrix}
}
exists. \end{thm}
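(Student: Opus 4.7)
The plan is to assemble the three ingredients that have been developed in the preceding sections: (i) the monodromy computation on the chain side, (ii) the growth bound on the integrals, and (iii) the classical removable-singularity theorem for holomorphic functions of moderate growth.

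First, I would observe that each entry of the column vector $\bigl(\int_{\tau^t_V}\omega_\Gamma\bigr)_V$ is a well-defined multi-valued analytic function of $t\in D^*$. Well-definedness uses that $\omega_\Gamma$ represents a class in $H^{n-1}(\P(\Gamma)-X_\Gamma,\Delta_t-X_\Gamma\cap\Delta_t)$ via \eqref{10.8}--\eqref{10.10}, so the integral depends only on the class of $\tau^t_V$ modulo chains in $\Delta_t-X_\Gamma\cap\Delta_t$. Analyticity on $D^*$ is immediate from the fact that the chains $\tau^t_V$ (after the deformation of Proposition \ref{prop6.3}) vary smoothly in $t$ while staying away from the polar locus $X_\Gamma$ (Corollary \ref{cor3.3}).

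Second, I would invoke Proposition \ref{prop6.3} together with the calculation of Section \ref{secn} to identify the monodromy. The proposition gives $m(\sigma_\ve)=\sum_V(-1)^{|V|}\tau^\ve_V$, and the analogous formula for $m(\tau^\ve_I)$ in \eqref{6.2}. These relations amount to multiplication on the column vector by $\exp(N)$ for the nilpotent upper-triangular matrix $N$ constructed in Section \ref{secn} (modulo the sign convention noted in that section). Consequently the modified vector
\[
F(t):=\exp\!\Bigl(-N\tfrac{\log t}{2\pi i}\Bigr)\begin{pmatrix}\int_{\tau^t_{P(\Gamma)}}\omega_\Gamma\\ \vdots\\ \int_{\tau^t_V}\omega_\Gamma\\ \vdots\end{pmatrix}
\]
is single-valued and holomorphic on $D^*$: the multi-valuedness of the integrals, encoded by $m\mapsto \exp(N)$, is exactly cancelled by the multi-valuedness of $\log t$ in the prefactor.

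Third, I would control the growth of $F(t)$ as $t\to 0$. By Lemma \ref{lem5.2}, each entry of the column vector grows at worst like $|\log|t||^{k}$ for some integer $k\ge 0$. Since $N$ is nilpotent, $\exp\!\bigl(-N\log t/2\pi i\bigr)$ is itself a polynomial in $\log t$ of degree at most the index of nilpotency of $N$. Multiplying these together shows $|F(t)|=O(|\log|t||^{K'})$ for some $K'$. Finally I would apply the standard removable-singularity result: a single-valued holomorphic function on $D^*$ with at most polynomial-logarithmic growth at the origin extends holomorphically across $t=0$ (the Riemann extension theorem applied to $t\cdot F(t)$, or equivalently the fact that such a function has vanishing residue and hence a convergent Laurent expansion with no negative terms). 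Once this extension is established, evaluating at $t=0$ yields the limit in \eqref{9.11}.

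The step I expect to be the main technical obstacle is not any of the above individually---each piece is in place in the excerpt---but rather the careful bookkeeping needed to identify the topological matrix computed in Section \ref{secmono} with the combinatorial matrix $N$ of Section \ref{secn} in a sign-consistent way. The cancellation between the $(-1)^{|V|}$ orientations in $m(\tau^\ve_V)=\sum_{W\supset V}(-1)^{|W|-|V|}\tau^\ve_W$ and the sign in $\exp(-N\log t/2\pi i)$ must be verified term by term, using the identification of the vector space spanned by $\sigma_1$ and the $\tau_I$ with the quotient ring $R$ in \eqref{6.3}, where $m$ acts by multiplication by $\prod_i(1-x_i)$ and hence $N=-\sum_i x_i\cdot$. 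After this identification the conclusion is a direct consequence of the three ingredients above.
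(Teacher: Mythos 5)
Your proposal is correct and follows essentially the same route as the paper: the paper's own proof simply cites the outline in the Math Introduction (single-valuedness from the monodromy cancellation via the nilpotent $N$ of section \ref{secn}, at-worst-logarithmic growth from lemma \ref{lem5.2}, and the standard extension theorem for single-valued holomorphic functions of moderate growth on $D^*$). Your additional remarks on well-definedness via \eqref{10.8}--\eqref{10.10} and on the sign bookkeeping between the topological and combinatorial descriptions of $N$ are exactly the points the paper addresses in the surrounding text rather than inside the proof itself.
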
 \begin{proof} The proof proceeds as outlined in section \ref{ssecmi}. $N$ is chosen to be nilpotent and such that the lefthand side has no monodromy. The lemma \ref{lem5.2} assures that terms have at worst log growth. Since they are single-valued on $D^*$, they extend to the origin. \end{proof}

\begin{rem}
It is time to compare what we are calculating here with what a physicist computes according to Thm.(\ref{cructhm}).
The transition is understood upon noticing that in our constructions of chains, we pick up the residue from each exceptional divisor by computing the
monodromy.
In physics we iterate those residues as iterated integrals. Below the top entry $a_1$ this gives different rational weights to them in according with the scattering type formula of \cite{RHII}. We discuss this below in section (\ref{seclmhsvsren}).
\end{rem}

\begin{defn}\label{topentry} With notation as above, the renormalized value $\int_\sigma \omega_\Gamma$ is the top entry in the column vector $\exp(+N\frac{\log t}{2\pi i})\begin{pmatrix}a_1 \\
\vdots \end{pmatrix}$. \end{defn}

\begin{rmk} Note that the terms $ \int_{\tau^t_{V}}\omega_\Gamma$ on the lefthand side of \eqref{9.11} may be calculated recursively.  As in lemma \ref{lem5.2} above, $V$ corresponds to a flag of core subgraphs of $\Gamma$. As in formula \eqref{5.7c}, the integral dies unless all the $\Gamma_i/\!/\Gamma_{i+1}$ are log divergent. In this case, one gets
\eq{9.12}{(2\pi i)^{p-1}\prod \int_{\tau^t_{P(\Gamma_i/\!/\Gamma_{i+1})}}\omega_{\Gamma_i/\!/\Gamma_{i+1}}.
}

If, in addition, the subquotients $\Gamma_i/\!/\Gamma_{i+1}$ are {\it primitive}, i.e. they are log divergent but have no divergent subgraphs, then the integrals in \eqref{9.12} will converge as $|t| \to 0$. Upto a term which is $O(t)$ and can be ignored in the limit, they may be replaced  by their limits as $t\to 0$. These entries in \eqref{9.11} may then be taken to be constant.
\end{rmk}
\begin{ex}Consider the dunce's cap fig.(\ref{figb}). It has $3$ core subgraphs, but only the $2$-edged graph $\gamma$ with edges $1, 2$ is log divergent. Thus, the column vector in \eqref{9.11} has $4$ entries, but only $2$ are non-zero. Dropping unnecessary rows and columns, the matrix $N = \begin{pmatrix} 0 & -1 \\ 0 & 0\end{pmatrix}$. The constant entry in the column vector is
\eq{}{2\pi i\int_{\sigma_\gamma} \frac{\Omega_1}{\psi_\gamma^2}\int_{\sigma_{\Gamma/\!/\gamma}}\frac{\Omega_1}{\psi_{\Gamma/\!/\gamma}^2} = 2\pi i\Big(\int_0^\infty\frac{da}{(a+1)^2}\Big)^2 = 2\pi i.
}
\end{ex}
    It remains to connect $\int_\sigma \omega_\Gamma$ to the physicists computation.
\subsection{lMHS vs $\Phi_R$}\label{seclmhsvsren}
Let us understand how the period matrix $p^T=(a_1,a_2,\cdots,a_r)$ which we have constructed connects to the coefficients $c_j$                                                                                                                                            \be \Phi_{\textrm{MOM}}(\Gamma)(q^2/\mu^2)=\sum_{j=1}^r c_j(\Gamma) \ln^j q^2/\mu^2.\ee
Going to variables \bea
& & t_\Gamma,a_1,\ldots, a_{|\Gamma^{[1]}|},\sum a_i=1,\nonumber\\ & & t_1, b_1,\ldots, b_{|{\Gamma_1}^{[1]}|},\sum b_i=1,\nonumber\\
 & & \ldots,\nonumber\\ & &  t_p, z_1,\ldots, b_{|{\Gamma_p}^{[1]}|},\sum z_i=1,\nonumber \eea for a chain of core graphs $\Gamma_p\subsetneq\cdots\subsetneq\Gamma_1\subsetneq\Gamma$ gives, for each such  flag and constant lower boundaries $\epsilon$, an iterated integral over
\be \int_\epsilon^\infty dt\int_{\epsilon/t}^\infty dt_1\cdots\int_{\epsilon/t/t_1\cdots/t_{p-1}}^\infty dt_p.\ee
As the integral has a logarithmic pole along any $t_i$ integration, the difference between integrating against the chains, which only collect the coefficients of $\ln \epsilon$ for each such integral, and the iteration above is a factorial for each flag. A summation over all flags established the desired relation using tree factorials \cite{Chen}:\\
As the entries in the vector $(a_1,\cdots)^T$ are in one-to-one correspondence with forests of $\Gamma$, identifying $a_1$ with the empty forest, we can write the top-entry defined in Defn.(\ref{topentry}) as
\be \sum_{\textrm{[for]}}\left(\frac{\ln t}{2\pi i}\right)^{|\textrm{[for]}|}a_{\textrm{[for]}},\label{topentryc}\ee
where
\be a_{\textrm{[for]}}=p_1(\Gamma/\!/\textrm{[for]})\prod_j p_1(\gamma_j),\ee
using the notation of Eqs.(\ref{fornot},\ref{pone}).
Then,
\be \partial_{\ln t}\Phi_{\textrm{MOM}}(\Gamma)(t)=\sum_{\textrm{[for]}}\textrm{aug}(\Gamma)\left(\frac{\ln t}{\textrm{[for]}^*!}\right)^{|\textrm{[for]}|}a_{\textrm{[for]}}.\ee
Here, $\textrm{[for]}^*!$ is a forest factorial defined as follows. Any forest $\textrm{[for]}$ defines a tree $T$ and a collection of edges $C$ such that $P^C(T)$ and $R^C(T)$ denote the core sub- and co-graphs in question. The complement set $T^{[1]}/C$ defines a forest $\cup_i t_i$ say.
We set $\textrm{[for]}^*!=\prod_i t_i!$, for standard tree factorials $t_i!$ \cite{Chen}.
For example, comparing the two graphs
\be \Gamma_1=\grapha,\;\Gamma_2=\graphb,\ee
we have the two vectors
\be\left(\begin{array}{l}
  p_1\left(\grapha\right) \\
  p_1\left(\graphc\right)p_1\left(\graphd\right) \\
  p_1\left(\graphd\right)p_1\left(\graphc\right) \\
  p_1\left(\graphc\right)p_1\left(\graphc\right)p_1\left(\graphc\right)
\end{array}\right)\ee
and
\be\left(\begin{array}{l}
  p_1\left(\graphb\right) \\
  p_1\left(\graphc\right)p_1\left(\graphd\right) \\
  p_1\left(\graphc\right)p_1\left(\graphd\right) \\
  p_1\left(\graphc\right)p_1\left(\graphc\right)p_1\left(\graphc\right)
\end{array}\right).\ee
Hence, we find the same $\ln^2 t$ term upon computing Eq.(\ref{topentryc}) for the monodromy.

On the other hand, the tree factorials deliver 1/2 for that term in the case of $\Gamma_1$, and $1$ for $\Gamma_2$, while we get $2$ in both cases for the
term $\sim \ln t$. Indeed, the flag \be \graphc\subsetneq\graphd\subsetneq\grapha\ee corresponds to a tree with two edges. The term $\sim \ln^2 t$
comes from the cut $C$ which corresponds to both of these edges.  The complement is the empty cut, whose tree factorial is $3!$ simply.
As we took a derivative with respect to $\ln t$, we get a factor of $\textrm{aug}(\Gamma)=3$, which leaves us with a factor $3/3!=1/2$.

For $\graphb$, we note that the tree factorial is $3$ instead of $3!$ (we have two flags instead of one), which leaves us with a factor $1$.

\subsection{Limiting Mixed Hodge Structures}\label{seclmhs} In this final paragraph at the suggestion of the referee we outline the structure of a limiting mixed Hodge structure associated to a variation of mixed Hodge structure and how it might apply to the Feynman graph amplitudes.

Let $\Gamma$ be a log divergent graph with $n$ loops and $2n$ edges. The graph hypersurface $X_\Gamma:\psi_\Gamma=0$ is a hypersurface in $\P^{2n-1}$, and the Feynman integrand represents a cohomology class
\eq{9.23}{\Big [\frac{\Omega}{\psi_\Gamma^2}\Big ]\in H^{2n-1}(\P^{2n-1}-X_\Gamma,\C) = H^{2n-1}(\P^{2n-1}-X_\Gamma,\Q)\otimes\C = H_\C = H_\Q\otimes \C.
}
The cohomology group has a {\it mixed Hodge structure}, which means there are defined two filtrations: \newline\noindent (i) The {\it weight} filtration $W_*H_\Q$ which is defined over $\Q$ and increasing. It looks like
\eq{}{0 \subset W_{2n}H_\Q \subset W_{2n+1}H_\Q \subset \cdots \subset W_{4n-2}H_\Q = H_\Q.
}
Blowing up on $X_\Gamma$ so it becomes a normal crossings divisor $D_*$, there is a spectral sequence relating the graded pieces $W_{2n-1+i}/W_{2n-1+i-1}$ to the Tate twist by $-i$ of the cohomology in degree $2n-1-i$ of the codimension $i-1$ strata of $D$. (So, for example, $gr^W_{2n}$ is related to $\oplus_j H^{2n-2}(D_j)(-1)$ where $D=\bigcup D_j$.) \newline\noindent (ii) The {\it Hodge} filtration $F^*H_\C$ which is defined over $\C$ and decreasing:
\eq{}{(0) \subset F^{2n-1} \subset F^{2n-2} \subset \cdots \subset F^{1} = H_\C.
}

The filtrations are subject to the compatibility condition that the filtration
\eq{}{F^p(gr^W_q\otimes \C) := F^pH_\C\cap W_q\otimes \C\Big/F^pH_\C\cap W_{q-1}\otimes \C
}
is the Hodge filtration of a pure Hodge structure of weight $q$. (This is simply the condition that $F^*gr^W_q\otimes \C$ be $q$-opposite to its complex conjugate, i.e. that $gr^W_q\otimes \C = F^p\oplus \overline F^{q-p+1}$ for any $p$.)

Let us say that a class $\omega \in H_\C$ has Hodge level $p$ if $\omega \in F^pH_\C- F^{p+1}H_\C$. An important problem is to determine the Hodge level of the Feynman form \eqref{9.23}. One may speculate that the Hodge level of the Feynman form equals the {\it transcendental weight} of the period. (The transcendental weight of a multizeta number $\zeta(n_1,\dotsc,n_p)$ is the sum of the $n_i$.) For example, in \cite{BK} one finds many examples of Feynman amplitudes of the form $*\zeta(N)$ where * is rational. In all known cases $N=2n-3$. To estimate the Hodge level, one may use the pole order filtration \cite{D}, 3.12. One blows up on $X_\Gamma \subset \P^{2n-1}$ to replace $X$ by a normal crossings divisor $D= \bigcup_{i=1}^r D_i$. Let  $\omega$ on $\P^{2n-1}-X_\Gamma$ be a $(2n-1)$-form and let $I \subset \{1,\dotsc,r\}$ be the indices $i$ such that $\omega$ has a pole along $D_i$. Write $p_i+1$ for the order of this pole, with $p_i \ge 0$.  Then the Hodge level of $\omega$ is $\ge 2n-1-\sum p_i$.  (For a more precise statement, see op. cit.) For example, if $X_\Gamma$ is smooth (this happens only when $n=1$) one would get $p_1=1$ so the Hodge level would be $\ge 2n-2$.
\begin{prop} For the Feynman form, at least $2$ of the $p_i\ge 1$. The pole order calculation thus suggests the Hodge level of the Feynman form above is $\le 2n-3$.
\end{prop}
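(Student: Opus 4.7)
The plan is to exhibit two divisors $D_1, D_2$ in a log resolution $\pi\colon W \to \P^{2n-1}$ of $X_\Gamma$ along which the pulled-back Feynman form $\pi^*\omega_\Gamma$, with $\omega_\Gamma = \Omega_{2n-1}/\psi_\Gamma^2$, has pole of order at least $2$. The first is immediate: since $\psi_\Gamma$ vanishes simply at the generic point of $X_\Gamma$ and $\Omega_{2n-1}$ is the regular Euler $(2n-1)$-form, $\pi^*\omega_\Gamma$ has pole of order exactly $2$ along the strict transform $D_1 := Y_\Gamma$, so $p_{D_1} = 1$.

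For the second divisor I would exploit that $X_\Gamma$ is singular whenever $n \geq 2$. By the factorization $\psi_\Gamma = \psi_{\Gamma'}\psi_{\Gamma/\!/\Gamma'} + \psi_{\Gamma,\Gamma'}$ of Lemma \ref{lem4}, applied to any proper core subgraph $\Gamma' \subsetneq \Gamma$ with $|\Gamma'| \geq 2$, the polynomial $\psi_\Gamma$ has multiplicity $|\Gamma'| \geq 2$ along $L_{\Gamma'}$, so $L_{\Gamma'} \subseteq \Sing(X_\Gamma)$. Any log resolution $\pi$ must therefore include blowups centered inside $\Sing(X_\Gamma)$. The local calculation from the proof of Lemma \ref{lem5.1} extends straightforwardly: if $S$ is a smooth center of codimension $c$ in $\P^{2n-1}$ along which $\psi_\Gamma$ has multiplicity $m$, then the exceptional divisor $E$ over $S$ carries pole of order $2m - c + 1$ for $\pi^*\omega_\Gamma$. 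For $E$ to contribute $p_E \geq 1$ we need $2m - c + 1 \geq 2$, i.e.\ $c \leq 2m - 1$.

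The main obstacle is that under the standing hypothesis $\mathrm{sdd}(\Gamma') \leq 0$ for every subgraph, the natural candidates $L_{\Gamma'}$ just barely fail this inequality: one has $m = |\Gamma'|$ and $c = |E(\Gamma')| \geq 2|\Gamma'| = 2m$, violating $c \leq 2m - 1$ by exactly one and yielding only a log pole along the associated exceptional divisor in $P(\Gamma)$ (this is precisely the content of Lemma \ref{lem5.1}). The second divisor $D_2$ must therefore come from resolving non-toric singularities of the strict transform $Y_\Gamma \subset P(\Gamma)$.

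The crux of the proof is to locate a non-toric singular stratum of $X_\Gamma$ whose codimension and multiplicity satisfy $c \leq 2m - 1$. I would attack this via a direct analysis of the Jacobian ideal of $\psi_\Gamma$ along components of $\Sing(X_\Gamma)$ not of the form $L_{\Gamma'}$, or by reducing to families where the singular locus has been explicitly described (e.g.\ the wheels with spokes, for which the relevant non-linear singular components are known in the literature). Verifying the inequality $c \leq 2m - 1$ in full generality is the technical heart of the argument, and the rest of the estimate on the Hodge level follows by plugging into the formula Hodge level $\geq 2n - 1 - \sum p_i$ with $\sum p_i \geq p_{D_1} + p_{D_2} \geq 2$.
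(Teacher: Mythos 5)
Your setup is sound — the first divisor (the strict transform, $p=1$), the pole-order formula $2m-c+1$ for the exceptional divisor over a smooth center of codimension $c$ along which $\psi_\Gamma$ has multiplicity $m$, and the observation that the coordinate linear spaces $L_{\Gamma'}$ miss the inequality $c\le 2m-1$ by exactly one — all agree with the paper. But the proposal stops at the decisive step: you state that the technical heart is to \emph{locate} a singular stratum with $c\le 2m-1$ and list strategies (Jacobian ideal, known examples like wheels with spokes) without carrying any of them out. As written, the proposition is not proved; the existence of the second divisor with $p_i\ge 1$ is exactly what is at stake.

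The paper closes this gap with a uniform construction you do not mention: the graph polynomial is the determinant of a symmetric $n\times n$ matrix with entries linear in the edge variables, so $X_\Gamma=\sX\cap\P^{2n-1}$ where $\sX\subset\P^{d-1}$, $d=n(n+1)/2$, is the generic symmetric determinant. The corank-$2$ stratum $\sX_2$ has codimension exactly $3$ in $\P^{d-1}$ and multiplicity $\ge 2$ on $\sX$; intersecting with the linear subspace $\P^{2n-1}$ can only decrease the codimension, so $\sX_2\cap\P^{2n-1}$ (nonempty for $n\ge 2$) is a locus of multiplicity $m\ge 2$ and codimension $c\le 3=2m-1$. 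Blowing it up first gives, by exactly your formula, a pole of order $\ge 2m-c+1\ge 2$ on the exceptional divisor, while the strict transform of $X_\Gamma$ retains its double pole. Note also that the relevant stratum is \emph{not} required to avoid the toric/linear loci as you suggest — the point is rather that it is not of the form $L_{\Gamma'}$ for a core subgraph in the log-divergent range, and the determinantal description supplies it without any case analysis of $\Sing(X_\Gamma)$.
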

\begin{proof}The situation for $n=1$ is trivial, so we assume $n\ge 2$. The space of symmetric $n\times n$-matrices has dimension $d:=\frac{n(n+1)}{2}$. Let $\P^{d-1}$ be viewed as the projectivized space of such matrices, so a point corresponds to a matrix upto scale. The determinant of the universal matrix defines a hypersurface $\sX \subset \P^{d-1}$. More generally, we define $\sX_p \subset \P^{d-1}$ to be the locus where the rank of the corresponding symmetric matrix is $\le n-p$. We have $\sX = \sX_1$, and it is easy to see that $\sX_p$ has codimension $\frac{p(p+1)}{2}$ in $\P^{d-1}$. Points in $\sX_p$ will have multiplicity $\ge p$ on $\sX$.

There is an inclusion $\rho: \P^{2n-1} \inj \P^{d-1}$ such that $X_\Gamma = \sX\cap \P^{2n-1}$. Points of $\sX_2\cap \P^{2n-1}$ will have multiplicity $\ge 2$ in $X_\Gamma$ and codimension $\le 3$ in $\P^{2n-1}$ . This means that in the local ring on $\P^{2n-1}$ at a general point of $\sX_2\cap \P^{2n-1}$, there will be functions $x_1, x_2, x_3$ which form part of a system of coordinates on $\P^{2n-1}$ such that a local defining equation $\psi$ for $X_\Gamma$ lies in $(x_1,x_2,x_3)^2$. We may construct our normal crossings divisor $D$ as above by first blowing up $\sX_2\cap \P^{2n-1}$ in   $\P^{2n-1}$. Subsequent blowups will not affect the pole order, which may be computed at the generic point of the exceptional divisor $E$. We have
\eq{}{\frac{dx_1dx_2dx_3\cdots}{\psi^2} = \frac{x_1^2 dx_1d(x_2/x_1)d(x_3/x_1)\cdots}{x_1^4\phi(x_1,x_2/x_1,x_3/x_1,\ldots)}.
}
It follows that the Feynman form has a double pole on $E$ as well as a double pole on the strict transform of $X_\Gamma$ in the blowup. \end{proof}
\begin{rmk} (i) To give a complete proof that the Hodge level is $\le 2n-3$ one would have to show the double order pole was not killed by an exact form. \newline\noindent
(ii) It would be exciting to be able to say something about the weight filtration on $H^{2n-1}(\P^{2n-1}-X_\Gamma)$. \newline\noindent
(iii) The data in \cite{BK} suggests that double zetas which occur will have transcendental weight $2n-4$. For example, the bipartite graph $\Gamma$ consisting of the $12$ edges joining sets of $3$ and $4$ vertices has Feynman amplitude a rational multiple of $\zeta(3,5)$. In general, a calculation as above shows $\sX_3\cap \P^{2n-1}$ has multiplicity $\ge 3$ and codimension $\le 6$. If one could show that for the bipartite $\Gamma$ that this codimension drops to $5$, then the same argument as above would yield $3$ poles with $p_i\ge 1$, suggesting a Hodge level $2n-4$.
\end{rmk}

Next we should consider the mixed Hodge structure necessary for the relative period calculation. Recall \eqref{3.3b} we work in a toric blowup $P=P(\Gamma) \to \P^{2n-1}$. Let $B \subset P$ be the complement of the big toric orbit in $P$. It is the  union of the strict transform of the coordinate divisor $\Delta \subset \P^{2n-1}$ and the exceptional divisors. Let $Y \subset P$ be the strict transform of $X_\Gamma$. The relevant cohomology group is the middle group in the sequence
\eq{9.28}{H^{2n-2}(B-Y\cap B; \Q) \to H^{2n-1}(P-Y,B-Y\cap B; \Q) \to H^{2n-1}(P-Y, \Q).
}
If all the subgraphs $\Gamma' \subsetneq \Gamma$ have $\text{sdd}(\Gamma')<0$, then renormalization is unnecessary. The Feynman amplitude as we have defined it is simply a period of the mixed Hodge structure \eqref{9.28}. The weight filtration for the group on the left involves the cohomology of the strata of the normal crossings divisor $B$. For example, we have an exact sequence
\eq{}{H^0(B_{(1)}-Y\cap B_{(1)},\Q) \to H^0(B_{(0)}-Y\cap B_{(0)},\Q) \to W_0H^{2n-2}(B-Y\cap B_{(0)},\Q).
}
Here we write $B_{(i)}$ for the disjoint union of the components of the strata of dimension $i$.
We know from corollary \ref{cor3.3} that $Y\cap B_{(0)}=\emptyset$, and a bit of thought about the combinatorics of $B_{(i)}, i=0,1$ reveals that $W_0H^{2n-2}(B-Y\cap B,\Q) = \Q(0)$. This gives a map of the trivial Hodge structure $\Q(0)$ to our period motive:
\eq{}{\Q(0) \to H^{2n-1}(P-Y,B-Y\cap B; \Q).
}
When the period is a rational multiple of $\zeta(2n-3)$ we expect that there is a map of Hodge structures $\Q(3-2n) \to H^{2n-1}(\P^{2n-1}-X_\Gamma,\Q)$
and that the extension of $\Q(3-2n)$ by $\Q(0)$ associated to $\zeta(2n-3)$ is a subquotient of \eqref{9.28}.

Finally the main focus of this paper has been the renormalization case when one or more proper subgraphs of $\Gamma$ has $\text{sdd}=0$. In this case, the Feynman form will have a pole along one or more divisor in $B$, so \eqref{9.28} is no longer the relevant Hodge structure. In this case, we work with the limiting mixed Hodge structure $H_{lim}$ associated to $H_t:=H^{2n-1}(\P^{2n-1}-X_\Gamma,\Delta_t-\Delta_t\cap X_\Gamma)$. Let $D$ be a small disk around $t=0$, and let $D^*=D-\{0\}$. Then $H_{D^*} = \bigcup_{t\neq 0} H_t$ becomes a local system on $D^*$. Let $\sH_{D^*} = H_{D^*} \otimes \sO_{D^*}$ be the corresponding analytic bundle. If we untwist by the monodromy, we get a trivial local system ($h=\dim H_t$)
\eq{9.31}{\C^h_{D^*} \cong \exp(-N\log t)H_{D^*} \subset \sH_{D^*}.
}
Since this local system is trivial, it extends (trivially) across $t=0$. It also has a canonical $\Q$-structure defined from the $\Q$-structure at any point $t_0 \neq 0$. The analytic bundle $\sH_{D^*}$ has a Hodge filtration $F^*\sH_{D^*}$ coming from the Hodge filtrations on the $H_t$. (Note the Hodge filtration is not horizontal, so there is no Hodge filtration on the local system $H_{D^*}$.) From \eqref{9.31} we get a canonical trivialization of the analytic bundle $\sH_{D^*} \cong \sO_{D^*}^h$ and hence a canonical extension across $t=0$. One can show \cite{CK}, 2.1(i) that the Hodge filtration extends across $t=0$ as well.

Thus, on the fibre $H_0$ we have a Hodge filtration and a $\Q$-structure. If you think in terms of periods, i.e. using the pairing $H_{0,\Q}^\vee \times H_0 \to \C$, the above description of the Hodge filtration as a limit across $t=0$ coincides with the computation \eqref{9.11}. What we have not given is the weight filtration. This monodromy weight or limiting weight filtration is more subtle, essentially being determined by the endomorphism $N$ together with the given weight filtrations on the fibres $H_t$. We hope that the computation of $N$ in this paper will help to understand this structure, but at the moment the weight  structures on the $H_t$ are not well enough understood to say more. For the general theory, the interested reader is referred to \cite{CK} and the references cited there.

\newpage \bibliographystyle{plain} \renewcommand\refname{References}

\end{document}